\theoremstyle{plain}
\newtheorem{thm}{Theorem}[section]
\newtheorem{lem}[thm]{Lemma}
\newtheorem{prop}[thm]{Proposition}
\newtheorem{col}[thm]{Corollary}
\theoremstyle{definition}
\newtheorem{defn}[thm]{Definition}
\newtheorem{exmp}[thm]{Example}
\newtheorem{con}[thm]{Convention}
\newtheorem{ass}[thm]{Assumption}
\theoremstyle{remark}
\newtheorem{rem}[thm]{Remark}
\providecommand{\sectionref}[1]{Section~\ref{#1}}
\providecommand{\ssecref}[1]{Subsection~\ref{#1}}
\providecommand{\sssecref}[1]{Subsubsection~\ref{#1}}
\providecommand{\eqnref}[1]{Equation~\eqref{#1}}
\providecommand{\eqnsaref}[2]{Equations~\eqref{#1}~and~\eqref{#2}}
\providecommand{\conref}[1]{Convention~\ref{#1}}
\providecommand{\asssnref}[3]{Assumptions~\ref{#1},~\ref{#2}~and~\ref{#3}}
\providecommand{\defnssnref}[3]{Definitions~\ref{#1},~\ref{#2}~and \ref{#3}}
\newcommand{\bbM}{\mathbbit{M}}
\newcommand{\sbbM}{\mathbbsit{M}}
\newcommand{\sctn}[1]{\Gamma \left ( #1 \right )}
\newcommand{\vectc}[1]{\mathfrak{X}_\text{c} \left ( #1 \right )}
\newcommand{\vectg}[1]{\mathfrak{X}_{[1]} \left ( #1 \right )}
\newcommand{\order}[1]{O \left ( #1 \right )}
\newcommand{\dt}[1]{\operatorname{Det} \left ( #1 \right )}
\newcommand{\tr}[1]{\operatorname{Tr} \left ( #1 \right )}
\newcommand{\deDonder}{{d \negmedspace D}}
\newcommand{\deDonderFR}{{\mathfrak{d} \negmedspace \mathfrak{D}}}
\newcommand{\deDonderpreFR}{{\mathfrak{d} \negmedspace \mathfrak{d}}}
\newcommand{\gravitonghost}{C}
\newcommand{\textfrac}[2]{#1 / #2}
\newcommand{\imaginary}{\mathrm{i}}
\newcommand{\gcoupling}{\varkappa}
\newcommand{\diff}{\operatorname{Diff}_0 \left ( M \right )}
\newcommand{\diffbbM}{\operatorname{Diff}_0 \left ( \bbM \right )}
\newcommand{\id}{\operatorname{Id}}
\newcommand{\Lie}{\pounds}
\newcommand{\met}{\gamma}
\newcommand{\trivmap}{\tau}
\newcommand{\conext}{\widetilde{M}}
\newcommand{\conmet}{\tilde{\met}}
\newcommand{\scri}{\mathscr{I}}
\newcommand{\particlefield}{\varphi}
\newcommand{\enter}{\vspace{0.5\baselineskip}}
\newcommand{\first}{\mathbf{a}}
\newcommand{\second}{\mathbf{b}}
\newcommand{\third}{\mathbf{c}}
\newcommand{\fourth}{\mathbf{d}}
\newcommand{\mtx}{\mathbf{M}}
\newcommand{\gravfr}{\mathfrak{G}}
\newcommand{\pregravfr}{\mathfrak{g}}
\newcommand{\ghostfr}{\mathfrak{C}}
\newcommand{\preghostfr}{\mathfrak{c}}
\newcommand{\gravprop}{\mathfrak{P}}
\newcommand{\ghostprop}{\mathfrak{p}}
\newcommand{\matterfrk}{\tensor[_k]{\mathfrak{M}}{}}
\newcommand{\prematterfrk}{\tensor[_k]{\mathfrak{m}}{}}
\newcommand{\prematterfri}{\tensor[_1]{\mathfrak{m}}{}}
\newcommand{\prematterfrii}{\tensor[_2]{\mathfrak{m}}{}}
\newcommand{\prematterfriii}{\tensor[_3]{\mathfrak{m}}{}}
\newcommand{\prematterfriv}{\tensor[_4]{\mathfrak{m}}{}}
\newcommand{\prematterfrv}{\tensor[_5]{\mathfrak{m}}{}}
\newcommand{\prematterfrvi}{\tensor[_6]{\mathfrak{m}}{}}
\newcommand{\prematterfrvii}{\tensor[_7]{\mathfrak{m}}{}}
\newcommand{\prematterfrviii}{\tensor[_8]{\mathfrak{m}}{}}
\newcommand{\prematterfrix}{\tensor[_9]{\mathfrak{m}}{}}
\newcommand{\prematterfrx}{\tensor[_{10}]{\mathfrak{m}}{}}
\newcommand{\triplevert}{\vert\kern-0.25ex\vert\kern-0.25ex\vert}
\newsavebox{\foobox}
\newcommand{\italicbox}[2][.25]
{%
	\mbox
	{%
		\sbox{\foobox}{#2}%
		\hskip\wd\foobox
		\pdfsave
		\pdfsetmatrix{1 0 #1 1}%
		\llap{\usebox{\foobox}}%
		\pdfrestore
	}%
}
\newcommand{\mathbbit}[1]{{\mspace{-1mu} \italicbox{$\mathbb{#1}$} \mspace{2mu}}}
\newcommand{\mathbbsit}[1]{{\mspace{-1mu} \italicbox{$\scriptstyle{\mathbb{#1}}$} \mspace{1.5mu}}}
\newcommand{\subalign}[1]{%
	\vcenter{%
		\Let@ \restore@math@cr \default@tag
		\baselineskip\fontdimen10 \scriptfont\tw@
		\advance\baselineskip\fontdimen12 \scriptfont\tw@
		\lineskip\thr@@\fontdimen8 \scriptfont\thr@@
		\lineskiplimit\lineskip
		\ialign{\hfil$\m@th\scriptstyle##$&$\m@th\scriptstyle{}##$\crcr
			#1\crcr
		}%
	}
}
\title{\textsc{Gravity-Matter Feynman Rules for any Valence}}
\author{David Prinz\footnote{Department of Mathematics and Department of Physics at Humboldt University of Berlin and Max Planck Institute for Gravitational Physics (Albert Einstein Institute) in Potsdam-Golm, Germany; prinz@\{math.hu-berlin.de, physik.hu-berlin.de, aei.mpg.de\}}}
\date{November 4, 2021}
\begin{document}

\maketitle

\begin{abstract}
	This article derives and presents the Feynman rules for (effective) Quantum General Relativity coupled to the Standard Model for any vertex valence and with general gauge parameter \(\zeta\). The results are worked out for the metric decomposition \(g_{\mu \nu} = \eta_{\mu \nu} + \gcoupling h_{\mu \nu}\), a linearized de Donder gauge fixing and four dimensions of spacetime. To this end, we calculate the Feynman rules for gravitons, graviton-ghosts and for the couplings of gravitons to scalars, spinors, gauge bosons and gauge ghosts.
\end{abstract}

\section{Introduction}

The attempt to perturbatively quantize General Relativity (GR) is rather old: In fact, the approach to define the graviton field \(h_{\mu \nu}\) with gravitational coupling constant \(\gcoupling\) as the fluctuation around a fixed background metric \(b_{\mu \nu}\), i.e.\
\begin{equation}
h_{\mu \nu} := \frac{1}{\gcoupling} \left ( g_{\mu \nu} - b_{\mu \nu} \right ) \iff g_{\mu \nu} \equiv b_{\mu \nu} + \gcoupling h_{\mu \nu} \, ,
\end{equation}
--- oftentimes, and in particular in this article, chosen as the Minkowski metric \(b_{\mu \nu} := \eta_{\mu \nu}\) --- goes back to M. Fierz, W. Pauli and L. Rosenfeld in the 1930s \cite{Rovelli}. Then, R. Feynman \cite{Feynman_Hatfield_Morinigo_Wagner} and B. DeWitt \cite{DeWitt_I,DeWitt_II,DeWitt_III,DeWitt_IV} started to calculate the corresponding Feynman rules in the 1960s. However, D. Boulware, S. Deser and P. van Nieuwenhuizen \cite{Boulware_Deser_Nieuwenhuizen}, G. 't Hooft \cite{tHooft_QG} and M. Veltman \cite{Veltman} discovered serious problems in the perturbative expansion due to the non-renormalizability of Quantum General Relativity (QGR) in the 1970s. We refer to \cite{Rovelli} for a historical treatment.

Despite its age, it is still very hard to find references properly deriving and displaying Feynman rules for QGR, given via the Lagrange density
\begin{subequations}
\begin{align} \label{eqn:QGR_Lagrange_density_Introduction}
	\mathcal{L}_\text{QGR} & := \mathcal{L}_\text{GR} + \mathcal{L}_\text{GF} + \mathcal{L}_\text{Ghost}
	\intertext{with}
	\mathcal{L}_\text{GR} & := - \frac{1}{2 \gcoupling^2} R \dif V_g \, , \\
	\mathcal{L}_\text{GF} & := - \frac{1}{4 \gcoupling^2 \zeta}  \eta^{\mu \nu} \deDonder^{(1)}_\mu \deDonder^{(1)}_\nu \dif V_\eta
	\intertext{and}
	\mathcal{L}_\text{Ghost} & := - \frac{1}{2 \zeta} \eta^{\rho \sigma} \overline{C}^\mu \left ( \partial_\rho \partial_\sigma C_\mu \right ) \dif V_\eta - \frac{1}{2} \eta^{\rho \sigma} \overline{C}^\mu \left ( \partial_\mu \big ( \tensor{\Gamma}{^\nu _\rho _\sigma} C_\nu \big ) - 2 \partial_\rho \big ( \tensor{\Gamma}{^\nu _\mu _\sigma} C_\nu \big ) \right ) \dif V_\eta \, ,
\end{align}
\end{subequations}
where \(R := g^{\nu \sigma} \tensor{R}{^\mu _\sigma _\mu _\nu}\) is the Ricci scalar and \(\deDonder^{(1)}_\mu := \eta^{\rho \sigma} \Gamma_{\rho \sigma \mu}\) is the linearized de Donder gauge fixing functional. Additionally, \(\gravitonghost \in \Gamma \left ( \bbM, \Pi \left ( T \bbM \right ) \right )\) and \(\overline{\gravitonghost} \in \Gamma \left ( \bbM, \Pi \left ( T^* \bbM \right ) \right )\) are the graviton-ghost and graviton-antighost, respectively. Finally, \(\dif V_g := \sqrt{- \dt{g}} \dif t \wedge \dif x \wedge \dif y \wedge \dif z\) and \(\dif V_\eta := \dif t \wedge \dif x \wedge \dif y \wedge \dif z\) are the Riemannian and Minkowskian volume forms, respectively. We refer to \sectionref{sec:conventions_and_definitions} and \cite{Prinz_2} for a detailed introduction. The existing literature known to the author, \cite{Veltman,Sannan,Donoghue,Choi_Shim_Song,tHooft_PQG,Hamber,Schuster,Rodigast_Schuster_1,Rodigast_Schuster_2}, limits the vertex Feynman rules to valence five, directly sets the de Donder gauge fixing parameter to \(\zeta := 1\) and omits the ghost vertex Feynman rules completely. This article aims to fix this gap in the literature by deriving the Feynman rules for gravitons, their ghosts and for their interactions with matter from the Standard Model: The analysis is carried out for the metric decomposition \(g_{\mu \nu} = \eta_{\mu \nu} + \gcoupling h_{\mu \nu}\), arbitrary vertex valence, a linearized de Donder gauge fixing with general gauge parameter \(\zeta\) and in four dimensions of spacetime. Moreover, the gravitational interactions with matter from the Standard Model are then classified into 10 different types and their vertex Feynman rules are also derived and presented for any valence.

The main results are \thmref{thm:grav-fr} stating the graviton vertex Feynman rules, \thmref{thm:grav-prop} stating the corresponding graviton propagator Feynman rule, \thmref{thm:ghost-fr} stating the graviton-ghost vertex Feynman rules and \thmref{thm:ghost-prop} stating the corresponding graviton-ghost propagator Feynman rule. Additionally, the graviton-matter vertex Feynman rules are stated in \thmref{thm:matter-fr} on the level of 10 generic matter-model Lagrange densities, as classified by \lemref{lem:matter-model-Lagrange-densities}. The complete graviton-matter Feynman rules can then be obtained by adding the corresponding matter contributions, as listed e.g.\ in \cite{Romao_Silva}. Finally, we display the three- and four-valent graviton and graviton-ghost vertex Feynman rules explicitly in \exref{exmp:FR}.

General Relativity and Quantum Theory are both fundamental theories in modern physics. While some of their predictions agree with outstanding precision with the corresponding experimental data, there are still regimes where both theories break down conceptually. Notably, this is the case with models of the big bang or in the inside of black holes. In these situations, both theories are needed simultaneously to capture the entire physical reality: General Relativity is needed in order to describe the huge masses and energies that are involved and Quantum Theory is needed in order to describe the interactions of the respective particles in these very small spatial dimensions. Unfortunately, a combined theory of Quantum Gravitation has not been found yet: While, given the success of the Standard Model, a perturbative quantization seems to be the canonical choice, it comes with several problems, most notably its non-renormalizability. This fact has lead to various, more radical approaches to Quantum Gravity, such as Supergravity, String Theory or Loop Quantum Gravity. While any of these theories fixes conceptual problems of the perturbative approach, they create additional problems elsewhere due to further assumptions. Therefore, in this article, we go back to the foundation of Quantum General Relativity via its (effective) perturbative approach using Feynman rules. Feynman rules are calculated from the Lagrangian by extracting the potentials for all classically allowed interactions. Then, scattering amplitudes are calculated by applying the fundamental principle of Quantum Theory, namely that the sum over all unobserved intermediate states needs to be considered. This leads to the Feynman diagram expansion, where each non-tree Feynman diagram corresponds to a Feynman integral over the unobserved momenta of the virtual particles. We refer to \cite{Weinberg_1,Weinberg_2,Weinberg_3} for a more detailed treatment and to \cite{Jimenez} for the corresponding treatment of supersymmetric theories.

This research is intended as the starting point for several related approaches to the perturbative renormalization of (effective) Quantum General Relativity: It is generally possible to render any Feynman integral finite by applying an appropriate subtraction for each divergent (sub-) integral. This treatment of (sub-)divergences has been studied extensively in the Hopf algebra approach of Connes and Kreimer: Here, the subdivergences are treated via the corresponding coproduct \cite{Kreimer_Hopf_Algebra} and the renormalized Feynman rules are then obtained via an algebraic Birkhoff decomposition \cite{Connes_Kreimer_0}. Then, this reasoning was soon applied to gauge theories \cite{Kreimer_Anatomy}, which lead to the identification of Ward--Takahashi and Slavnov--Taylor identities with Hopf ideals in the corresponding Connes--Kreimer renormalization Hopf algebra \cite{vSuijlekom_QED,vSuijlekom_QCD,vSuijlekom_BV,Prinz_3}. Following this route, it was then suggested by Kreimer to apply this duality to General Relativity \cite{Kreimer_QG1}, which was motivated via a scalar toy model \cite{Kreimer_vSuijlekom} and then studied in detail by the author \cite{Prinz_2,Prinz_3}. In this approach, the non-renormalizability of General Relativity manifests itself by the necessity to introduce infinitely many counterterms. The aim is now to relate these counterterms by generalized Slavnov--Taylor identities, which correspond to the diffeomorphism invariance of the theory. A first step in this direction is the construction of tree-level cancellation identities, which requires the longitudinal and transversal decomposition of the graviton propagator via the general gauge parameter \(\zeta\) as variable. This approach was supported by recent calculations for the metric density decomposition of Goldberg and Capper et al.\ (\cite{Goldberg,Capper_Leibbrandt_Ramon-Medrano,Capper_Medrano,Capper_Namazie}) up to valence six \cite{Kissler}. With the present work, we provide a foundation for a purely combinatorial argument, which will be valid for all vertex valences. This will be studied in future work via the diffeomorphism-gauge BRST double complex \cite{Prinz_5} and the longitudinal and transversal structure of the gravitational Feynman rules \cite{Prinz_6}. Additionally, we remark that this reasoning is implicit in the construction of Kreimer's Corolla polynomial \cite{Kreimer_Yeats,Kreimer_Sars_vSuijlekom,Kreimer_Corolla}. This graph polynomial, which is based on half-edges, relates the amplitudes of Quantum Yang--Mills theories to the amplitudes of the scalar \(\phi^3_4\)-theory, by means of the parametric representation of Feynman integrals \cite{Sars}. More precisely, in this approach the cancellation identities are encoded into amplitudes by means of Feynman graph cohomology \cite{Berghoff_Knispel}. In particular, this approach has been successfully generalized to spontaneously broken gauge theories and thus to the complete bosonic part of the Standard Model \cite{Prinz_1}. The possibility to apply this construction also to (effective) Quantum General Relativity will be checked in future work. Finally, we believe that the results of this article are also of intellectual interest, as Feynman rules are an essential ingredient to perturbative Quantum Field Theories.

We remark the development of more concise formulations, aimed in particular for practical calculations: There are the KLT relations \cite{Kawai_Lewellen_Tye,Bern_Carrasco_Johansson_1,Bern_Carrasco_Johansson_2,Elvang_Huang}, which relate on-shell gravitational amplitudes with the amplitudes of the `double-copy' of a gauge theory, and are applied e.g.\ in \cite{Bern_et-al}. Furthermore, it is also possible to simplify the gravitational Feynman rules by a reformulation with different (possibly auxiliary) fields \cite{Cheung_Remmen,Tomboulis}, even on a de Sitter background \cite{Tsamis_Woodard}. Moreover, we remark the use of computer algebra programs, such as `XACT' \cite{Abreu_et-al} and `QGRAF' \cite{Blumlein_et-al}. For the projects mentioned in the previous paragraph, however, the original Feynman rules are needed to arbitrary vertex valence and with general gauge parameter \(\zeta\): This is because the KLT relations are only valid on-shell and thus rely on Cutkosky's Theorem \cite{Cutkosky}, cf.\ \cite{Bloch_Kreimer,Kreimer_Cutkosky}. Also, we want to study the direct relationship between combinatorial Green's functions and their counterterms, which becomes more complicated in the aforementioned reformulations with auxiliary fields. And finally, we are interested in a combinatorial proof that is valid to all vertex valences and thus excludes the use of computer algebra programs.

\section{Conventions and definitions} \label{sec:conventions_and_definitions}

We start this article with our conventions, in particular the used sign choices. Additionally, we recall important definitions for (effective) Quantum General Relativity and the Standard Model. This includes the Lagrange densities with the metric decomposition, the de Donder gauge fixing and the corresponding ghosts. Furthermore, we provide a proper definition of the graviton field and in particular of the background Minkowski spacetime. This is obtained with the rather restrictive assumption of \defnref{defn:simple_spacetime}, which we call `simple spacetime'. This setup is motivated with classical results from Ellis--Hawking in \propref{prop:ase-parallelizable} and Geroch in \propref{prop:parallelizable-spin} and the boundedness assumption from \assref{ass:bdns_gf}. Finally, we comment on the diffeomorphism invariance and display the action of the corresponding diffeomorphism BRST operators. We refer to \cite{Prinz_2} for a more fundamental introduction to (effective) Quantum General Relativity coupled to Quantum Electrodynamics. Additionally, we refer to \cite{Prinz_5} for a study of the diffeomorphism-gauge BRST double complex, to \cite{Prinz_6} for a study of transversality with respect to infinitesimal diffeomorphisms and to \cite{Prinz_7} for a generalization of Wigner's elementary particle classification to Linearized General Relativity.

\enter

\begin{con}[Sign choices] \label{con:sign_choices}
	We use the sign-convention \((-++)\), as classified by \cite{Misner_Thorne_Wheeler}, i.e.:
	\begin{enumerate}
		\item Minkowski metric: \(\eta_{\mu \nu} = \begin{pmatrix} 1 & 0 & 0 & 0 \\ 0 & - 1 & 0 & 0 \\ 0 & 0 & - 1 & 0 \\ 0 & 0 & 0 & - 1 \end{pmatrix}_{\mu \nu}\)
		\item Riemann tensor: \(\tensor{R}{^\rho _\sigma _\mu _\nu} = \partial_\mu \Gamma^\rho_{\nu \sigma} - \partial_\nu \Gamma^\rho_{\mu \sigma} + \Gamma^\rho_{\mu \lambda} \Gamma^\lambda_{\nu \sigma} - \Gamma^\rho_{\nu \lambda} \Gamma^\lambda_{\mu \sigma}\)
		\item Einstein field equations: \(G_{\mu \nu} = \kappa T_{\mu \nu}\)
	\end{enumerate}
	Additionally we use the plus-signed Clifford relation, i.e.\ \(\left \{ \gamma_m , \gamma_n \right \} = 2 \eta_{m n} \id_{\Sigma \sbbM}\), cf.\ \cite[Remark 2.15]{Prinz_2}.
\end{con}

\enter

\begin{defn}[Spacetime] \label{def:spacetime}
	Let \((M,\met)\) be a Lorentzian manifold. We call \((M,\met)\) a spacetime, if it is smooth, connected, 4-dimensional and time-orientable.\footnote{We fix the spacetime-dimension, as the gravitational Feynman rules depend directly on it.}
\end{defn}

\enter

\begin{defn}[Asymptotically simple (and empty) spacetime]
	Let \((M,\met)\) be an oriented and causal spacetime. We call \((M,\met)\) an asymptotically simple spacetime, if it admits a conformal extension \(\big ( \conext,\conmet \big )\) in the sense of Penrose \cite{Penrose_1,Penrose_2,Penrose_3,Penrose_4}: That is, if there exists an embedding \(\iota \colon M \hookrightarrow \conext\) and a smooth function \(\varsigma \in C^\infty \big ( \conext \big )\), such that:
	\begin{enumerate}
		\item \(\conext\) is a manifold with interior \(\iota \left ( M \right )\) and boundary \(\scri\), i.e.\ \(\conext \cong \iota \left ( M \right ) \sqcup \scri\)
		\item \(\eval[2]{\varsigma}_{\iota \left ( M \right )} > 0\), \(\eval[2]{\varsigma}_{\scri} \equiv 0\) and \(\eval[2]{\dif \varsigma}_{\scri} \not \equiv 0\); additionally \(\iota_* \met \equiv \varsigma^2 \conmet\)
		\item Each null geodesic of \(\big ( \conext,\conmet \big )\) has two distinct endpoints on \(\scri\)
	\end{enumerate}
	We call \((M,\met)\) an asymptotically simple and empty spacetime, if additionally:\footnote{This condition can be modified to allow electromagnetic radiation near \(\scri\). We remark that asymptotically simple and empty spacetimes are also called asymptotically flat.}
	\begin{enumerate}
		\setcounter{enumi}{3}
		\item \(\eval[2]{\left ( R_{\mu \nu} \right )}_{\iota^{-1} ( \widetilde{O} )} \equiv 0\), where \(\widetilde{O} \subset \conext\) is an open neighborhood of \(\scri \subset \conext\)
	\end{enumerate}
\end{defn}

\enter

\begin{prop} \label{prop:ase-parallelizable}
	Let \((M,\met)\) be an asymptotically simple and empty spacetime. Then \((M,\met)\) is globally hyperbolic and thus parallelizable.
\end{prop}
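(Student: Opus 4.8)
The plan is to obtain both conclusions from classical results of Lorentzian geometry, the substantive one being global hyperbolicity. For that first part I would follow the argument of Hawking and Ellis. Writing \(\conext \cong \iota(M) \sqcup \scri\) for the conformal completion, the vacuum hypothesis near \(\scri\) forces \(\scri\) to be a smooth null hypersurface of \(\conext\), splitting into future and past components \(\scri^+\) and \(\scri^-\), each with topology \(\mathbb{R} \times S^2\). First I would show that \((M,\met)\) is strongly causal: by hypothesis every null geodesic of \(\big(\conext,\conmet\big)\) runs between two distinct points of \(\scri\), and more generally an inextendible causal curve in \(M\) cannot be imprisoned in a compact set without contradicting this; together with the standing causality assumption this upgrades to strong causality. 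Next I would establish causal simplicity, i.e.\ that \(J^{\pm}(p)\) is closed in \(M\) for each \(p\), using the achronality of \(\scri\) in \(\conext\) and the fact that causal curves approaching the boundary of \(J^{\pm}(p)\) must limit onto \(\scri\). Finally, for \(p,q \in M\) with \(q \in J^{+}(p)\), I would show that the causal diamond \(J^{+}(p) \cap J^{-}(q)\) has compact closure inside \(M\) by confining it, via the conformal factor \(\varsigma\), away from \(\scri\); compactness of all such diamonds together with strong causality is precisely global hyperbolicity.

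For the second part I would invoke Geroch's splitting theorem for globally hyperbolic spacetimes, in the smooth form due to Bernal and Sánchez: there is a smooth spacelike Cauchy hypersurface \(\Sigma \subset M\) and a diffeomorphism \(M \cong \mathbb{R} \times \Sigma\). Differentiating the splitting gives \(TM \cong \underline{\mathbb{R}} \oplus \mathrm{pr}_{\Sigma}^{*} T\Sigma\), where \(\underline{\mathbb{R}}\) denotes the trivial line bundle. Since \(M\) is oriented and \(\mathbb{R}\) carries its standard orientation, \(\Sigma\) inherits an orientation; being a \(3\)-manifold it is then parallelizable by the classical theorem (Stiefel) that every orientable \(3\)-manifold has trivial tangent bundle — the only obstruction, \(w_2(\Sigma) \in H^2(\Sigma;\mathbb{Z}/2)\), vanishing by Wu's formula since \(w_1(\Sigma) = 0\). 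Hence \(\mathrm{pr}_{\Sigma}^{*} T\Sigma\) is trivial, and therefore so is \(TM\); that is, \((M,\met)\) is parallelizable.

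The main obstacle is the causality analysis in the first part: controlling how causal curves in \(M\) meet — or fail to meet — the conformal boundary \(\scri\), and distilling from this the compactness of causal diamonds, is genuinely where the work lies, whereas once \((M,\met)\) is known to be globally hyperbolic the topological conclusion is essentially formal. Accordingly, I would present the global-hyperbolicity step by citing Hawking--Ellis for the detailed causal arguments and sketching only the structure above, and give the deduction of parallelizability in full.
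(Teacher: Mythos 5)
Your proposal is correct and follows essentially the same route as the paper: global hyperbolicity is obtained by citing the classical Hawking--Ellis result (Proposition 6.9.2), and parallelizability then follows from the splitting \(M \cong \mathbb{R} \times \Sigma\) with \(\Sigma\) an orientable \(3\)-manifold, hence parallelizable. You merely spell out the steps (Bernal--S\'anchez smooth splitting, Stiefel's theorem, triviality of \(TM\) from triviality of \(T\Sigma\)) that the paper's two-sentence proof leaves implicit.
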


\begin{proof}
	The first part of the statement, i.e.\ that \((M,\met)\) is globally hyperbolic, is a classical result due to Ellis and Hawking \cite[Proposition 6.9.2]{Hawking_Ellis}. We conclude the second part, i.e.\ that \((M,\met)\) is parallelizable, by noting that we have additionally assumed spacetimes to be four-dimensional: Thus, being globally hyperbolic, there is a well-defined three-dimensional space-submanifold, which therefore is parallelizable as it is orientable by assumption.
\end{proof}

\enter

\begin{col}
	Any asymptotically simple and empty spacetime \((M,\met)\) is spin.
\end{col}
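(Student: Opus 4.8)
The plan is to read this off directly from \propref{prop:ase-parallelizable} together with the classical theorem of Geroch recorded in \propref{prop:parallelizable-spin}, so that the argument is essentially a two-line deduction. First I would recall what is being claimed: a spacetime \((M,\met)\) in the sense of \defnref{def:spacetime} is \emph{spin} if the bundle of oriented, time-oriented, orthonormal frames --- a principal \(SO^{\uparrow}(1,3)\)-bundle, whose structure group is reduced to \(SO^{\uparrow}(1,3)\) precisely because \((M,\met)\) is oriented and time-orientable --- admits a lift along the double cover \(\mathrm{Spin}^{\uparrow}(1,3) \to SO^{\uparrow}(1,3)\) to a principal \(\mathrm{Spin}^{\uparrow}(1,3)\)-bundle. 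The obstruction to such a lift is the second Stiefel--Whitney class \(w_2(M) \in H^2(M;\mathbb{Z}/2)\), with \(w_1(M) = 0\) already guaranteed by orientability.

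Next I would bring in parallelizability. By \propref{prop:ase-parallelizable}, an asymptotically simple and empty spacetime \((M,\met)\) is globally hyperbolic, hence diffeomorphic to \(\mathbb{R} \times \Sigma\) for a Cauchy hypersurface \(\Sigma\); in particular \(TM\) is trivial and \(M\) is non-compact. A trivial tangent bundle has vanishing Stiefel--Whitney classes, so \(w_1(M) = w_2(M) = 0\) and the spin lift exists. Equivalently --- and this is presumably the route taken via \propref{prop:parallelizable-spin} --- one invokes Geroch's theorem, which asserts that a non-compact spacetime is spin if and only if it is parallelizable; both hypotheses have now been verified, so \((M,\met)\) is spin.

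The only point needing care --- the ``main obstacle,'' to the extent there is one --- is bookkeeping about which structure group is in play: one must be sure that the relevant spin structure is the one attached to the \emph{Lorentzian} orthochronous frame bundle, so that it is genuinely the orientability and time-orientability demanded in \defnref{def:spacetime} that kill \(w_1\) and cut the structure group down to \(SO^{\uparrow}(1,3)\), leaving \(w_2(M)\) as the sole obstruction, which parallelizability from \propref{prop:ase-parallelizable} then eliminates. With that identification pinned down, the corollary follows with no further computation.
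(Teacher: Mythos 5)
Your proposal is correct and follows the same route as the paper: \propref{prop:ase-parallelizable} gives parallelizability, and a parallelizable (hence trivially framed) manifold is spin --- the paper phrases this as ``parallelizable manifolds are trivially spin,'' while you spell out the vanishing of \(w_2\) and optionally invoke Geroch's theorem from \propref{prop:parallelizable-spin}. The extra obstruction-theoretic detail is fine but not needed beyond what the paper's one-line argument already uses.
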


\begin{proof}
	This follows immediately from \propref{prop:ase-parallelizable}, as parallelizable manifolds are trivially spin.
\end{proof}

\enter

\begin{prop} \label{prop:parallelizable-spin}
	A spacetime \((M,\met)\) is spin if and only if it is globally hyperbolic. Equivalently, \((M,\met)\) is spin if and only if it is parallelizable.
\end{prop}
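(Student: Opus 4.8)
The plan is to route both biconditionals through parallelizability, which is the natural pivot: I would establish (i) globally hyperbolic $\Leftrightarrow$ parallelizable and (ii) parallelizable $\Leftrightarrow$ spin, the latter being in essence Geroch's theorem. Two standing facts are used throughout. First, the spacetimes in question are orientable, so that $w_1(TM)=0$; and second, they are non-compact --- a compact time-orientable Lorentzian $4$-manifold necessarily contains closed timelike curves, and in any case a globally hyperbolic spacetime is diffeomorphic to a product $\mathbb{R}\times\Sigma$, which is non-compact.

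For ``globally hyperbolic $\Rightarrow$ parallelizable'' I would invoke the smooth splitting theorem of Bernal and S\'anchez (refining Geroch's original topological splitting): a globally hyperbolic spacetime is diffeomorphic to $\mathbb{R}\times\Sigma$ with each slice $\{t\}\times\Sigma$ a smooth spacelike Cauchy hypersurface. Since $M$ is orientable and time-orientable, $\Sigma$ is an orientable $3$-manifold; by the classical theorem of Stiefel every orientable $3$-manifold has trivial tangent bundle (the obstructions reduce to $w_1$ and $w_2$, both of which vanish). Hence $T(\mathbb{R}\times\Sigma)$, being the Whitney sum of a trivial line bundle with the pullback of the trivial bundle $T\Sigma$, is trivial, so $M$ is parallelizable.

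For ``parallelizable $\Leftrightarrow$ spin'' the forward direction is immediate: a global frame reduces the structure group of $TM$ to the trivial group and in particular lifts along $\mathrm{Spin}(1,3)\to\mathrm{SO}(1,3)$, so a spin structure exists. For the converse I would use Geroch's theorem directly --- a non-compact spacetime admitting a spin structure is parallelizable --- and I would also exhibit the mechanism: writing $TM\cong\underline{\mathbb{R}}\oplus E$ for $E=V^{\perp}$ the rank-$3$ orthogonal complement of a global timelike vector field, $E$ is orientable with $w_2(E)=w_2(TM)=0$, and the remaining obstruction to trivializing an oriented rank-$3$ bundle lies in $H^4(M;\mathbb{Z})$, which vanishes because a non-compact $4$-manifold has the homotopy type of a $3$-dimensional CW complex; thus $E$, and hence $TM$, is trivial.

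Chaining (i) and (ii) yields ``spin $\Leftrightarrow$ parallelizable $\Leftrightarrow$ globally hyperbolic''. The step I expect to be the main obstacle is the converse in (i), i.e.\ ``parallelizable $\Rightarrow$ globally hyperbolic'': in contrast to the other three implications this does not follow from bundle theory alone (a parallelizable Lorentzian $4$-manifold such as anti--de Sitter space, or $\mathbb{R}^{4}$ with a point removed, need not be globally hyperbolic), so here I would fall back on the further hypotheses carried by the spacetimes under consideration, closing the loop through Proposition~\ref{prop:ase-parallelizable} together with the causal and asymptotic conditions built into their definition.
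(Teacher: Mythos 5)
Your proposal is correct in substance, but it takes a genuinely different route: the paper's entire proof is a two-word citation to Geroch's papers, whereas you actually reconstruct the argument. Your chain ``globally hyperbolic \(\Rightarrow\) parallelizable'' (Bernal--S\'anchez splitting plus Stiefel's triviality of \(T\Sigma\) for orientable \(3\)-manifolds) and ``spin \(\Leftrightarrow\) parallelizable'' (trivial in one direction; in the other, Geroch's obstruction-theoretic argument that an oriented rank-\(3\) summand with \(w_1=w_2=0\) over a non-compact \(4\)-manifold, hence over a \(3\)-complex, is trivial) is exactly the content of the cited results, and your standing remarks on orientability and non-compactness are the right hypotheses to make explicit. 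What your write-up buys, beyond self-containedness, is that it exposes a genuine weakness in the statement itself that the paper's citation papers over: Geroch's theorems give ``globally hyperbolic \(\Rightarrow\) spin'' and ``spin \(\Leftrightarrow\) parallelizable'' (for non-compact spacetimes), but \emph{not} ``spin \(\Rightarrow\) globally hyperbolic'', which is false for a general spacetime in the sense of the paper's Definition~\ref{def:spacetime} (anti--de Sitter space being the standard counterexample). You are right to flag this as the one implication that cannot come from bundle theory, and your proposed repair --- that for the spacetimes actually used in the paper global hyperbolicity is supplied independently by \propref{prop:ase-parallelizable} and the asymptotic hypotheses, so that all four properties coincide on that class --- is the only reading under which the proposition is true; the paper itself offers no argument for this direction. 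So: no gap in your reasoning, and arguably a more honest proof than the one in the paper.
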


\begin{proof}
	These are two classical results by Geroch \cite{Geroch_1,Geroch_2}.
\end{proof}

\enter

\begin{defn}[Simple spacetime] \label{defn:simple_spacetime}
	Let \((M,\met)\) be a spacetime. We call the triple \((M,\met,\trivmap)\) a simple spacetime, if \(M\) is diffeomorphic to the Minkowski spacetime \(\bbM\) and \(\trivmap \colon M \to \bbM\) is a fixed such diffeomorphism (not necessarily an isometry), called trivializing map. Furthermore, we use \(\trivmap\) to pushforward the metric \(\met\) to the Minkowski spacetime \(\bbM\) via \(g := \trivmap_* \met \in \Gamma \big ( \bbM, \operatorname{Sym}^2 T^* \bbM \big )\) to obtain an equivalence between the physical spacetime \((M,\met)\) and its background Minkowski spacetime \((\bbM,g)\).
\end{defn}

\enter

\begin{ass} \label{ass:simple_spacetime}
	From now on, we assume spacetimes to be simple.
\end{ass}

\enter

\begin{rem} \label{rem:Minkowski_background}
	The rather restrictive setup of \assref{ass:simple_spacetime} is motivated by \propref{prop:ase-parallelizable} and \propref{prop:parallelizable-spin}: It is physically reasonable to consider asymptotically simple and empty spacetimes, as well as to demand a spin structure for fermionic particles. Thus, the spacetime \((M,\met)\) has the same asymptotic structure as the Minkowski spacetime \((\bbM,\eta)\) and is furthermore parallelizable. This implies that it is diffeomorphic to the Minkowski spacetime of the same dimension, modulo possible singularities. However, as we need the eigenvalues of the metric \(g\) to be bounded by \assref{ass:bdns_gf} for the following constructions, we exclude singularities in our setup. This assumption allows us to view particle fields, in particular the graviton field, as sections over Minkowski spacetime \(\sctn{\bbM,E}\), where \(\pi_E \colon E \to \bbM\) is a suitable vector bundle for the particle fields under consideration, cf.\ \defnref{defn:correspondence_Minkowski_spacetime}. In turn, this enables us to use Wigner's classification of elementary particles via irreducible representations of the Poincar\'{e} group \cite{Wigner}, which will be studied in \cite{Prinz_7}. Thus we can proceed as usual by constructing the Fock space to describe the quantum states of our corresponding Quantum Field Theory. Finally, this setup provides a well-defined Fourier transformation for particle sections, cf.\ \defnref{defn:fourier_transform}.
\end{rem}

\enter

\begin{defn}[Metric decomposition and graviton field] \label{defn:md_and_gf}
	Let \((M,\met,\trivmap)\) be a simple spacetime. Then we use the following metric decomposition on the background Minkowski spacetime \((\bbM,\eta)\)
	\begin{equation} \label{eqn:metric_decomposition}
		h_{\mu \nu} := \frac{1}{\gcoupling} \left ( g_{\mu \nu} - \eta_{\mu \nu} \right ) \iff g_{\mu \nu} \equiv \eta_{\mu \nu} + \gcoupling h_{\mu \nu} \, ,
	\end{equation}
	where \(\gcoupling := \sqrt{\kappa}\) is the graviton coupling constant (with \(\kappa := 8 \pi G\) the Einstein gravitational constant). Thus, the graviton field \(h_{\mu \nu}\) is given as a rescaled, symmetric \((0,2)\)-tensor field on the background Minkowski spacetime, i.e.\ as the section \(\gcoupling h \in \Gamma \big ( \bbM, \operatorname{Sym}^2 T^* \bbM \big )\).
\end{defn}

\enter

\begin{rem}
	Given the situation of \defnref{defn:md_and_gf}, the graviton field \(h\) depends directly on the choice of the trivializing map \(\trivmap\). It can be shown, however, that this dependence can be absorbed, if the theory is diffeomorphism-invariant \cite{Prinz_7}. Thus, this construction is in particular well-defined for Linearized General Relativity.
\end{rem}

\enter

\begin{ass} \label{ass:bdns_gf}
	Given the metric decomposition from \defnref{defn:md_and_gf}, we assume the following boundedness condition for the gravitational constant \(\gcoupling\) and the graviton field \(h_{\mu \nu}\):
	\begin{equation}
		\left | \gcoupling \right | \left \| h \right \|_{\max} := \left | \gcoupling \right | \max_{\lambda \in \operatorname{EW} \left ( h \right )} \left | \lambda \right | < 1 \, ,
	\end{equation}
	where \(\operatorname{EW} \left ( h \right )\) denotes the set of eigenvalues of \(h\). This will be relevant for preceding assertions to assure the convergence of series involving the graviton coupling constant \(\gcoupling\) and the graviton field \(h_{\mu \nu}\).
\end{ass}

\enter

\begin{defn}[Correspondence to Minkowski spacetime] \label{defn:correspondence_Minkowski_spacetime}
	Let \((M,\met,\trivmap)\) be a simple spacetime and \(\pi_E \colon E \to M\) a vector bundle for particle fields. Then we extend the vector bundle for particle fields via \((\trivmap \circ \pi_E) \colon E \to \bbM\) to a vector bundle over the background Minkowski spacetime \(\bbM\).
\end{defn}

\enter

\begin{con}[Lagrange density] \label{con:Lagrange_density}
	We choose the following signs and prefactors for the Lagrange density, which we consider as a functional for sections over the background Minkowski spacetime \(\bbM\) and where \(\dif V_g := \sqrt{- \dt{g}} \dif t \wedge \dif x \wedge \dif y \wedge \dif z\) and \(\dif V_\eta := \dif t \wedge \dif x \wedge \dif y \wedge \dif z\) denote the Riemannian and Minkowskian volume forms, respectively:
	\begin{enumerate}
		\item Einstein-Hilbert Lagrange density: \begin{equation} \mathcal{L}_\text{GR} := - \frac{1}{2 \gcoupling^2} R \dif V_g \, , \end{equation} with \(R := g^{\nu \sigma} \tensor{R}{^\mu _\sigma _\mu _\nu}\)
		\item Gauge fixing Lagrange density: \begin{equation} \mathcal{L}_\text{GF} := - \frac{1}{4 \gcoupling^2 \zeta}  \eta^{\mu \nu} \deDonder^{(1)}_\mu \deDonder^{(1)}_\nu \dif V_\eta \, , \end{equation} with \(\deDonder^{(1)}_\mu := \eta^{\rho \sigma} \Gamma_{\rho \sigma \mu} \equiv \gcoupling \eta^{\rho \sigma} \left ( \partial_\rho h_{\mu \sigma} - \frac{1}{2} \partial_\mu h_{\rho \sigma} \right )\)
		\item Ghost Lagrange density: \begin{equation} \begin{split} \mathcal{L}_\text{Ghost} & := - \frac{1}{2 \zeta} \eta^{\rho \sigma} \overline{C}^\mu \left ( \partial_\rho \partial_\sigma C_\mu \right ) \dif V_\eta \\ & \phantom{:=} - \frac{1}{2} \eta^{\rho \sigma} \overline{C}^\mu \left ( \partial_\mu \big ( \tensor{\Gamma}{^\nu _\rho _\sigma} C_\nu \big ) - 2 \partial_\rho \big ( \tensor{\Gamma}{^\nu _\mu _\sigma} C_\nu \big ) \right ) \dif V_\eta \, , \end{split} \end{equation} with \(\gravitonghost \in \Gamma \left ( \bbM, \Pi \left ( T \bbM \right ) \right )\) and \(\overline{\gravitonghost} \in \Gamma \left ( \bbM, \Pi \left ( T^* \bbM \right ) \right )\)
	\end{enumerate}
	The Lagrange density of (effective) Quantum General Relativity is then the sum of the three, i.e.\
	\begin{equation} \label{eqn:QGR_Lagrange_density}
		\begin{split}
		\mathcal{L}_\text{QGR} & := \mathcal{L}_\text{GR} + \mathcal{L}_\text{GF} + \mathcal{L}_\text{Ghost} \\
		& \phantom{:} \equiv - \frac{1}{2 \gcoupling^2} \left ( \sqrt{- \dt{g}} R + \frac{1}{2 \zeta}  \eta^{\mu \nu} \deDonder^{(1)}_\mu \deDonder^{(1)}_\nu \right ) \dif V_\eta \\
		& \phantom{:=} - \frac{1}{2} \eta^{\rho \sigma} \left ( \frac{1}{\zeta} \overline{C}^\mu \left ( \partial_\rho \partial_\sigma C_\mu \right ) + \overline{C}^\mu \left ( \partial_\mu \big ( \tensor{\Gamma}{^\nu _\rho _\sigma} C_\nu \big ) - 2 \partial_\rho \big ( \tensor{\Gamma}{^\nu _\mu _\sigma} C_\nu \big ) \right ) \right ) \dif V_\eta \, ,
		\end{split}
	\end{equation}
cf.\ \cite[Section 2.2]{Prinz_2}. We remark that the ghost Lagrange density is calculated via Faddeev-Popov's method \cite{Faddeev_Popov}, cf.\ \cite[Subsection 2.2.3]{Prinz_2}, which can be embedded into the more elaborate settings of BRST cohomology and BV formalism.
\end{con}

\enter

\begin{rem}
	The reason for the sign choices from \conref{con:Lagrange_density} are as follows: The minus sign for the Einstein-Hilbert Lagrange density is due to the sign choice for the Minkowski metric, cf.\ \conref{con:sign_choices}. Then, the minus sign for the gauge fixing Lagrange density is such that \(\zeta = 1\) corresponds to the de Donder gauge fixing. Finally, the sign for the ghost Lagrange density is, as usual, an arbitrary choice, and is chosen such that all Lagrange densities have the same sign.
\end{rem}

\enter

\begin{rem}
	Given the situation of \assref{ass:simple_spacetime}, the gravitational path integral then corresponds to an integral over the space of symmetric \((0,2)\)-tensor fields over the background Minkowski spacetime \(\bbM\). As the construction of such integral measures over function spaces is rather troublesome, we simply consider the \(\hbar \ll 0\) limit, where the Feynman graph expansion can be interpreted as a `perturbative definition' of the path integral. We refer to \cite{Hamber} for a more physical treatment.
\end{rem}

\enter

\begin{ass} \label{ass:diffeo_homotopic_to_identity}
	We assume from now on that diffeomorphisms are homotopic to the identity, i.e.\ \(\phi \in \diff\).
\end{ass}

\enter

\begin{rem}
	\assref{ass:diffeo_homotopic_to_identity} is motivated by the fact that diffeomorphisms homotopic to the identity are generated via the flows of compactly supported vector fields, \(X \in \vectc{M}\), and differ from the identity only on compactly supported domains. Thus, diffeomorphisms homotopic to the identity preserve the asymptotic structure of spacetimes. We remark that, different from finite dimensional Lie groups, the Lie exponential map
	\begin{align}
		\operatorname{exp} \, & : \quad \vectc{M} \to \diff
		\intertext{is no longer locally surjective, which leads to the notion of an evolution map}
		\operatorname{Evol} \, & : \quad C^\infty \left ( [0,1], \vectc{M} \right ) \to C^\infty \left ( [0,1], \diff \right )
	\end{align}
	that maps smooth curves in the Lie algebra to smooth curves in the corresponding Lie group. We refer to \cite{Schmeding} for further details.
\end{rem}

\enter

\begin{defn}[Transformation under (infinitesimal) diffeomorphisms] \label{defn:transformation_diffeo}
	Given the situation of \defnref{defn:md_and_gf} and \assref{ass:diffeo_homotopic_to_identity}, we define the action of diffeomorphisms \(\phi \in \diff\) on the graviton field via
	\begin{align}
		\left ( \trivmap \circ \phi \right )_* \left ( \gcoupling h \right ) & := \left ( \trivmap \circ \phi \right )_* g \, ,
		\intertext{such that the background Minkowski metric can be conveniently defined to be invariant, i.e.}
		\left ( \trivmap \circ \phi \right )_* \eta & := 0 \, ,
	\end{align}
	and on the other particle fields \(\particlefield \in \sctn{\bbM,E}\) as usual, i.e.\ via
	\begin{equation}
		\left ( \trivmap \circ \phi \right )_* \particlefield \, .
	\end{equation}
	In particular, the action of infinitesimal diffeomorphisms is given via the Lie derivative with respect the generating vector field \(X \in \vectc{\bbM}\), i.e.\
	\begin{align}
		\delta_X h_{\mu \nu} & \equiv \frac{1}{\gcoupling} \left ( \nabla^{(g)}_\mu X_\nu + \nabla^{(g)}_\nu X_\mu \right ) \, , \\
		\delta_X \eta_{\mu \nu} & \equiv 0
		\intertext{and}
		\delta_X \particlefield & \equiv \Lie_X \particlefield \, ,
	\end{align}
	where \(\nabla^{(g)}\) denotes the covariant derivative with respect to the connection \(\Gamma\) induced via \(g\) on \(\bbM\), i.e.\ via
	\begin{equation}
		\tensor{\Gamma}{_\mu _\nu ^\rho} := \frac{1}{2} g^{\rho \sigma} \left ( \partial_\mu g_{\sigma \nu} + \partial_\nu g_{\mu \sigma} - \partial_\sigma g_{\mu \nu} \right ) \, .
	\end{equation}
\end{defn}

\enter

\begin{rem} \label{rem:Lie_groupoid_and_algebroid}
	Using the setup from \asssnref{ass:simple_spacetime}{ass:bdns_gf}{ass:diffeo_homotopic_to_identity} and \defnssnref{defn:md_and_gf}{defn:correspondence_Minkowski_spacetime}{defn:transformation_diffeo}, we can view Linearized General Relativity coupled to matter from the Standard Model as a `generalized gauge theory' on the background Minkowski spacetime: The `gauge group' is then given via the pushforward of diffeomorphisms homotopic to the identity by the trivializing map, i.e.\ \(\mathcal{G} := \trivmap_* \diff \cong \diffbbM\). Furthermore, their infinitesimal actions are given via Lie derivatives with respect to compactly supported vector fields \(\vectc{\bbM}\). In particular, the right setting to study the gauge theoretic properties of such a theory is given via the Lie groupoid \(\left ( \mathcal{G} \times \mathcal{B} \right ) \rightrightarrows \mathcal{B}\) over the background Minkowski spacetime-matter bundle \(\mathcal{B} := \bbM \times E\). Additionally, the action of infinitesimal diffeomorphisms is embedded into this picture via the corresponding Lie algebroid \(((\vectc{\bbM} \times \mathcal{B}) \to \mathcal{B}, [\cdot, \cdot ], \rho)\): More precisely, \([\cdot, \cdot ]\) is the Lie bracket on \(\vectc{\bbM}\) and \(\rho \colon (\vectc{\bbM} \times \mathcal{B}) \to T \mathcal{B}\) the anchor map. Then, as in the case of `ordinary gauge theories' --- that is gauge theories coming from a principle bundle structure --- the invariance of the theory under diffeomorphisms provides an obstacle for the calculation of the propagator. We solve this issue by introducing a linearized de Donder gauge fixing together with the corresponding ghost and antighost fields, \(\gravitonghost \in \sctn{\bbM, \Pi \left ( T \bbM \right )}\) and \(\overline{\gravitonghost} \in \sctn{\bbM, \Pi \left ( T^* \bbM \right )}\), respectively. This viewpoint will be elaborated in \cite{Prinz_7}.
\end{rem}

\enter

\begin{lem}
	Given the situation of \defnref{defn:transformation_diffeo}, the diffeomorphism BRST operator \(P \in \vectg{\mathcal{B}}\), i.e.\ a vector field on the spacetime-matter bundle with ghost degree 1, can be consistently defined as follows:
	\begin{subequations}
	\begin{align}
		P h_{\mu \nu} & = \frac{1}{\gcoupling} \left ( \nabla^{(g)}_\mu \gravitonghost_\nu + \nabla^{(g)}_\nu \gravitonghost_\mu \right ) \\
		P \gravitonghost^\sigma & = \gravitonghost^\rho \left ( \partial_\rho \gravitonghost^\sigma \right ) \\
		P \overline{\gravitonghost}_\sigma & = B_\sigma \\
		P B_\sigma & = 0 \\
		P \eta_{\mu \nu} & = 0 \\
		P \particlefield & = \Lie_\gravitonghost \particlefield \, ,
	\end{align}
	\end{subequations}
	where \(\gravitonghost \in \sctn{\bbM, \Pi \left ( T \bbM \right )}\) is the graviton-ghost, \(\overline{\gravitonghost} \in \sctn{\bbM, \Pi \left ( T^* \bbM \right )}\) the graviton-antighost, \(B \in \sctn{\bbM, T^* \bbM}\) the Lautrup-Nakanishi auxiliary field and again \(\particlefield \in \sctn{\bbM,E}\) represents any other particle field.
\end{lem}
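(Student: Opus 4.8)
The statement claims that $P$ is \emph{consistently defined}, by which I understand two assertions: that the listed formulae extend to a well-defined odd derivation of ghost degree $1$ on the algebra of local functionals over the spacetime–matter bundle $\mathcal{B}$, and — the substantive part, which is what makes $P$ a \emph{BRST} operator — that $P$ is nilpotent, $P^2 = 0$. The plan is to dispatch well-definedness by a generators-and-relations argument and then to verify $P^2 = 0$ field by field, reducing the only nontrivial case (the graviton) to the matter case by a covariance observation.

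For well-definedness, note that $h_{\mu\nu}$, $\gravitonghost$, $\overline{\gravitonghost}$, $B$, the matter fields $\particlefield$ and all of their jets generate the algebra of local functionals; hence it suffices to prescribe $P$ on these generators, to impose $[P,\partial_\mu] = 0$, and to extend by the graded Leibniz rule $P(\alpha\beta) = (P\alpha)\beta + (-1)^{|\alpha|}\alpha(P\beta)$. One then only has to check that each right-hand side is an admissible section of the correct bundle with the correct ghost degree. $P h_{\mu\nu}$ is a symmetric $(0,2)$-tensor of ghost degree $1$: the expression $\nabla^{(g)}_\mu\gravitonghost_\nu + \nabla^{(g)}_\nu\gravitonghost_\mu$ is manifestly symmetric, and it is well-defined because, by \assref{ass:bdns_gf}, $g_{\mu\nu} = \eta_{\mu\nu} + \gcoupling h_{\mu\nu}$ is invertible with $g^{\mu\nu}$, and hence $\tensor{\Gamma}{_\mu _\nu ^\rho}$, a convergent power series in $\gcoupling h$. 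Similarly $P\gravitonghost^\sigma = \gravitonghost^\rho\partial_\rho\gravitonghost^\sigma$ is a vector field of ghost degree $2$, the pair $P\overline{\gravitonghost}_\sigma = B_\sigma$, $P B_\sigma = 0$ is the standard contractible one, and $P\particlefield = \Lie_\gravitonghost\particlefield$ is well-defined because the matter bundle of \defnref{defn:correspondence_Minkowski_spacetime} carries the natural action of $\diffbbM$, hence of $\vectc{\bbM}$, already used in \defnref{defn:transformation_diffeo}. The observation that organizes everything is that, since $P\eta_{\mu\nu} = 0$,
\[
    P g_{\mu\nu} = \gcoupling\, P h_{\mu\nu} = \nabla^{(g)}_\mu\gravitonghost_\nu + \nabla^{(g)}_\nu\gravitonghost_\mu = \Lie_\gravitonghost g_{\mu\nu}\,,
\]
so $P$ acts on the composite tensor $g$, exactly as on the matter fields, by $\Lie_\gravitonghost$, while on $\gravitonghost$ itself it acts by $\tfrac12[\gravitonghost,\gravitonghost]$ in the graded Lie bracket on $\vectc{\bbM}$.

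Nilpotency is then checked generator by generator. On $\overline{\gravitonghost}_\sigma$, $B_\sigma$ and $\eta_{\mu\nu}$ it is immediate: $P^2\overline{\gravitonghost}_\sigma = P B_\sigma = 0$ and $P^2 B_\sigma = P^2\eta_{\mu\nu} = 0$. On $\gravitonghost^\sigma$ it is the familiar computation
\[
    P^2\gravitonghost^\sigma = \big(\gravitonghost^\rho\partial_\rho\gravitonghost^\lambda\big)\big(\partial_\lambda\gravitonghost^\sigma\big) - \gravitonghost^\lambda\partial_\lambda\big(\gravitonghost^\rho\partial_\rho\gravitonghost^\sigma\big) = -\,\gravitonghost^\lambda\gravitonghost^\rho\,\partial_\lambda\partial_\rho\gravitonghost^\sigma = 0\,,
\]
where the two first-order terms cancel after relabelling the dummy indices $\lambda\leftrightarrow\rho$, and the remaining term vanishes because $\gravitonghost^\lambda\gravitonghost^\rho$ is antisymmetric while $\partial_\lambda\partial_\rho$ is symmetric; invariantly, this is the (graded) Jacobi identity for $[\cdot,\cdot]$ on $\vectc{\bbM}$. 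On a matter field — and, by the identity $P g_{\mu\nu} = \Lie_\gravitonghost g_{\mu\nu}$ together with $P^2\eta_{\mu\nu} = 0$, on $g_{\mu\nu}$ and hence on $h_{\mu\nu}$ — the graded Leibniz rule gives $P(\Lie_\gravitonghost\particlefield) = \Lie_{P\gravitonghost}\particlefield - \Lie_\gravitonghost\Lie_\gravitonghost\particlefield$, and the graded operator identity $[\Lie_X,\Lie_Y] = \Lie_{[X,Y]}$ specialised to $X = Y = \gravitonghost$ yields $\Lie_\gravitonghost\Lie_\gravitonghost = \tfrac12\Lie_{[\gravitonghost,\gravitonghost]} = \Lie_{P\gravitonghost}$, so that $P^2\particlefield = 0$.

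The only place requiring genuine care is the graviton sector: computing $P^2 h_{\mu\nu}$ naively means applying $P$ once more to $\tfrac1{\gcoupling}(\nabla^{(g)}_\mu\gravitonghost_\nu + \nabla^{(g)}_\nu\gravitonghost_\mu)$, which forces one to differentiate the Christoffel symbols under $P$ — and hence to track Grassmann signs through a nonpolynomial function of $\gcoupling h$. I would avoid this entirely: one never touches $P^2 h$ directly but instead computes $P^2 g_{\mu\nu}$ from $P g_{\mu\nu} = \Lie_\gravitonghost g_{\mu\nu}$, where the matter-field argument applies verbatim, and then divides by $\gcoupling$. A secondary, purely bookkeeping point is to fix once and for all the sign convention in the graded Leibniz rule for the odd operator $P$ acting through products containing the odd fields $\gravitonghost$ and $\overline{\gravitonghost}$; with that convention pinned down, all the cancellations above are unambiguous.
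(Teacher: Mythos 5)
Your proposal is correct and follows the same route as the paper: the paper's own proof is essentially the two-line assertion that $P$ acts on fields as the infinitesimal diffeomorphism of \defnref{defn:transformation_diffeo} with the ghost as parameter, with the ghost's own transformation and the trivial pair $(\overline{\gravitonghost},B)$ fixed by demanding $P^2=0$. What you add — the explicit generator-by-generator nilpotency check, in particular the reduction of the graviton case to the matter case via $P g_{\mu\nu}=\Lie_{\gravitonghost}g_{\mu\nu}$ and the graded identity $\Lie_{\gravitonghost}\Lie_{\gravitonghost}=\tfrac{1}{2}\Lie_{[\gravitonghost,\gravitonghost]}=\Lie_{P\gravitonghost}$ — is exactly the verification the paper leaves implicit, and it is carried out correctly.
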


\begin{proof}
	This follows directly from the transformation properties of \defnref{defn:transformation_diffeo}, as BRST operators are defined to induce infinitesimal transformations with respect to the corresponding ghost field, together with the nilpotency condition, i.e.\ the two properties
	\begin{subequations}
	\begin{align}
		P \particlefield := \delta_\gravitonghost \particlefield
		\intertext{and}
		P^2 \equiv 0 \, .
	\end{align}
	\end{subequations}
\end{proof}

\enter

\begin{rem} \label{rem:diffeo_invariance_EH-Lagrange_density}
	Contrary to Yang--Mills Lagrange densities, which are strictly invariant under gauge transformations, the Einstein-Hilbert Lagrange density is not invariant under general diffeomorphisms as it is a tensor density of weight \(1\). More precisely, the action of an infinitesimal diffeomorphism adds a total derivative to the Einstein-Hilbert Lagrange density, if the corresponding vector field is not Killing.
\end{rem}

\enter

\begin{defn}[Fourier transformation] \label{defn:fourier_transform}
	Let \((M,\met,\trivmap)\) be a simple spacetime with background Minkowski spacetime \((\bbM,\eta)\). Using the correspondence from \defnref{defn:correspondence_Minkowski_spacetime}, we define the Fourier transformation for particle fields, i.e.\ sections \(\particlefield \in \sctn{\bbM,E}\), as usual:
\begin{equation}
	\mathscr{F} \, : \quad \Gamma \big ( \bbM, E \big ) \to \widehat{\Gamma} \big ( \bbM, E \big ) \, , \quad \particlefield \left ( x^\alpha \right ) \mapsto \hat{\particlefield} \left ( p^\alpha \right ) := \frac{1}{\left ( 2 \pi \right )^2} \int_\sbbM \particlefield \big ( y^\beta \big ) e^{-i \eta_{\beta \gamma} y^\beta p^\gamma} \dif V_\eta
\end{equation}
\end{defn}

\section{Expansion of the Lagrange density} \label{sec:expansion_lagrange_density}

Given the Quantum General Relativity Lagrange density
\begin{equation}
	\begin{split}
	\mathcal{L}_\text{QGR} & = - \frac{1}{2 \gcoupling^2} \left ( \sqrt{- \dt{g}} R + \frac{1}{2 \zeta}  \eta^{\mu \nu} \deDonder^{(1)}_\mu \deDonder^{(1)}_\nu \right ) \dif V_\eta \\
	& \phantom{:=} - \frac{1}{2} \eta^{\rho \sigma} \left ( \frac{1}{\zeta} \overline{C}^\mu \left ( \partial_\rho \partial_\sigma C_\mu \right ) + \overline{C}^\mu \left ( \partial_\mu \big ( \tensor{\Gamma}{^\nu _\rho _\sigma} C_\nu \big ) - 2 \partial_\rho \big ( \tensor{\Gamma}{^\nu _\mu _\sigma} C_\nu \big ) \right ) \right ) \dif V_\eta
	\end{split}
\end{equation}
from \conref{con:Lagrange_density}. In order to calculate the corresponding Feynman rules, we decompose \(\mathcal{L}_\text{QGR}\) with respect to its powers in the gravitational coupling constant \(\varkappa\) and the ghost field \(C\) as follows\footnote{We omit the term \(\mathcal{L}_\text{QGR}^{-1,0}\) as it is given by a total derivative.}
\begin{equation}
	\mathcal{L}_\text{QGR} \equiv \sum_{m = 0}^\infty \sum_{n = 0}^1 \mathcal{L}_\text{QGR}^{m,n} \, ,
\end{equation}
where we have set \(\mathcal{L}_\text{QGR}^{m,n} := \eval[1]{\left ( \mathcal{L}_\text{QGR} \right )}_{O (\varkappa^m C^n)}\). Given \(m \in \mathbb{N}_+\), the restricted Lagrange densities \(\mathcal{L}_\text{QGR}^{m,0}\) correspond to the potential terms for the interaction of \(\left ( m + 2 \right )\) gravitons and the restricted Lagrange densities \(\mathcal{L}_\text{QGR}^{m,1}\) correspond to the potential terms for the interaction of \(m\) gravitons with a graviton-ghost and graviton-antighost, while the terms \(m = 0\) and \(n \in \set{0,1}\) provide the kinetic terms for the graviton and graviton-ghost, respectively. The situation for the matter-model Lagrange densities from \lemref{lem:matter-model-Lagrange-densities} is then analogous.\footnote{The shift in \(m\) comes from the prefactor \(\textfrac{1}{\gcoupling^2}\) in \(\mathcal{L}_{\text{QGR}}\) and is convenient, because then propagators are of order \(\gcoupling^0\) and three-valent vertices of order \(\gcoupling^1\), etc.}

\enter

\begin{lem}[Inverse metric as Neumann series in the graviton field] \label{lem:inverse_metric_series}
	Given the metric decomposition from \defnref{defn:md_and_gf} and the boundedness condition from \assref{ass:bdns_gf}, the inverse metric is given via the Neumann series
	\begin{equation}
	g^{\mu \nu} = \sum_{k = 0}^\infty \left ( - \gcoupling \right )^k \left ( h^k \right )^{\mu \nu} \, ,
	\end{equation}
	where
	\begin{subequations}
		\begin{align}
		h^{\mu \nu} & := \eta^{\mu \rho} \eta^{\nu \sigma} h_{\rho \sigma} \, ,\\
		\left ( h^0 \vphantom{h^k} \right )^{\mu \nu} & := \eta^{\mu \nu}\\
		\intertext{and}
		\left ( h^k \right )^{\mu \nu} & := \underbrace{h^\mu_{\kappa_1} h^{\kappa_1}_{\kappa_2} \cdots h^{\kappa_{k-1} \nu}}_{\text{\(k\)-times}} \, , \; k \in \mathbb{N} \, .
		\end{align}
	\end{subequations}
\end{lem}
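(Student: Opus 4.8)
The plan is to recognize the metric decomposition as a perturbation of the identity endomorphism and to invert it by a Neumann series, whose convergence is precisely what the boundedness condition of \assref{ass:bdns_gf} is designed to ensure. First I would raise one index of \eqnref{eqn:metric_decomposition} with the Minkowski metric, writing $\eta^{\mu\rho} g_{\rho\nu} = \delta^\mu_\nu + \gcoupling\, h^\mu_\nu$ with $h^\mu_\nu := \eta^{\mu\rho}h_{\rho\nu}$. Viewing $h^\mu_\nu$ as the components of an endomorphism field $H$ of $T\bbM$, the problem reduces, pointwise on $\bbM$, to inverting the linear map $\id + \gcoupling H$.

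Second, I would invoke the standard fact that whenever the spectral radius $\rho(\gcoupling H)$ is strictly less than one, the map $\id + \gcoupling H$ is invertible with $(\id + \gcoupling H)^{-1} = \sum_{k=0}^\infty (-\gcoupling)^k H^k$, the series converging absolutely. The hypothesis $|\gcoupling|\left\| h \right\|_{\max} = |\gcoupling| \max_{\lambda\in\operatorname{EW}(h)}|\lambda| < 1$ is exactly the assertion $\rho(\gcoupling H) < 1$ at every point, so the series converges pointwise on $\bbM$ (and, granting uniform bounds, in the sup-norm as well).

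Third, I would translate back to tensor components: contracting $(\id + \gcoupling H)^{-1}$ with $\eta^{\nu\rho}$ on the right produces a symmetric $(2,0)$-tensor which, by construction, is the two-sided inverse of $g_{\mu\nu}$, hence equals $g^{\mu\nu}$. Matching term by term, $\big((-\gcoupling)^k H^k\big)^\mu{}_\rho\, \eta^{\rho\nu} = (-\gcoupling)^k h^\mu_{\kappa_1} h^{\kappa_1}_{\kappa_2}\cdots h^{\kappa_{k-1}\nu} = (-\gcoupling)^k (h^k)^{\mu\nu}$, reproducing the claimed formula, with the $k=0$ term giving $\eta^{\mu\nu}$ as stated. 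Alternatively, once absolute convergence is secured, one can verify directly that $S^{\mu\nu} := \sum_{k} (-\gcoupling)^k (h^k)^{\mu\nu}$ satisfies $S^{\mu\nu} g_{\nu\rho} = \delta^\mu_\rho$ by multiplying out, re-indexing, and telescoping.

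The main obstacle — really the only point beyond bookkeeping — is the convergence step, which requires a little care: the mixed tensor $h^\mu_\nu = \eta^{\mu\rho}h_{\rho\nu}$ is $\eta^{-1}$ times a symmetric matrix on a pseudo-Euclidean space, so it need not itself be symmetric, its eigenvalues may a priori be complex, and it need not be diagonalizable. What rescues the argument is that the Neumann series converges as soon as the spectral radius is below one (Gelfand's formula $\rho(A) = \lim_{k\to\infty}\left\| A^k \right\|^{1/k}$ gives $\left\| (\gcoupling H)^k \right\| \lesssim (\rho(\gcoupling H)+\varepsilon)^k \to 0$), which is exactly the content of $\left\| h \right\|_{\max}$ being the maximal modulus of the eigenvalues. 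I would also flag the harmless reading that ``eigenvalues of $h$'' in \assref{ass:bdns_gf} means the eigenvalues of this endomorphism $h^\mu_\nu$, so that the stated hypothesis is the one actually used.
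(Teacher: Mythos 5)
Your proposal is correct and follows essentially the same route as the paper: the paper proves the claim by exactly the multiply-out-and-telescope verification of $g_{\mu\nu}g^{\nu\rho}=\delta_\mu^\rho$ that you list as your alternative, and then appends a one-line remark that the Neumann series converges under the stated eigenvalue bound. Your treatment of the convergence step (observing that $h^\mu_{\ \nu}=\eta^{\mu\rho}h_{\rho\nu}$ need not be symmetric or diagonalizable, so that it is the spectral-radius condition via Gelfand's formula that actually does the work) is more careful than the paper's, but it does not change the substance of the argument.
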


\begin{proof}
	We calculate
	\begin{equation}
	\begin{split}
		g_{\mu \nu} g^{\nu \rho} & = \left ( \eta_{\mu \nu} + \gcoupling h_{\mu \nu} \right ) \left ( \sum_{k = 0}^\infty \left ( - \gcoupling \right )^k \left ( h^k \right )^{\nu \rho} \right ) \\
		& = \eta_{\mu \nu} \eta^{\nu \rho} + \eta_{\mu \nu} \left ( \sum_{i = 1}^\infty \left ( - \gcoupling \right )^i \left ( h^i \right )^{\nu \rho} \right ) + \gcoupling h_{\mu \nu} \left ( \sum_{j = 0}^\infty \left ( - \gcoupling \right )^j \left ( h^j \right )^{\nu \rho} \right ) \\
		& = \delta_\mu^\rho - \gcoupling h_{\mu \nu} \left ( \sum_{i = 0}^\infty \left ( - \gcoupling \right )^i \left ( h^i \right )^{\nu \rho} \right ) + \gcoupling h_{\mu \nu} \left ( \sum_{j = 0}^\infty \left ( - \gcoupling \right )^j \left ( h^j \right )^{\nu \rho} \right ) \\
		& = \delta_\mu^\rho \, ,
	\end{split}
	\end{equation}
	as requested. Finally, we remark that the Neumann series
	\begin{equation}
		g^{\mu \nu} = \sum_{k = 0}^\infty \left ( - \gcoupling \right )^k \left ( h^k \right )^{\mu \nu}
	\end{equation}
	converges precisely for
	\begin{equation}
		\left | \gcoupling \right | \left \| h \right \|_{\max} := \left | \gcoupling \right | \max_{\lambda \in \operatorname{EW} \left ( h \right )} \left | \lambda \right | < 1 \, ,
	\end{equation}
	where \(\operatorname{EW} \left ( h \right )\) denotes the set of eigenvalues of \(h\), as stated.
\end{proof}

\enter

\begin{lem}[Vielbein and inverse vielbein as series in the graviton field] \label{lem:vielbeins_series}
	Given the metric decomposition from \defnref{defn:md_and_gf} and the boundedness condition from \assref{ass:bdns_gf}, the vielbein and inverse vielbein are given via the series
	\begin{subequations}
		\begin{align}
		e_\mu^m = \sum_{k = 0}^\infty \gcoupling^k \binom{\frac{1}{2}}{k} \left ( h^k \right )_\mu^m \, ,
		\intertext{with \(h_\mu^m := \eta^{m \nu} h_{\mu \nu}\), and}
		e_m^\mu = \sum_{k = 0}^\infty \gcoupling^k \binom{- \frac{1}{2}}{k} \left ( h^k \right )_m^\mu \, ,
		\end{align}
	\end{subequations}
	with \(h^\mu_m := \eta^{\mu \nu} \delta^\rho_m h_{\nu \rho}\).
\end{lem}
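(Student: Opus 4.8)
The plan is to identify the (symmetric) vielbein with the matrix square root of the metric and to expand that square root by Newton's generalized binomial series; the inverse vielbein is then the inverse matrix, i.e.\ the corresponding square root of the inverse metric supplied by \lemref{lem:inverse_metric_series}. Concretely, I would write \(\tensor{h}{_\mu ^\nu} := \eta^{\nu\rho}h_{\mu\rho}\) for the \((1,1)\)-version of the graviton field and let \(M\) denote the associated endomorphism of \(\mathbb{R}^4\), so that the metric with one index raised is \(\tensor{g}{_\mu ^\nu} = \delta_\mu^\nu + \gcoupling\,\tensor{h}{_\mu ^\nu}\), i.e.\ the matrix \(\mathbb{1} + \gcoupling M\). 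The defining relation \(e_\mu^m e_\nu^n \eta_{mn} = g_{\mu\nu}\) in the standard symmetric gauge --- where the vielbein endomorphism \(E\) is taken \(\eta\)-self-adjoint --- reads \(g = E^{\mathsf{T}}\eta E = \eta E^2\), hence \(E^2 = \eta^{-1}g = \mathbb{1} + \gcoupling M\); fixing the branch of the square root with \(E|_{\gcoupling = 0} = \mathbb{1}\) then determines \(E\) uniquely for \(\gcoupling\) small (which is ensured below). Correspondingly \(e_m^\mu\) is the inverse endomorphism \(E^{-1} = (\mathbb{1}+\gcoupling M)^{-1/2}\), equivalently the matching square root of \(g^{\mu\nu} = \sum_{k\geq 0}(-\gcoupling)^k (h^k)^{\mu\nu} = (\mathbb{1}+\gcoupling M)^{-1}\) from \lemref{lem:inverse_metric_series}.

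I would then apply the generalized binomial theorem \((\mathbb{1}+A)^\alpha = \sum_{k=0}^\infty \binom{\alpha}{k}A^k\) to \(A = \gcoupling M\) with \(\alpha = \tfrac12\) and \(\alpha = -\tfrac12\). Since \(A^k = \gcoupling^k M^k\) and \(M^k\) has components \((h^k)\) with the index names of the statement, reading off components gives the two claimed series verbatim. As internal consistency checks, \(e_\mu^m e_m^\nu = \delta_\mu^\nu\) is the Cauchy-product identity \((1+z)^{1/2}(1+z)^{-1/2} = 1\), while \(e_\mu^m e_\nu^n\eta_{mn} = g_{\mu\nu}\) is \(\big((\mathbb{1}+\gcoupling M)^{1/2}\big)^2 = \mathbb{1}+\gcoupling M\).

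For convergence, the scalar series \(\sum_k\binom{\pm 1/2}{k}z^k\) has radius of convergence \(1\), so the operator-valued series converge in operator norm whenever the spectral radius of \(\gcoupling M\) is less than \(1\). In an orthonormal frame \(\eta\) has operator norm \(1\), so \(\rho(\gcoupling M)\leq |\gcoupling|\,\|h\|_{\mathrm{op}} = |\gcoupling|\max_{\lambda\in\operatorname{EW}(h)}|\lambda| = |\gcoupling|\,\|h\|_{\max} < 1\) by \assref{ass:bdns_gf} --- precisely the condition, established the same way, behind \lemref{lem:inverse_metric_series}. I do not anticipate a genuine difficulty here: the content is essentially the slogan \(e = (\mathbb{1}+\gcoupling M)^{1/2}\). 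The one point that needs care is the bookkeeping of curved (``\(\mu\)'') versus flat (``\(m\)'') indices and the fact that the vielbein is fixed only after one commits to the symmetric branch; one should also note that, although \(\eta\) is indefinite, the spectral quantity governing convergence is still bounded by \(\|h\|_{\max}\), as shown above.
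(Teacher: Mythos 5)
Your proposal is correct and is essentially the paper's own argument: both amount to expanding \(\left ( \mathbb{1} + \gcoupling \, \eta^{-1} h \right )^{\pm 1/2}\) via the generalized binomial series, with the verification of \(g_{\mu \nu} = \eta_{m n} e^m_\mu e^n_\nu\) and \(e^\mu_m e^m_\nu = \delta^\mu_\nu\) reducing to Vandermonde's identity (your Cauchy-product check) and convergence governed by \assref{ass:bdns_gf}. The only difference is presentational: you derive the series as the \(\eta\)-self-adjoint matrix square root, whereas the paper simply posits the series and verifies the defining equations.
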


\begin{proof}
	We recall the defining equations for vielbeins and inverse vielbeins,
	\begin{align}
		g_{\mu \nu} & = \eta_{m n} e^m_\mu e^n_\nu
		\intertext{and}
		\eta_{m n} & = g_{\mu \nu} e^\mu_m e^\nu_n \, ,
	\end{align}
	cf.\ \cite[Definition 2.8]{Prinz_2}. Thus, we calculate
	\begin{equation}
	\begin{split}
		g_{\mu \nu} & = \eta_{m n} e^m_\mu e^n_\nu \\
		& = \eta_{m n} \left ( \sum_{i = 0}^\infty \gcoupling^i \binom{\frac{1}{2}}{i} \left ( h^i \right )_\mu^m \right ) \left ( \sum_{j = 0}^\infty \gcoupling^j \binom{\frac{1}{2}}{j} \left ( h^j \right )_\nu^n \right ) \\
		& = \sum_{i = 0}^\infty \sum_{j = 0}^\infty \gcoupling^{i + j} \binom{\frac{1}{2}}{i} \binom{\frac{1}{2}}{j} \left ( h^{i + j} \right )_{\mu \nu} \\
		& = \sum_{k = 0}^\infty \gcoupling^k \binom{1}{k} \left ( h^{k} \right )_{\mu \nu} \\
		& = \eta_{\mu \nu} + \gcoupling h_{\mu \nu} \, ,
	\end{split}
	\end{equation}
	where we have used Vandermonde's identity, and
	\begin{equation}
	\begin{split}
		g^{\mu \nu} & = \eta^{m n} e^\mu_m e^\nu_n \\
		& = \eta^{m n} \left ( \sum_{i = 0}^\infty \gcoupling^i \binom{- \frac{1}{2}}{i} \left ( h^i \right )_m^\mu \right ) \left ( \sum_{j = 0}^\infty \gcoupling^j \binom{- \frac{1}{2}}{j} \left ( h^j \right )_n^\nu \right ) \\
		& = \sum_{i = 0}^\infty \sum_{j = 0}^\infty \gcoupling^{i + j} \binom{- \frac{1}{2}}{i} \binom{- \frac{1}{2}}{j} \left ( h^{i + j} \right )^{\mu \nu} \\
		& = \sum_{k = 0}^\infty \gcoupling^k \binom{- 1}{k} \left ( h^k \right )^{\mu \nu} \\
		& = \sum_{k = 0}^\infty \left ( - \gcoupling \right )^k \left ( h^k \right )^{\mu \nu} \\
		& = g^{\mu \nu} \, ,
	\end{split}
	\end{equation}
	where we have again used Vandermonde's identity, the identity \(\binom{-1}{k} = \left ( -1 \right )^k\) and \lemref{lem:inverse_metric_series}. Finally, the series for the vielbein and inverse vielbein field converge precisely for
	\begin{equation}
		\left | \gcoupling \right | \left \| h \right \|_{\max} := \left | \gcoupling \right | \max_{\lambda \in \operatorname{EW} \left ( h \right )} \left | \lambda \right | < 1 \, ,
	\end{equation}
	where \(\operatorname{EW} \left ( h \right )\) denotes the set of eigenvalues of \(h\), as stated.
\end{proof}

\enter

\begin{prop}[Ricci scalar for the Levi-Civita connection, cf.\ \cite{Prinz_2}] \label{prop:ricci_scalar_for_the_levi_civita_connection}
	Using the Levi-Civita connection, the Ricci scalar is given via partial derivatives of the metric and its inverse as follows:
	\begin{equation} \label{eqn:ricci_scalar_metric}
	\begin{split}
		R & = g^{\mu \rho} g^{\nu \sigma} \left ( \partial_\mu \partial_\nu g_{\rho \sigma} - \partial_\mu \partial_\rho g_{\nu \sigma} \right ) \\
		& \hphantom{ = } + g^{\mu \rho} g^{\nu \sigma} g^{\kappa \lambda} \left ( \left ( \partial_\mu g_{\kappa \lambda} \right ) \left ( \partial_\nu g_{\rho \sigma} - \frac{1}{4} \partial_\rho g_{\nu \sigma} \right ) + \left ( \partial_\nu g_{\rho \kappa} \right ) \left ( \frac{3}{4} \partial_\sigma g_{\mu \lambda} - \frac{1}{2} \partial_\mu g_{\sigma \lambda} \right ) \right . \\
		& \hphantom{ = + g^{\mu \rho} g^{\nu \sigma} g^{\kappa \lambda} ( } \left . \vphantom{\left ( \frac{1}{2} \right )} - \left ( \partial_\mu g_{\rho \kappa} \right ) \left ( \partial_\nu g_{\sigma \lambda} \right ) \right )
	\end{split}
	\end{equation}
	Furthermore, we also consider the decomposition
	\begin{subequations}
	\begin{align}
		\begin{split}
		R & \equiv g^{\nu \sigma} \left ( \partial_\mu \Gamma^\mu_{\nu \sigma} - \partial_\nu \Gamma^\mu_{\mu \sigma} + \Gamma^\mu_{\mu \kappa} \Gamma^\kappa_{\nu \sigma} - \Gamma^\mu_{\nu \kappa} \Gamma^\kappa_{\mu \sigma} \right ) \\
		& =: R^{\partial \Gamma} + R^{\Gamma^2}
		\end{split}
		\intertext{with}
		R^{\partial \Gamma} & := g^{\nu \sigma} \left ( \partial_\mu \Gamma^\mu_{\nu \sigma} - \partial_\nu \Gamma^\mu_{\mu \sigma} \right ) \\
		\intertext{and}
		R^{\Gamma^2} & := g^{\nu \sigma} \left ( \Gamma^\mu_{\mu \kappa} \Gamma^\kappa_{\nu \sigma} - \Gamma^\mu_{\nu \kappa} \Gamma^\kappa_{\mu \sigma} \right ) \, .
	\end{align}
	\end{subequations}
	Then we obtain:
	\begin{equation}
	\begin{split}
		R^{\partial \Gamma} & = g^{\mu \rho} g^{\nu \sigma} \left ( \partial_\mu \partial_\nu g_{\rho \sigma} - \partial_\mu \partial_\rho g_{\nu \sigma} \right ) \\
		& \hphantom{ = } + g^{\mu \rho} g^{\nu \sigma} g^{\kappa \lambda} \left ( \left ( \partial_\mu g_{\rho \kappa} \right ) \left ( \frac{1}{2} \partial_\lambda g_{\nu \sigma} - \partial_\nu g_{\lambda \sigma} \right ) + \frac{1}{2} \left ( \partial_\nu g_{\mu \kappa} \right ) \left ( \partial_\sigma g_{\rho \lambda} \right ) \right )
	\end{split}
	\end{equation}
	and
	\begin{equation}
	\begin{split}
		R^{\Gamma^2} & = g^{\mu \rho} g^{\nu \sigma} g^{\kappa \lambda} \left ( \left ( \partial_\kappa g_{\mu \rho} \right ) \left ( \frac{1}{2} \partial_\nu g_{\sigma \lambda} - \frac{1}{4} \partial_\lambda g_{\nu \sigma} \right ) - \left ( \partial_\nu g_{\mu \kappa} \right ) \left ( \frac{1}{2} \partial_\rho g_{\sigma \lambda} - \frac{1}{4} \partial_\sigma g_{\rho \lambda} \right ) \right )
	\end{split}
	\end{equation}
\end{prop}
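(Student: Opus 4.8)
The plan is to start from the coordinate expression for the Ricci scalar induced by the sign conventions of \conref{con:sign_choices}, namely
\begin{equation*}
	R = g^{\nu \sigma} \tensor{R}{^\mu _\sigma _\mu _\nu} = g^{\nu \sigma} \left ( \partial_\mu \Gamma^\mu_{\nu \sigma} - \partial_\nu \Gamma^\mu_{\mu \sigma} + \Gamma^\mu_{\mu \kappa} \Gamma^\kappa_{\nu \sigma} - \Gamma^\mu_{\nu \kappa} \Gamma^\kappa_{\mu \sigma} \right ) \, ,
\end{equation*}
which already exhibits the split \(R = R^{\partial \Gamma} + R^{\Gamma^2}\) into the two-derivative part and the squared-Christoffel part. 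Into each piece I would substitute the Levi-Civita connection \(\Gamma^\rho_{\mu \nu} = \tfrac{1}{2} g^{\rho \sigma} \left ( \partial_\mu g_{\sigma \nu} + \partial_\nu g_{\mu \sigma} - \partial_\sigma g_{\mu \nu} \right )\) from \defnref{defn:transformation_diffeo} and expand.

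For \(R^{\Gamma^2}\) the computation is purely algebraic: each of the two products \(\Gamma \Gamma\) becomes a sum of \(3 \times 3 = 9\) terms of the schematic form \(g^{\mu \rho} g^{\nu \sigma} g^{\kappa \lambda} (\partial g)(\partial g)\). I would then relabel the dummy indices systematically and exploit the symmetry of \(g^{\nu \sigma}\) (and of the three inverse-metric factors under simultaneous permutation of their index pairs) to merge equivalent terms, collecting the result into the stated closed form with coefficients \(\tfrac{1}{2}\) and \(\tfrac{1}{4}\). For \(R^{\partial \Gamma}\) the new feature is the outer derivative \(\partial_\mu\) acting on \(\Gamma^\mu_{\nu \sigma}\): by the Leibniz rule it either hits the inverse-metric prefactor — for which I use \(\partial_\mu g^{\mu \kappa} = - g^{\mu \alpha} g^{\kappa \beta} \left ( \partial_\mu g_{\alpha \beta} \right )\), obtained by differentiating \(g^{\mu \kappa} g_{\kappa \lambda} = \delta^\mu_\lambda\) — producing \((\partial g)(\partial g)\) terms, or it hits the bracket \((\partial g - \tfrac{1}{2}\partial g)\), producing the genuine second-derivative terms \(g^{\mu \rho} g^{\nu \sigma} \left ( \partial_\mu \partial_\nu g_{\rho \sigma} - \partial_\mu \partial_\rho g_{\nu \sigma} \right )\). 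Treating \(\partial_\nu \Gamma^\mu_{\mu \sigma}\) the same way and combining gives the claimed expression for \(R^{\partial \Gamma}\). Finally I would add \(R^{\partial \Gamma} + R^{\Gamma^2}\), relabel, and collect once more to land on \eqnref{eqn:ricci_scalar_metric}, verifying en route that all the second-derivative contributions come solely from \(R^{\partial \Gamma}\).

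The conceptual content here is essentially nil; the entire difficulty is bookkeeping. The main obstacle is keeping the roughly two dozen \((\partial g)(\partial g)\) monomials organized so that dummy-index relabelings are applied consistently and the rational coefficients \(\tfrac{1}{4}, \tfrac{1}{2}, \tfrac{3}{4}\) emerge correctly — in particular the asymmetric-looking coefficients in the full formula arise only after a choice of how to symmetrize terms like \((\partial_\mu g_{\kappa \lambda})(\partial_\rho g_{\nu \sigma})\) against relabeled copies of themselves, so one must fix such a normal form at the outset. A useful sanity check at the end is to contract with \(\sqrt{-g}\) and confirm that, up to a total derivative, one recovers the familiar \(\Gamma\Gamma\)-form of the Einstein–Hilbert Lagrangian, and to cross-check against \cite{Prinz_2}.
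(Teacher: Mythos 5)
Your proposal follows essentially the same route as the paper's own proof: split \(R = R^{\partial \Gamma} + R^{\Gamma^2}\), substitute the Levi-Civita connection, expand \(R^{\Gamma^2}\) purely algebraically, and handle the outer derivatives in \(R^{\partial \Gamma}\) via the Leibniz rule together with the identity \(\left ( \partial_\rho g^{\nu \sigma} \right ) g_{\mu \sigma} = - g^{\nu \sigma} \left ( \partial_\rho g_{\mu \sigma} \right )\) obtained from differentiating \(g_{\mu \sigma} g^{\nu \sigma} = \delta_\mu^\nu\). The plan is correct and matches the paper's argument step for step.
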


\begin{proof}
	The claim is verified by the calculations
	\begin{align}
		R & = R^{\partial \Gamma} + R^{\Gamma^2}
	\intertext{with}
	\begin{split} \label{eqn:R-partial-Gamma}
		R^{\partial \Gamma} & = g^{\nu \sigma} \left ( \partial_\mu \Gamma^\mu_{\nu \sigma} - \partial_\nu \Gamma^\mu_{\mu \sigma} \right ) \\
		& = \left ( \partial_\mu g^{\mu \rho} \right ) \left ( \partial_\nu g_{\rho \sigma} - \frac{1}{2} \partial_\rho g_{\nu \sigma} \right ) - \frac{1}{2} \left ( \partial_\nu g^{\mu \rho} \right ) \left ( \partial_\sigma g_{\mu \rho} \right ) \\
		& \hphantom{ = } + g^{\mu \rho} \left ( \partial_\mu \partial_\nu g_{\rho \sigma} - \frac{1}{2} \partial_\mu \partial_\rho g_{\nu \sigma} \right ) - \frac{1}{2} g^{\mu \rho} \left ( \partial_\nu \partial_\sigma g_{\mu \rho} \right ) \\
		& = g^{\mu \rho} g^{\nu \sigma} \left ( \partial_\mu \partial_\nu g_{\rho \sigma} - \partial_\mu \partial_\rho g_{\nu \sigma} \right ) \\
		& \hphantom{ = } + g^{\mu \rho} g^{\nu \sigma} g^{\kappa \lambda} \left ( \left ( \partial_\mu g_{\rho \kappa} \right ) \left ( \frac{1}{2} \partial_\lambda g_{\nu \sigma} - \partial_\nu g_{\lambda \sigma} \right ) + \frac{1}{2} \left ( \partial_\nu g_{\mu \kappa} \right ) \left ( \partial_\sigma g_{\rho \lambda} \right ) \right )
	\end{split}
	\intertext{and}
	\begin{split} \label{eqn:R-Gamma-2}
		R^{\Gamma^2} & = g^{\nu \sigma} \left ( \Gamma^\mu_{\mu \kappa} \Gamma^\kappa_{\nu \sigma} - \Gamma^\mu_{\nu \kappa} \Gamma^\kappa_{\mu \sigma} \right ) \\
		& = g^{\mu \rho} g^{\nu \sigma} g^{\kappa \lambda} \left ( \left ( \partial_\kappa g_{\mu \rho} \right ) \left ( \frac{1}{2} \partial_\nu g_{\sigma \lambda} - \frac{1}{4} \partial_\lambda g_{\nu \sigma} \right ) - \left ( \partial_\nu g_{\mu \kappa} \right ) \left ( \frac{1}{2} \partial_\rho g_{\sigma \lambda} - \frac{1}{4} \partial_\sigma g_{\rho \lambda} \right ) \right ) \, ,
	\end{split}
	\end{align}
	where we have used \(\left ( \partial_\rho g^{\nu \sigma} \right ) g_{\mu \sigma} = - g^{\nu \sigma} \left ( \partial_\rho g_{\mu \sigma} \right )\) in \eqnref{eqn:R-partial-Gamma} twice, which results from
	\begin{equation}
	\begin{split}
		0 & = \nabla^{(g)}_\rho \delta_\mu^\nu \\
		& = \partial_\rho \delta_\mu^\nu + \tensor{\Gamma}{_\rho _\sigma ^\nu} \delta_\mu^\sigma - \tensor{\Gamma}{_\rho _\mu ^\sigma} \delta_\sigma^\nu \\
		& = \partial_\rho \delta_\mu^\nu + \tensor{\Gamma}{_\rho _\mu ^\nu} - \tensor{\Gamma}{_\rho _\mu ^\nu} \\
		& = \partial_\rho \delta_\mu^\nu \\
		& = \partial_\rho \left ( g_{\mu \sigma} g^{\nu \sigma} \right ) \\
		& = \left ( \partial_\rho g_{\mu \sigma} \right ) g^{\nu \sigma} + g_{\mu \sigma} \left ( \partial_\rho g^{\nu \sigma} \right ) \, .
	\end{split}
	\end{equation}
\end{proof}

\enter

\begin{col} \label{col:ricci_scalar_for_the_levi_civita_connection_restriction}
	Given the situation of \propref{prop:ricci_scalar_for_the_levi_civita_connection}, the grade-\(m\) part in the gravitational coupling constant \(\gcoupling\) of the Ricci scalar \(R\) is given via
	\begin{subequations}
	\begin{align}
		\eval{R^{\partial \Gamma}}_{\order{\gcoupling^0}} & = \eval{R^{\Gamma^2}}_{\order{\gcoupling^0}} = \eval{R^{\Gamma^2}}_{\order{\gcoupling^1}} = 0 \, , \\
		\eval{R^{\partial \Gamma}}_{\order{\gcoupling^1}} & = \gcoupling \eta^{\mu \rho} \eta^{\nu \sigma} \left ( \partial_\mu \partial_\nu h_{\rho \sigma} - \partial_\mu \partial_\rho h_{\nu \sigma} \right ) \\
		\intertext{and for \(m > 1\)}
		\begin{split}
			\eval{R^{\partial \Gamma}}_{\order{\gcoupling^m}} & = - \left ( - \gcoupling \right )^m \sum_{i+j = m-1} \left ( h^i \right )^{\mu \rho} \left ( h^j \right )^{\nu \sigma} \left ( \partial_\mu \partial_\nu h_{\rho \sigma} - \partial_\mu \partial_\rho h_{\nu \sigma} \right )\\
			& \hphantom{ = } + \left ( - \gcoupling \right )^m \sum_{i+j+k = m-2} \left ( h^i \right )^{\mu \rho} \left ( h^j \right )^{\nu \sigma} \left ( h^k \right )^{\kappa \lambda} \left ( \left ( \partial_\mu h_{\rho \kappa} \right ) \left ( \frac{1}{2} \partial_\lambda h_{\nu \sigma} - \partial_\nu h_{\lambda \sigma} \right ) \right . \\ & \hphantom{ = + \left ( - \gcoupling \right )^m \sum_{i+j+k = m-2} \left ( h^i \right )^{\mu \rho} \left ( h^j \right )^{\nu \sigma} \left ( h^k \right )^{\kappa \lambda} ( } \left . + \frac{1}{2} \left ( \partial_\nu h_{\mu \kappa} \right ) \left ( \partial_\sigma h_{\rho \lambda} \right ) \right )
		\end{split}
		\intertext{and}
		\begin{split}
			\eval{R^{\Gamma^2}}_{\order{\gcoupling^m}} & = \left ( - \gcoupling \right )^m \sum_{i+j+k = m-2} \left ( h^i \right )^{\mu \rho} \left ( h^j \right )^{\nu \sigma} \left ( h^k \right )^{\kappa \lambda} \left ( \left ( \partial_\kappa h_{\mu \rho} \right ) \left ( \frac{1}{2} \partial_\nu h_{\sigma \lambda} - \frac{1}{4} \partial_\lambda h_{\nu \sigma} \right ) \right . \\ & \hphantom{ = \left ( - \gcoupling \right )^m \sum_{i+j+k = m-2} \left ( h^i \right )^{\mu \rho} \left ( h^j \right )^{\nu \sigma} \left ( h^k \right )^{\kappa \lambda} ( } \! \! \! \! \! \left . - \left ( \partial_\nu h_{\mu \kappa} \right ) \left ( \frac{1}{2} \partial_\rho h_{\sigma \lambda} - \frac{1}{4} \partial_\sigma h_{\rho \lambda} \right ) \right ) \, .
		\end{split}
	\end{align}
	\end{subequations}
\end{col}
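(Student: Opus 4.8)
The plan is to substitute the metric decomposition \eqnref{eqn:metric_decomposition} and the Neumann series for the inverse metric from \lemref{lem:inverse_metric_series} directly into the closed-form expressions for $R^{\partial \Gamma}$ and $R^{\Gamma^2}$ established in \propref{prop:ricci_scalar_for_the_levi_civita_connection}, and then collect powers of the gravitational coupling $\gcoupling$. The starting observation is that, since $\partial_\rho \eta_{\mu \nu} \equiv 0$, every partial derivative acting on the metric produces exactly one factor of $\gcoupling$: $\partial_\rho g_{\mu \nu} = \gcoupling\, \partial_\rho h_{\mu \nu}$ and $\partial_\rho \partial_\sigma g_{\mu \nu} = \gcoupling\, \partial_\rho \partial_\sigma h_{\mu \nu}$. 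The second ingredient is that each inverse metric expands as $g^{\mu \nu} = \sum_{k \geq 0} (-\gcoupling)^k (h^k)^{\mu \nu}$, a series that converges absolutely by \assref{ass:bdns_gf}; this legitimizes multiplying the two or three inverse-metric factors term by term and freely rearranging the resulting double and triple series.

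Next I would do the power counting summand by summand. In $R^{\partial \Gamma}$ the term $g^{\mu \rho} g^{\nu \sigma}\big(\partial_\mu \partial_\nu g_{\rho \sigma} - \partial_\mu \partial_\rho g_{\nu \sigma}\big)$ contributes $(-\gcoupling)^{i+j}$ from the two inverse metrics and one further $\gcoupling$ from the doubly differentiated metric, so its grade-$m$ part has $i+j = m-1$ and prefactor $(-1)^{i+j}\gcoupling^m = (-1)^{m-1}\gcoupling^m = -(-\gcoupling)^m$, giving the first displayed sum. The three-inverse-metric term of $R^{\partial \Gamma}$, and likewise all of $R^{\Gamma^2}$, carries $(-\gcoupling)^{i+j+k}$ together with $\gcoupling^2$ from the two singly differentiated metrics, hence grade $i+j+k+2$ with prefactor $(-1)^{i+j+k}\gcoupling^m = (-\gcoupling)^m$ and $i+j+k = m-2$, producing the remaining two displayed sums; the explicit coefficients $\tfrac12, \tfrac14, 1$ are carried over unchanged from \propref{prop:ricci_scalar_for_the_levi_civita_connection}. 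The low-order statements then drop out: $R^{\partial \Gamma}$ has no $\order{\gcoupling^0}$ part (its minimal power is the $i=j=0$ contribution of the $\partial\partial g$ term, of order $\gcoupling^1$), $R^{\Gamma^2}$ has no part below $\order{\gcoupling^2}$, and the $\order{\gcoupling^1}$ piece of $R^{\partial \Gamma}$ is precisely the $i=j=0$ term $\gcoupling\, \eta^{\mu \rho} \eta^{\nu \sigma}\big(\partial_\mu \partial_\nu h_{\rho \sigma} - \partial_\mu \partial_\rho h_{\nu \sigma}\big)$.

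There is no genuine conceptual obstacle: the corollary is a bookkeeping consequence of the two preceding results. The points that require care are (i) tracking the signs, in particular distinguishing $(-\gcoupling)^{i+j}\gcoupling = -(-\gcoupling)^{i+j+1}$ from $(-\gcoupling)^{i+j+k}\gcoupling^2 = (-\gcoupling)^{i+j+k+2}$; (ii) justifying the term-by-term multiplication and reindexing (over compositions $i+j=m-1$, respectively $i+j+k=m-2$) of the several Neumann series, which is permissible exactly because of the absolute convergence from \assref{ass:bdns_gf}; and (iii) checking that the raised and mixed powers $(h^i)^{\mu \rho}$, $(h^j)^{\nu \sigma}$, $(h^k)^{\kappa \lambda}$ from \lemref{lem:inverse_metric_series} contract consistently with the lower-index derivative factors, which is immediate once a single representative term is written out.
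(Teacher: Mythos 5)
Your proposal is correct and follows exactly the route the paper takes: the paper's proof is the one-line statement that the corollary ``follows directly from Proposition~\ref{prop:ricci_scalar_for_the_levi_civita_connection} together with Lemma~\ref{lem:inverse_metric_series}'', and your substitution of the Neumann series into the closed-form expressions, the power counting $i+j=m-1$ versus $i+j+k=m-2$, and the sign bookkeeping $(-\gcoupling)^{i+j}\gcoupling=-(-\gcoupling)^{m}$ versus $(-\gcoupling)^{i+j+k}\gcoupling^{2}=(-\gcoupling)^{m}$ are precisely the details being elided there.
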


\begin{proof}
	This follows directly from \propref{prop:ricci_scalar_for_the_levi_civita_connection} together with \lemref{lem:inverse_metric_series}.
\end{proof}

\enter

\begin{prop}[Metric expression for the de Donder gauge fixing] \label{prop:metric_expression_for_de_donder_gauge_fixing}
	Given the square of the de Donder gauge fixing,
	\begin{equation}
		\deDonder^2 := g^{\mu \nu} \deDonder_\mu \deDonder_\nu
	\end{equation}
	with \(\deDonder_\mu := g^{\rho \sigma} \Gamma_{\rho \sigma \mu}\), this can be rewritten as
	\begin{equation}
		\deDonder^2 = g^{\mu \rho} g^{\nu \sigma} g^{\kappa \lambda} \left ( \left ( \partial_\nu g_{\sigma \mu} \right ) \left ( \partial_\kappa g_{\lambda \rho} \right ) - \left ( \partial_\nu g_{\sigma \mu} \right ) \left ( \partial_\rho g_{\kappa \lambda} \right ) + \frac{1}{4} \left ( \partial_\mu g_{\nu \sigma} \right ) \left ( \partial_\rho g_{\kappa \lambda} \right ) \right ) \, .
	\end{equation}
	Furthermore, its quadratic part is given by
	\begin{equation}
	\begin{split}
		\deDonder_{(2)}^2 & := \eval{\deDonder^2}_{\order{\gcoupling^2}} \\
		& \phantom{:} \equiv \eta^{\mu \nu} \deDonder^{(1)}_\mu \deDonder^{(1)}_\nu
	\end{split}
	\end{equation}
	with \(\deDonder^{(1)}_\mu := \eta^{\rho \sigma} \Gamma_{\rho \sigma \mu}\), and can be rewritten as
	\begin{equation}
		\deDonder_{(2)}^2 = \eta^{\mu \rho} \eta^{\nu \sigma} \eta^{\kappa \lambda} \left ( \left ( \partial_\nu g_{\sigma \mu} \right ) \left ( \partial_\kappa g_{\lambda \rho} \right ) - \left ( \partial_\nu g_{\sigma \mu} \right ) \left ( \partial_\rho g_{\kappa \lambda} \right ) + \frac{1}{4} \left ( \partial_\mu g_{\nu \sigma} \right ) \left ( \partial_\rho g_{\kappa \lambda} \right ) \right ) \, .
	\end{equation}
\end{prop}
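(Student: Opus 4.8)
The plan is to reduce everything to the explicit coordinate expression for the fully lowered Christoffel symbol and then to expand the square; the only non-routine ingredient is a careful relabelling of dummy indices.

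First I would write out $\Gamma_{\rho\sigma\mu} = \frac12\left(\partial_\rho g_{\sigma\mu} + \partial_\sigma g_{\rho\mu} - \partial_\mu g_{\rho\sigma}\right)$ and contract with $g^{\rho\sigma}$. Since $g^{\rho\sigma}$ is symmetric in $\rho\leftrightarrow\sigma$, the first two terms coincide after relabelling, so $\deDonder_\mu = g^{\rho\sigma}\left(\partial_\rho g_{\sigma\mu} - \frac12\partial_\mu g_{\rho\sigma}\right)$, and likewise $\deDonder_\nu = g^{\kappa\lambda}\left(\partial_\kappa g_{\lambda\nu} - \frac12\partial_\nu g_{\kappa\lambda}\right)$.

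Next I would form $\deDonder^2 = g^{\mu\nu}\deDonder_\mu\deDonder_\nu$ and multiply out the two binomials, obtaining four terms with coefficients $1,\,-\frac12,\,-\frac12,\,\frac14$ multiplying the common factor $g^{\mu\nu}g^{\rho\sigma}g^{\kappa\lambda}$. The key (though entirely elementary) observation is that the two cross terms are equal: exchanging the two contracted index pairs $(\rho\sigma)\leftrightarrow(\kappa\lambda)$ together with $\mu\leftrightarrow\nu$ and using the symmetry of $g^{\mu\nu}$ carries one cross term onto the other, so they combine into a single term with coefficient $-1$. A final swap $\rho\leftrightarrow\nu$ of dummy indices recasts the three surviving terms with metric factors $g^{\mu\rho}g^{\nu\sigma}g^{\kappa\lambda}$ exactly as stated, which proves the first displayed identity.

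For the quadratic part I would use $\partial_\alpha g_{\beta\gamma} = \gcoupling\,\partial_\alpha h_{\beta\gamma}$, so that each product of two derivative factors in the formula for $\deDonder^2$ is precisely of order $\gcoupling^2$; hence $\eval{\deDonder^2}_{\order{\gcoupling^2}}$ is obtained simply by replacing each inverse metric $g^{\cdots}$ by its order-$\gcoupling^0$ part $\eta^{\cdots}$, by \lemref{lem:inverse_metric_series}. To identify the result with $\eta^{\mu\nu}\deDonder^{(1)}_\mu\deDonder^{(1)}_\nu$ I would note that $\Gamma_{\rho\sigma\mu}$ with all indices lowered is exactly linear in $h$, so $\deDonder^{(1)}_\mu = \eta^{\rho\sigma}\Gamma_{\rho\sigma\mu}$ is precisely $\eval{\deDonder_\mu}_{\order{\gcoupling}}$ and therefore $\eta^{\mu\nu}\deDonder^{(1)}_\mu\deDonder^{(1)}_\nu = \eval{g^{\mu\nu}\deDonder_\mu\deDonder_\nu}_{\order{\gcoupling^2}}$. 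Substituting $\eta$'s into the already-proved formula then gives the last displayed identity. The only place that requires genuine care is the index bookkeeping in merging the cross terms and in the final relabelling; there is no analytic subtlety, since all the series entering this part of the argument are in fact finite.
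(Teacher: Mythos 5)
Your proposal is correct and follows essentially the same route as the paper: both proofs expand \(\deDonder^2\) into the explicit coordinate expression for \(g^{\rho\sigma}\Gamma_{\rho\sigma\mu}\), merge the cross terms using the symmetry of the inverse metric, and obtain the quadratic part by the observation that every product of two derivative factors is already of order \(\gcoupling^2\), so the inverse metrics are simply replaced by \(\eta\). The only cosmetic difference is that you first reduce \(\deDonder_\mu\) to two terms before squaring (giving four products) whereas the paper squares the full three-term Christoffel contraction (giving nine products); your treatment of the restriction to \(\order{\gcoupling^2}\) is in fact slightly more explicit than the paper's.
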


\begin{proof}
	The claim is verified by the calculation
	\begin{equation}
	\begin{split}
		\deDonder^2 & = g^{\mu \nu} \deDonder_\mu \deDonder_\nu \\
		& = \frac{1}{4} g^{\mu \rho} g^{\nu \sigma} g^{\kappa \lambda} \left ( \partial_\nu g_{\sigma \mu} + \partial_\sigma g_{\mu \nu} - \partial_\mu g_{\nu \sigma} \right ) \left ( \partial_\kappa g_{\lambda \rho} + \partial_\lambda g_{\rho \kappa} - \partial_\rho g_{\kappa \lambda} \right ) \\
		& = \frac{1}{4} g^{\mu \rho} g^{\nu \sigma} g^{\kappa \lambda} \left ( \left ( \partial_\nu g_{\sigma \mu} \right ) \left ( \partial_\kappa g_{\lambda \rho} \right ) + \left ( \partial_\nu g_{\sigma \mu} \right ) \left ( \partial_\lambda g_{\rho \kappa} \right ) - \left ( \partial_\nu g_{\sigma \mu} \right ) \left ( \partial_\rho g_{\kappa \lambda} \right ) \right . \\
		& \hphantom{= \frac{1}{4} g^{\mu \rho} g^{\nu \sigma} g^{\kappa \lambda} (} \left . + \left ( \partial_\sigma g_{\mu \nu} \right ) \left ( \partial_\kappa g_{\lambda \rho} \right ) + \left ( \partial_\sigma g_{\mu \nu} \right ) \left ( \partial_\lambda g_{\rho \kappa} \right ) - \left ( \partial_\sigma g_{\mu \nu} \right ) \left ( \partial_\rho g_{\kappa \lambda} \right ) \right . \\
		& \hphantom{= \frac{1}{4} g^{\mu \rho} g^{\nu \sigma} g^{\kappa \lambda} (} \left . - \left ( \partial_\mu g_{\nu \sigma} \right ) \left ( \partial_\kappa g_{\lambda \rho} \right ) - \left ( \partial_\mu g_{\nu \sigma} \right ) \left ( \partial_\lambda g_{\rho \kappa} \right ) + \left ( \partial_\mu g_{\nu \sigma} \right ) \left ( \partial_\rho g_{\kappa \lambda} \right ) \right ) \\
		& = g^{\mu \rho} g^{\nu \sigma} g^{\kappa \lambda} \left ( \left ( \partial_\nu g_{\sigma \mu} \right ) \left ( \partial_\kappa g_{\lambda \rho} \right ) - \left ( \partial_\nu g_{\sigma \mu} \right ) \left ( \partial_\rho g_{\kappa \lambda} \right ) + \frac{1}{4} \left ( \partial_\mu g_{\nu \sigma} \right ) \left ( \partial_\rho g_{\kappa \lambda} \right ) \right ) \, ,
	\end{split}
	\end{equation}
	together with the obvious restriction to \(\order{\gcoupling^2}\).
\end{proof}

\enter

\begin{col} \label{col:metric_expression_for_de_donder_gauge_fixing_restriction}
	Given the situation of \propref{prop:metric_expression_for_de_donder_gauge_fixing}, the grade-\(m\) part in the gravitational coupling constant \(\gcoupling\) of the square of the de Donder gauge fixing \(\deDonder^2\) is given for \(m < 2\) via
	\begin{subequations}
	\begin{align}
		\eval{\deDonder^2}_{\order{\gcoupling^m}} & = 0
		\intertext{and for \(m > 1\) via}
		\begin{split}
		\eval{\deDonder^2}_{\order{\gcoupling^m}} & = \left ( - \gcoupling \right )^m \sum_{i+j+k = m-2} \left ( h^i \right )^{\mu \rho} \left ( h^j \right )^{\nu \sigma} \left ( h^k \right )^{\kappa \lambda} \\ & \hphantom{\sum_{i+j+k = m-2}} \times \left ( \left ( \partial_\nu h_{\sigma \mu} \right ) \left ( \partial_\kappa h_{\lambda \rho} \right ) - \left ( \partial_\nu h_{\sigma \mu} \right ) \left ( \partial_\rho h_{\kappa \lambda} \right ) + \frac{1}{4} \left ( \partial_\mu h_{\nu \sigma} \right ) \left ( \partial_\rho h_{\kappa \lambda} \right ) \right ) \, .
		\end{split}
	\end{align}
	\end{subequations}
	In particular, the quadratic term \(\deDonder_{(2)}^2\) is given by
	\begin{equation}
	\begin{split}
		\deDonder_{(2)}^2 & := \eval{\deDonder^2}_{\order{\gcoupling^2}} \\
	& \phantom{:} = \gcoupling^2 \eta^{\mu \rho} \eta^{\nu \sigma} \eta^{\kappa \lambda} \left ( \left ( \partial_\nu h_{\sigma \mu} \right ) \left ( \partial_\kappa h_{\lambda \rho} \right ) - \left ( \partial_\nu h_{\sigma \mu} \right ) \left ( \partial_\rho h_{\kappa \lambda} \right ) + \frac{1}{4} \left ( \partial_\mu h_{\nu \sigma} \right ) \left ( \partial_\rho h_{\kappa \lambda} \right ) \right ) \, .
	\end{split}
	\end{equation}
\end{col}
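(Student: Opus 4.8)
The plan is to substitute the metric decomposition $g_{\mu \nu} = \eta_{\mu \nu} + \gcoupling h_{\mu \nu}$ directly into the closed-form expression for $\deDonder^2$ established in \propref{prop:metric_expression_for_de_donder_gauge_fixing} and then extract the homogeneous components in $\gcoupling$. The crucial observation is that the background Minkowski metric is constant, so every factor of the form $\partial_\lambda g_{\mu \nu}$ equals $\gcoupling \, \partial_\lambda h_{\mu \nu}$; since the formula for $\deDonder^2$ is exactly quadratic in such derivative factors, this immediately yields an overall prefactor $\gcoupling^2$ together with the metrics in the derivative slots replaced verbatim by gravitons.

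Next I would expand each of the three inverse-metric factors $g^{\mu \rho}$, $g^{\nu \sigma}$, $g^{\kappa \lambda}$ appearing in \propref{prop:metric_expression_for_de_donder_gauge_fixing} via the Neumann series of \lemref{lem:inverse_metric_series}, that is $g^{\mu \nu} = \sum_{k = 0}^\infty \left ( - \gcoupling \right )^k \left ( h^k \right )^{\mu \nu}$, which converges under the boundedness hypothesis of \assref{ass:bdns_gf}. Multiplying the three series and combining with the $\gcoupling^2$ coming from the derivative factors produces a triple sum over indices $i, j, k \geq 0$ with weight $\gcoupling^2 \left ( - \gcoupling \right )^{i + j + k}$ attached to $\left ( h^i \right )^{\mu \rho} \left ( h^j \right )^{\nu \sigma} \left ( h^k \right )^{\kappa \lambda}$ times the bracket of derivative terms written in the gravitons.

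Isolating the grade-$m$ part then amounts to imposing $2 + (i + j + k) = m$, i.e.\ $i + j + k = m - 2$, and rewriting the prefactor as $\gcoupling^2 \left ( - \gcoupling \right )^{m - 2} = \left ( - \gcoupling \right )^m$. For $m < 2$ the constraint $i + j + k = m - 2$ admits no non-negative solutions, hence $\eval{\deDonder^2}_{\order{\gcoupling^m}} = 0$; for $m = 2$ only $i = j = k = 0$ contributes, and using $\left ( h^0 \right )^{\mu \nu} = \eta^{\mu \nu}$ this reproduces precisely the stated expression for $\deDonder_{(2)}^2$, consistent with the identification $\deDonder_{(2)}^2 \equiv \eta^{\mu \nu} \deDonder^{(1)}_\mu \deDonder^{(1)}_\nu$ from \propref{prop:metric_expression_for_de_donder_gauge_fixing}. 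I do not expect a real obstacle here: the argument is purely power-series bookkeeping. The only points that require a little care are the $\gcoupling$-counting --- two powers from the two derivative slots versus $i + j + k$ from the three inverse-metric slots --- and checking that $\left ( -1 \right )^{m - 2} = \left ( -1 \right )^m$, so that the final prefactor is genuinely $\left ( - \gcoupling \right )^m$ as claimed.
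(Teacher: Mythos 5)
Your proposal is correct and follows exactly the route the paper takes: the paper's proof simply states that the corollary follows from \propref{prop:metric_expression_for_de_donder_gauge_fixing} together with \lemref{lem:inverse_metric_series}, which is precisely the substitution-plus-Neumann-series bookkeeping you spell out. The power counting ($\gcoupling^2$ from the two derivative slots, $(-\gcoupling)^{i+j+k}$ from the three inverse metrics, and $(-1)^{m-2}=(-1)^m$) is all handled correctly.
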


\begin{proof}
This follows directly from \propref{prop:metric_expression_for_de_donder_gauge_fixing} together with \lemref{lem:inverse_metric_series}.
\end{proof}

\enter

\begin{prop}[Determinant of the metric as a series in the graviton field] \label{prop:determinant_metric}
	Given the metric decomposition from \defnref{defn:md_and_gf}, the negative of the determinant of the metric, \(- \dt{g}\), is given via
	\begin{equation}
		- \dt{g} = 1 + \first + \second + \third + \fourth \label{eqn:determinant_metric}
	\end{equation}
	with
	{\allowdisplaybreaks
	\begin{subequations} \label{eqns:first-fourth}
	\begin{align}
		\begin{split} \label{eqn:first}
			\first & := \gcoupling \tr{\eta h}\\
			& \hphantom{ : } \equiv \gcoupling \eta^{\mu \nu} h_{\mu \nu} \, ,
		\end{split}
		\\
		\begin{split}
			\second & := \gcoupling^2 \left ( \frac{1}{2} \tr{\eta h}^2 - \frac{1}{2} \tr{\left ( \eta h \right )^2} \right )\\
			& \hphantom{ : } \equiv \gcoupling^2 \left ( \frac{1}{2} \eta^{\mu \nu} \eta^{\rho \sigma} - \frac{1}{2} \eta^{\mu \sigma} \eta^{\rho \nu} \right ) h_{\mu \nu} h_{\rho \sigma} \, ,
		\end{split}
		\\
		\begin{split}
			\third & := \gcoupling^3 \left ( \frac{1}{6} \left ( \tr{\eta h} \right )^3 - \frac{1}{2} \tr{\eta h} \tr{\left ( \eta h \right )^2} + \frac{1}{3} \tr{\left ( \eta h \right )^3} \right )\\
			& \hphantom{ : } \equiv \gcoupling^3 \left ( \frac{1}{6} \eta^{\mu \nu} \eta^{\rho \sigma} \eta^{\lambda \tau} - \frac{1}{2} \eta^{\mu \nu} \eta^{\rho \tau} \eta^{\lambda \sigma} + \frac{1}{3} \eta^{\mu \tau} \eta^{\rho \nu} \eta^{\lambda \sigma} \right ) h_{\mu \nu} h_{\rho \sigma} h_{\lambda \tau}
		\end{split}
		\intertext{and}
		\begin{split} \label{eqn:fourth}
			\fourth & := \gcoupling^4 \left ( \frac{1}{24} \left ( \tr{\eta h} \right )^4 - \frac{1}{4} \left ( \tr{\eta h} \right )^2 \tr{\left ( \eta h \right )^2} + \frac{1}{3} \tr{\eta h} \tr{\left ( \eta h \right )^3} \right .\\ & \hphantom{ := } \left . + \frac{1}{8} \left ( \tr{\left ( \eta h \right )^2} \right )^2 - \frac{1}{4} \tr{\left ( \eta h \right )^4} \right )\\
			& \hphantom{ : } \equiv \gcoupling^4 \left ( \frac{1}{24} \eta^{\mu \nu} \eta^{\rho \sigma} \eta^{\lambda \tau} \eta^{\vartheta \varphi} - \frac{1}{4} \eta^{\mu \nu} \eta^{\rho \sigma} \eta^{\lambda \varphi} \eta^{\vartheta \tau} + \frac{1}{3} \eta^{\mu \nu} \eta^{\rho \varphi} \eta^{\lambda \sigma} \eta^{\vartheta \tau} \right . \\ & \hphantom{ := ( } \left . + \frac{1}{8} \eta^{\mu \sigma} \eta^{\rho \nu} \eta^{\lambda \varphi} \eta^{\vartheta \tau} - \frac{1}{4} \eta^{\mu \varphi} \eta^{\rho \nu} \eta^{\lambda \sigma} \eta^{\vartheta \tau} \right ) h_{\mu \nu} h_{\rho \sigma} h_{\lambda \tau} h_{\vartheta \varphi} \, .
		\end{split}
	\end{align}
	\end{subequations}
	}
	\end{prop}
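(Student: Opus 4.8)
The plan is to reduce the statement to the standard expansion of the determinant of a \(4\times 4\) matrix in terms of traces of its powers. First I would factor the metric in matrix form as \(g = \eta\,(\id + \gcoupling\,\eta h)\), where \(\id\) is the \(4\times 4\) identity and \(\eta h\) denotes the endomorphism with components \(\tensor{h}{^\mu _\nu} := \eta^{\mu\rho} h_{\rho\nu}\). Multiplicativity of \(\dt{\cdot}\) then gives \(\dt{g} = \dt{\eta}\,\dt{\id + \gcoupling\,\eta h}\), and since the Minkowski signature is \((-++\,+)\) we have \(\dt{\eta} = -1\), whence
\[
  - \dt{g} = \dt{\id + \gcoupling\,\eta h}\,.
\]
In particular the constant term is \(1\), matching \eqnref{eqn:determinant_metric}.

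Next I would invoke the exact (and, in four dimensions, finite) expansion
\[
  \dt{\id + \gcoupling\,\eta h} = \sum_{k=0}^{4} \gcoupling^k\, e_k \, ,
\]
where \(e_k\) is the \(k\)-th elementary symmetric polynomial in the eigenvalues of \(\eta h\), and rewrite each \(e_k\) in terms of the power sums \(p_j := \tr{(\eta h)^j}\) by the Newton--Girard identities: \(e_0 = 1\), \(e_1 = p_1\), \(e_2 = \tfrac12\big(p_1^2 - p_2\big)\), \(e_3 = \tfrac16\big(p_1^3 - 3 p_1 p_2 + 2 p_3\big)\) and \(e_4 = \tfrac{1}{24}\big(p_1^4 - 6 p_1^2 p_2 + 3 p_2^2 + 8 p_1 p_3 - 6 p_4\big)\). (Equivalently, one may write \(\dt{\id + \gcoupling\,\eta h} = \exp \tr{\log(\id + \gcoupling\,\eta h)}\) and truncate, but the symmetric-polynomial route makes the finiteness manifest.) Reading off \(\first = \gcoupling\, e_1\), \(\second = \gcoupling^2 e_2\), \(\third = \gcoupling^3 e_3\) and \(\fourth = \gcoupling^4 e_4\) — with \(\tr{\eta h} \equiv p_1\), \(\tr{(\eta h)^2} \equiv p_2\), and so on — reproduces the trace forms in \eqnsref{eqns:first-fourth} verbatim.

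It then remains to pass from traces to explicit index expressions. For this I would substitute \(p_1 = \eta^{\mu\nu} h_{\mu\nu}\), \(p_2 = \eta^{\mu\sigma}\eta^{\rho\nu} h_{\mu\nu} h_{\rho\sigma}\), and likewise write \(p_3\) and \(p_4\) as the obvious single cyclic \(\eta\)-contractions of three, respectively four, copies of \(h\); plug these into the monomials \(p_1^2,\, p_1 p_2,\, p_1 p_3,\, p_2^2,\, p_4,\,\ldots\) appearing in \(e_2,e_3,e_4\); and relabel dummy indices so that each term carries the common factor \(h_{\mu\nu} h_{\rho\sigma}\cdots\). Collecting coefficients yields the tensors displayed in \eqnstref{eqn:first}{eqn:fourth}. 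The only point needing a little care — and the only real (if mild) obstacle — is that these coefficient tensors are merely \emph{representatives} modulo the symmetries \(h_{\mu\nu} = h_{\nu\mu}\) and the permutation symmetry among the pairs \((\mu\nu),(\rho\sigma),\dots\); since contracting any symmetry-equivalent coefficient tensor against the symmetric product \(h_{\mu\nu}h_{\rho\sigma}\cdots\) gives the same scalar, it suffices to exhibit one representative, which is what the statement does. No ingredient beyond multiplicativity of the determinant and Newton's identities is required.
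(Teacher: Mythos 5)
Your proposal is correct and follows essentially the same route as the paper: both reduce to \(-\dt{g} = \dt{\id + \gcoupling\,\eta h}\) via multiplicativity of the determinant and \(\dt{\eta} = -1\), and both rely on Newton's identities to express the result through traces of powers of \(\eta h\). The only cosmetic difference is the order of operations — you expand in the elementary symmetric polynomials \(e_k\) first and then convert each to power sums, whereas the paper applies the trace formula for the full \(4 \times 4\) determinant to \(\delta + \gcoupling\,\eta h\) and afterwards collects powers of \(\gcoupling\); your coefficients for \(e_1,\dots,e_4\) agree with those in the statement.
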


\begin{proof}
	Given a \(4 \times 4\)-matrix \(\mtx \in \operatorname{Mat}_\mathbb{C} \left ( 4 \times 4 \right )\), from Newton's identities we get the relation
	\begin{equation} \label{eqn:Newtons_identities}
	\begin{split}
		\dt{\mtx} & = \frac{1}{4!} \operatorname{Det} \begin{pmatrix} \tr{\mtx} & 1 & 0 & 0 \\ \tr{\mtx^2} & \tr{\mtx} & 2 & 0 \\ \tr{\mtx^3} & \tr{\mtx^2} & \tr{\mtx} & 3 \\ \tr{\mtx^4} & \tr{\mtx^3} & \tr{\mtx^2} & \tr{\mtx} \end{pmatrix} \\
		& = \frac{1}{4!} \left ( \tr{\mtx}^4 - 6 \tr{\mtx}^2 \tr{\mtx^2} + 8 \tr{\mtx} \tr{\mtx^3} \right . \\
		& \hphantom{= \frac{1}{4!} (} \left . + 3 \tr{\mtx^2}^2 - 6 \tr{\mtx^4} \right ) \, .
	\end{split}
	\end{equation}
	Next, using the metric decomposition \(g = \eta + \gcoupling h\), we obtain\footnote{In accordance with index-notation, we set \(\delta\) to be the unit matrix.}
	\begin{equation}
	\begin{split}
		- \dt{g} & = - \dt{\eta + \gcoupling h}\\
		& = - \dt{\eta} \dt{\delta + \gcoupling \eta^{-1} h}\\
		& = \dt{\delta + \gcoupling \eta h} \, ,		
	\end{split}
	\end{equation}
	where we have used \(\dt{\eta} = - 1\) and \(\eta^{-1} = \eta\). Setting \(\mtx := \delta + \gcoupling \eta h\), using the linearity and cyclicity of the trace and the fact that \(\tr{\delta} = 4\), we get
	\begin{align}
		\tr{\delta + \gcoupling \eta h} & = 4 + \gcoupling \tr{\eta h}\\
		\tr{\left ( \delta + \gcoupling \eta h \right )^2} & = 4 + 2 \gcoupling \tr{\eta h} + \gcoupling^2 \tr{\left ( \eta h \right )^2}\\
		\tr{\left ( \delta + \gcoupling \eta h \right )^3} & = 4 + 3 \gcoupling \tr{\eta h} + 3 \gcoupling^2 \tr{\left ( \eta h \right )^2} + \gcoupling^3 \tr{\left ( \eta h \right )^3}\\
		\tr{\left ( \delta + \gcoupling \eta h \right )^4} & = 4 + 4 \gcoupling \tr{\eta h} + 6 \gcoupling^2 \tr{\left ( \eta h \right )^2} + 4 \gcoupling^3 \tr{\left ( \eta h \right )^3} + \gcoupling^4 \tr{\left ( \eta h \right )^4} \, .
	\end{align}
	Combining these results, we obtain
	\begin{equation}
	\begin{split}
		- \dt{g} & = 1 + \gcoupling \tr{\eta h} + \gcoupling^2 \left ( \frac{1}{2} \tr{\eta h}^2 - \frac{1}{2} \tr{\left ( \eta h \right )^2} \right ) \\
		& \hphantom{ = } + \gcoupling^3 \left ( \frac{1}{6} \tr{\eta h}^3 - \frac{1}{2} \tr{\eta h} \tr{\left ( \eta h \right )^2} + \frac{1}{3} \tr{\left ( \eta h \right )^3} \right ) \\
		& \hphantom{ = } + \gcoupling^4 \left ( \frac{1}{24} \tr{\eta h}^4 - \frac{1}{4} \tr{\eta h}^2 \tr{\eta h^2} + \frac{1}{3} \tr{\eta h} \tr{\left ( \eta h \right )^3} \right . \\ & \hphantom{ =  + \gcoupling^4 (} \left . + \frac{1}{8} \tr{\left ( \eta h \right )^2}^2 - \frac{1}{4} \tr{\left ( \eta h \right )^4} \right ) \, ,
	\end{split}
	\end{equation}
	which, when restricting to the powers in the coupling constant, yields the claimed result.
\end{proof}

\enter

\begin{col} \label{col:determinant_metric_restriction}
	Given the situation of \propref{prop:determinant_metric} and assume furthermore the boundedness condition from \assref{ass:bdns_gf}, the grade-\(m\) part in the gravitational coupling constant \(\gcoupling\) of the square-root of the negative of the determinant of the metric, \(- \dt{g}\), is given via
	\begin{equation}
	\begin{split}
		\eval{\sqrt{- \dt{g}}}_{\order{\gcoupling^m}} & = \sum_{\substack{i + j + k + l = m\\i \geq j \geq k \geq l \geq 0}} \sum_{p = 0}^{j - k} \sum_{q = 0}^{k - l} \sum_{r = 0}^{q} \sum_{s = 0}^l \sum_{t = 0}^s \sum_{u = 0}^t \sum_{v = 0}^u \\
		& \hphantom{ = } \binom{\frac{1}{2}}{i} \binom{i}{j} \binom{j}{k} \binom{k}{l} \binom{j - k}{p} \binom{k - l}{q} \binom{q}{r} \binom{l}{s} \binom{s}{t} \binom{t}{u} \binom{u}{v} \\
		& \hphantom{ = } \times \left ( - 1 \right )^{p + q - r + s - t + v} 2^{- j + l + r + s + 2t - 3u + v} 3^{- k + q - r + s - t + u} \\
		& \hphantom{ = } \times \mathfrak{a}^{i + j + k + l - 2p - 2q - r - 2s - t - u} \mathfrak{b}^{p + q - r + s - t + 2u - 2v} \mathfrak{c}^{r + t - u} \mathfrak{d}^v
	\end{split}
	\end{equation}
	with
	\begin{subequations} \label{eqns:varfirst-varfourth}
	\begin{align}
		\begin{split}
			\mathfrak{a} & := \gcoupling \tr{\eta h} \\
			& \hphantom{ : } \equiv \gcoupling \eta^{\mu \nu} h_{\mu \nu} \, ,
		\end{split}
		\\
		\begin{split}
			\mathfrak{b} & := \gcoupling^2 \tr{\left ( \eta h \right )^2} \\
			& \hphantom{ : } \equiv \gcoupling^2 \eta^{\mu \sigma} \eta^{\rho \nu} h_{\mu \nu} h_{\rho \sigma} \, ,
		\end{split}
		\\
		\begin{split}
			\mathfrak{c} & := \gcoupling^3 \tr{\left ( \eta h \right )^3} \\
			& \hphantom{ : } \equiv \gcoupling^3 \eta^{\mu \tau} \eta^{\rho \nu} \eta^{\lambda \sigma} h_{\mu \nu} h_{\rho \sigma} h_{\lambda \tau}
		\end{split}
		\intertext{and}
		\begin{split}
			\mathfrak{d} & := \gcoupling^4 \tr{\left ( \eta h \right )^4} \\
			& \hphantom{ : } \equiv \gcoupling^4 \eta^{\mu \varphi} \eta^{\rho \nu} \eta^{\lambda \sigma} \eta^{\vartheta \tau} h_{\mu \nu} h_{\rho \sigma} h_{\lambda \tau} h_{\vartheta \varphi} \, .
		\end{split}
	\end{align}
	\end{subequations}
\end{col}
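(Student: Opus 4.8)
The plan is to feed the determinant expansion of \propref{prop:determinant_metric} into the generalized binomial series for the square root and then reorganize the resulting combinatorial sum so that it is expressed through the pure-trace monomials $\mathfrak{a}, \mathfrak{b}, \mathfrak{c}, \mathfrak{d}$ of \eqnref{eqns:varfirst-varfourth}.

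First I would note that by \propref{prop:determinant_metric} we have $-\dt{g} = 1 + \first + \second + \third + \fourth$, where $\first, \second, \third, \fourth$ are homogeneous of $\gcoupling$-grade $1, 2, 3, 4$, respectively. Thus $-\dt{g}$ is $1$ plus terms of positive grade, so its square root is a well-defined formal power series in $\gcoupling$ whose convergence is guaranteed by \assref{ass:bdns_gf}. Applying Newton's generalized binomial theorem,
\[
\sqrt{-\dt{g}} = \left ( 1 + \first + \second + \third + \fourth \right )^{1/2} = \sum_{i = 0}^{\infty} \binom{\frac{1}{2}}{i} \left ( \first + \second + \third + \fourth \right )^{i} \, .
\]
Next I would expand $\left ( \first + \second + \third + \fourth \right )^{i}$ by the multinomial theorem and record the multiplicities of $\first, \second, \third, \fourth$ as $i - j$, $j - k$, $k - l$ and $l$, with $i \geq j \geq k \geq l \geq 0$; the corresponding multinomial coefficient telescopes into $\binom{i}{j}\binom{j}{k}\binom{k}{l}$, and the $\gcoupling$-grade of the resulting monomial is $(i-j) + 2(j-k) + 3(k-l) + 4l = i + j + k + l$. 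Hence the $\order{\gcoupling^m}$-part is
\[
\eval{\sqrt{-\dt{g}}}_{\order{\gcoupling^m}} = \sum_{\substack{i + j + k + l = m \\ i \geq j \geq k \geq l \geq 0}} \binom{\frac{1}{2}}{i} \binom{i}{j} \binom{j}{k} \binom{k}{l} \, \first^{i-j} \second^{j-k} \third^{k-l} \fourth^{l} \, ,
\]
which already accounts for the outer summation and the first four binomial factors of the claim.

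Then, comparing the trace expressions in \eqnref{eqns:first-fourth} with the definitions in \eqnref{eqns:varfirst-varfourth} yields the elementary identities
\[
\first = \mathfrak{a} \, , \qquad \second = \tfrac{1}{2}\left ( \mathfrak{a}^{2} - \mathfrak{b} \right ) \, , \qquad \third = \tfrac{1}{6}\left ( \mathfrak{a}^{3} - 3\mathfrak{a}\mathfrak{b} + 2\mathfrak{c} \right ) \, , \qquad \fourth = \tfrac{1}{24}\left ( \mathfrak{a}^{4} - 6\mathfrak{a}^{2}\mathfrak{b} + 8\mathfrak{a}\mathfrak{c} + 3\mathfrak{b}^{2} - 6\mathfrak{d} \right ) \, .
\]
Substituting these and expanding each power by iterated use of the ordinary binomial theorem --- $\second^{j-k}$ with one index $p$; $\third^{k-l}$ by peeling off the $\mathfrak{c}$-term first and the remaining two terms afterwards, with indices $q$ and then $r$; and $\fourth^{l}$ by successively peeling off the $\mathfrak{d}$-term, the $\mathfrak{a}\mathfrak{c}$-term and the two $\mathfrak{b}$-containing terms, with indices $s, t, u, v$ --- then collecting the rational prefactors into powers of $2$, $3$ and $-1$ and the monomials into powers of $\mathfrak{a}, \mathfrak{b}, \mathfrak{c}, \mathfrak{d}$, and finally reindexing, produces exactly the seven inner sums $0 \leq p \leq j-k$, $0 \leq r \leq q \leq k-l$ and $0 \leq v \leq u \leq t \leq s \leq l$ together with the stated exponents.

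The conceptual content is exhausted by the first two steps; the \emph{main obstacle} is the bookkeeping of the last step. Concretely, one must choose the nesting order of the iterated binomial expansions of the trinomial $\third$ and the five-term expression $\fourth$ so that precisely the stated index ranges appear --- rather than the more symmetric but less convenient multinomial indices --- and then verify by direct accounting, using only Vandermonde-type binomial identities, that the accumulated exponents of $2$, $3$, $-1$ and of $\mathfrak{a}, \mathfrak{b}, \mathfrak{c}, \mathfrak{d}$ coincide with the closed forms in the statement. I expect no difficulty beyond this careful tracking, with \propref{prop:determinant_metric} and \assref{ass:bdns_gf} supplying everything else, the latter being exactly the convergence condition already discussed in \lemref{lem:inverse_metric_series}.
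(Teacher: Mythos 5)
Your proposal follows essentially the same route as the paper's proof: expand \(\sqrt{-\dt{g}}\) via the generalized binomial series applied to \(1 + \first + \second + \third + \fourth\), telescope the multinomial coefficients into \(\binom{i}{j}\binom{j}{k}\binom{k}{l}\) with the grade condition \(i+j+k+l=m\), rewrite \(\first,\dots,\fourth\) through Newton's identities in terms of \(\mathfrak{a},\mathfrak{b},\mathfrak{c},\mathfrak{d}\), and finish with seven further iterated binomial expansions yielding the indices \(p,\dots,v\). The paper likewise leaves the final seven-fold expansion as bookkeeping, so your level of detail matches; only your stated peeling order for \(\third^{k-l}\) (the index \(q\) actually separates the \(\mathfrak{a}^3\)-term, with \(r\) then counting the \(\mathfrak{c}\)-copies) needs the adjustment you yourself anticipate.
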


\begin{proof}
	We use \eqnref{eqn:determinant_metric},
	\begin{equation}
		- \dt{g} = 1 + \first + \second + \third + \fourth \, ,
	\end{equation}
	and plug it into the Taylor series of the square-root around \(x = 0\),\footnote{Here we need the assumption \(\left | \gcoupling \right | \left \| h \right \|_{\max} := \left | \gcoupling \right | \max_{\lambda \in \operatorname{EW} \left ( h \right )} \left | \lambda \right | < 1\), where \(\operatorname{EW} \left ( h \right )\) denotes the set of eigenvalues of \(h\), to assure convergence.}
	\begin{equation}
		\sqrt{x} = \sum_{i = 0}^\infty \binom{\frac{1}{2}}{i} \left ( x - 1 \right )^i \, ,
	\end{equation}
	to obtain
	\begin{equation}
		\sqrt{- \dt{g}} = \sum_{i = 0}^\infty \binom{\frac{1}{2}}{i} \left ( \first + \second + \third + \fourth \right )^i \, .
	\end{equation}
	Applying the binomial theorem iteratively three times, we get
	\begin{equation}
	\begin{split}
		\sqrt{- \dt{g}} & = \sum_{i = 0}^\infty \binom{\frac{1}{2}}{i} \left ( \first + \second + \third + \fourth \right )^i\\
		& = \sum_{i = 0}^\infty \sum_{j = 0}^i \binom{\frac{1}{2}}{i} \binom{i}{j} \first^{i - j} \left ( \second + \third + \fourth \right )^j\\
		& = \sum_{i = 0}^\infty \sum_{j = 0}^i \sum_{k = 0}^j \binom{\frac{1}{2}}{i} \binom{i}{j} \binom{j}{k} \first^{i - j} \second^{j - k} \left ( \third + \fourth \right )^k\\
		& = \sum_{i = 0}^\infty \sum_{j = 0}^i \sum_{k = 0}^j \sum_{l = 0}^k \binom{\frac{1}{2}}{i} \binom{i}{j} \binom{j}{k} \binom{k}{l} \first^{i - j} \second^{j - k} \third^{k - l} \fourth^l \, .
	\end{split}
	\end{equation}
	Observe, that from \eqnsaref{eqn:determinant_metric}{eqns:first-fourth} we have the relations
	\begin{subequations}
	\begin{align}
		\eval{- \dt{g}}_{\order{\gcoupling}} & \equiv \first \\
		\eval{- \dt{g}}_{\order{\gcoupling^2}} & \equiv \second \\
		\eval{- \dt{g}}_{\order{\gcoupling^3}} & \equiv \third
		\intertext{and}
		\eval{- \dt{g}}_{\order{\gcoupling^4}} & \equiv \fourth \, ,
	\end{align}
	\end{subequations}
	and thus the restriction to the grade-\(m\) part in the gravitational coupling constant \(\gcoupling\) is given via the integer solutions to
	\begin{equation}
	\begin{split}
		m & \overset{!}{=} i - j + 2 j - 2 k + 3 k - 3 l + 4 l\\
		& = i + j + k + l
	\end{split}
	\end{equation}
	with \(i \geq j \geq k \geq l\), i.e.\
	\begin{equation} \label{eqn:riemannean_volume_form_grade_m}
		\eval{\sqrt{- \dt{g}}}_{\order{\gcoupling^m}} = \sum_{\substack{i + j + k + l = m\\i \geq j \geq k \geq l \geq 0}} \binom{\frac{1}{2}}{i} \binom{i}{j} \binom{j}{k} \binom{k}{l} \first^{i - j} \second^{j - k} \third^{k - l} \fourth^l \, .
	\end{equation}
	Finally, using Newton's identities, i.e.\ the relations from \eqnsaref{eqns:first-fourth}{eqns:varfirst-varfourth},
	\begin{subequations}
	\begin{align}
		\first & \equiv \mathfrak{a} \, , \\
		\second & \equiv \frac{1}{2} \mathfrak{a}^2 - \frac{1}{2} \mathfrak{b} \, , \\
		\third & \equiv \frac{1}{6} \mathfrak{a}^3 - \frac{1}{2} \mathfrak{a} \mathfrak{b} + \frac{1}{3} \mathfrak{c}
		\intertext{and}
		\fourth & \equiv \frac{1}{24} \mathfrak{a}^4 - \frac{1}{4} \mathfrak{a}^2 \mathfrak{b} + \frac{1}{3} \mathfrak{a} \mathfrak{c} + \frac{1}{8} \mathfrak{b}^2 - \frac{1}{4} \mathfrak{d} \, ,
	\end{align}
	\end{subequations}
	we obtain, using again the Binomial theorem iteratively seven times,
	\begin{equation}
	\begin{split}
		\first^{i - j} \second^{j - k} \third^{k - l} \fourth^l & = \sum_{p = 0}^{j - k} \sum_{q = 0}^{k - l} \sum_{r = 0}^{q} \sum_{s = 0}^l \sum_{t = 0}^s \sum_{u = 0}^t \sum_{v = 0}^u\\
		& \hphantom{ = } \binom{j - k}{p} \binom{k - l}{q} \binom{q}{r} \binom{l}{s} \binom{s}{t} \binom{t}{u} \binom{u}{v}\\
		& \hphantom{ = } \times \left ( - 1 \right )^{p + q - r + s - t + v} 2^{- j + l + r + s + 2t - 3u + v} 3^{- k + q - r + s - t + u} \\
		& \hphantom{ = } \times \mathfrak{a}^{i + j + k + l - 2p - 2q - r - 2s - t - u} \mathfrak{b}^{p + q - r + s - t + 2u - 2v} \mathfrak{c}^{r + t - u} \mathfrak{d}^v \, ,
	\end{split}
	\end{equation}
	and thus finally
	\begin{equation}
	\begin{split}
		\eval{\sqrt{- \dt{g}}}_{\order{\gcoupling^m}} & = \sum_{\substack{i + j + k + l = m\\i \geq j \geq k \geq l \geq 0}} \sum_{p = 0}^{j - k} \sum_{q = 0}^{k - l} \sum_{r = 0}^{q} \sum_{s = 0}^l \sum_{t = 0}^s \sum_{u = 0}^t \sum_{v = 0}^u \\
		& \hphantom{ = } \binom{\frac{1}{2}}{i} \binom{i}{j} \binom{j}{k} \binom{k}{l} \binom{j - k}{p} \binom{k - l}{q} \binom{q}{r} \binom{l}{s} \binom{s}{t} \binom{t}{u} \binom{u}{v} \\
		& \hphantom{ = } \times \left ( - 1 \right )^{p + q - r + s - t + v} 2^{- j + l + r + s + 2t - 3u + v} 3^{- k + q - r + s - t + u} \\
		& \hphantom{ = } \times \mathfrak{a}^{i + j + k + l - 2p - 2q - r - 2s - t - u} \mathfrak{b}^{p + q - r + s - t + 2u - 2v} \mathfrak{c}^{r + t - u} \mathfrak{d}^v \, ,
	\end{split}
	\end{equation}
	as claimed.
\end{proof}

\section{Feynman rules}

Given the Quantum General Relativity Lagrange density
\begin{equation}
	\begin{split}
	\mathcal{L}_\text{QGR} & = - \frac{1}{2 \gcoupling^2} \left ( \sqrt{- \dt{g}} R + \frac{1}{2 \zeta}  \eta^{\mu \nu} \deDonder^{(1)}_\mu \deDonder^{(1)}_\nu \right ) \dif V_\eta \\
	& \phantom{:=} - \frac{1}{2} \eta^{\rho \sigma} \left ( \frac{1}{\zeta} \overline{C}^\mu \left ( \partial_\rho \partial_\sigma C_\mu \right ) + \overline{C}^\mu \left ( \partial_\mu \big ( \tensor{\Gamma}{^\nu _\rho _\sigma} C_\nu \big ) - 2 \partial_\rho \big ( \tensor{\Gamma}{^\nu _\mu _\sigma} C_\nu \big ) \right ) \right ) \dif V_\eta
	\end{split}
\end{equation}
from \conref{con:Lagrange_density} and the decomposition into its powers in the gravitational coupling constant \(\varkappa\) and the ghost field \(C\)
\begin{equation}
	\mathcal{L}_\text{QGR} \equiv \sum_{m = 0}^\infty \sum_{n = 0}^1 \mathcal{L}_\text{QGR}^{m,n}
\end{equation}
from the introduction of \sectionref{sec:expansion_lagrange_density}. Then, we extend the Lagrange densities \(\mathcal{L}_\text{QGR}^{m,n}\) for given \(m \in \mathbb{N}_+\), which were interpreted in the introduction of \sectionref{sec:expansion_lagrange_density} as potential terms for either \(\left ( m + 2 \right )\) gravitons or \(m\) gravitons and a graviton-ghost and graviton-antighost, to either \(\left ( m + 2 \right )\) distinguishable gravitons or \(m\) distinguishable gravitons and a graviton-ghost and graviton-antighost via symmetrization, depending on \(n \in \set{0,1}\). This then reflects the bosonic character of gravitons and allows the calculation of the corresponding Feynman rules as the remaining matrix elements of these potential terms. We start by introducing the notation and then present the Feynman rules.

\enter

\begin{defn}
	We denote the graviton \(m\)-point vertex Feynman rule with ingoing momenta \(\set{p_1^\sigma, \cdots, p_m^\sigma}\) via \(\gravfr_m^{\mu_1 \nu_1 \vert \cdots \vert \mu_m \nu_m} \left ( p_1^\sigma, \cdots, p_m^\sigma \right )\).\footnote{The vertical bars in \(\mu_1 \nu_1 \vert \cdots \vert \mu_m \nu_m\) are added solely for better readability.} It is defined as follows:
	\begin{equation}
		\gravfr_m^{\mu_1 \nu_1 \vert \cdots \vert \mu_m \nu_m} \left ( p_1^\sigma, \cdots, p_m^\sigma \right ) := \imaginary \left ( \prod_{i = 1}^m \frac{\bar{\delta}}{\bar{\delta} \hat{h}_{\mu_i \nu_i}} \right ) \mathscr{F} \left ( \overline{\mathcal{L}}_\text{QGR}^{(m-2),0} \right ) \, ,
	\end{equation}
	where the prefactor \(\imaginary\) is a convention from the path integral, \(\textfrac{\bar{\delta}}{\bar{\delta} \hat{h}_{\mu_i \nu_i}}\) denotes the symmetrized functional derivative with respect to the Fourier transformed graviton field \(\hat{h}_{\mu_i \nu_i}\) together with the additional agreement (represented by the bar \(\textfrac{\bar{\delta}}{\bar{\delta} \cdot}\)) that the possible preceding momentum is also labelled by the particle number \(i\), e.g.\
	\begin{equation}
		\frac{\bar{\delta}}{\bar{\delta} \hat{h}_{\mu_i \nu_i}} \left ( p_\kappa \hat{h}_{\rho \sigma} \right ) := \frac{1}{2} p_\kappa^i \left ( \hat{\delta}_\rho^{\mu_i} \hat{\delta}_\sigma^{\nu_i} + \hat{\delta}_\sigma^{\mu_i} \hat{\delta}_\rho^{\nu_i} \right ) \, ,
	\end{equation}
	and \(\overline{\mathcal{L}}_\text{QGR}^{(m-2),0}\) is the symmetrized extension of \(\mathcal{L}_\text{QGR}^{(m-2),0}\) to \(m\) distinguishable gravitons. Furthermore, we denote the graviton propagator Feynman rule with momentum \(p^\sigma\), gauge parameter \(\zeta\) and regulator for Landau singularities \(\epsilon\) via \(\gravprop_{\mu_1 \nu_1 \vert \mu_2 \nu_2} \left ( p^\sigma; \zeta; \epsilon \right )\). It is defined as the inverse of the matrix element for the graviton kinetic term:\footnote{We use momentum conservation to set \(p_1^\sigma := p^\sigma\) and \(p_2^\sigma := - p^\sigma\) in the expression \(\gravfr_2^{\mu_2 \nu_2 \vert \mu_3 \nu_3} \left ( p_1^\sigma, p_2^\sigma; \zeta \right )\).}
	\begin{equation}
		\gravprop_{\mu_1 \nu_1 \vert \mu_2 \nu_2} \left ( p^\sigma; \zeta; 0 \right ) \gravfr_2^{\mu_2 \nu_2 \vert \mu_3 \nu_3} \left ( p^\sigma; \zeta \right ) = \frac{1}{2} \left ( \hat{\delta}_{\mu_1}^{\mu_3} \hat{\delta}_{\nu_1}^{\nu_3} + \hat{\delta}_{\mu_1}^{\nu_3} \hat{\delta}_{\nu_1}^{\mu_3} \right ) \, ,
	\end{equation}
	where each tuple \(\mu_i \nu_i\) is treated as one index, which excludes the a priori possible term \(\hat{\eta}_{\mu_1 \nu_1} \hat{\eta}^{\mu_3 \nu_3}\) on the right-hand side. Moreover, we denote the graviton-ghost \(m\)-point vertex Feynman rule with ingoing momenta \(\set{p_1^\sigma, \cdots, p_m^\sigma}\) via \(\ghostfr_m^{\rho_1 \vert \rho_2 \| \mu_3 \nu_3 \vert \cdots \vert \mu_m \nu_m} \left ( p_1^\sigma, \cdots, p_m^\sigma \right )\), where particle 1 is the graviton-ghost, particle 2 is the graviton-antighost and the rest are gravitons. It is defined as follows:
	\begin{equation}
		\ghostfr_m^{\rho_1 \vert \rho_2 \| \mu_3 \nu_3 \vert \cdots \vert \mu_m \nu_m} \left ( p_1^\sigma, \cdots, p_m^\sigma \right ) := \imaginary \left ( \frac{\bar{\delta}}{\bar{\delta} \widehat{\gravitonghost}_{\rho_1}} \frac{\bar{\delta}}{\bar{\delta} \widehat{\overline{\gravitonghost}}_{\rho_2}} \prod_{i = 3}^m \frac{\bar{\delta}}{\bar{\delta} \hat{h}_{\mu_i \nu_i}} \right ) \mathscr{F} \left ( \overline{\mathcal{L}}_\text{QGR}^{m,1} \right ) \, ,
	\end{equation}
where, additionally to the above mentioned setting, \(\textfrac{\bar{\delta}}{\bar{\delta} \widehat{\gravitonghost}_{\rho_1}}\) and \(\textfrac{\bar{\delta}}{\bar{\delta} \widehat{\overline{\gravitonghost}}_{\rho_2}}\) denotes the functional derivative with respect to the Fourier transformed graviton-ghost field \(\widehat{\gravitonghost}_{\rho_1}\) and Fourier transformed graviton-antighost field \(\widehat{\overline{\gravitonghost}}_{\rho_2}\), respectively, and \(\overline{\mathcal{L}}_\text{QGR}^{m,1}\) is the symmetrized extension of \(\mathcal{L}_\text{QGR}^{m,1}\) to \(m\) distinguishable gravitons. Additionally, we denote the graviton-ghost propagator Feynman rule with momentum \(p^\sigma\) and regulator for Landau singularities \(\epsilon\) via \(\ghostprop_{\rho_1 \vert \rho_2} \left ( p^\sigma; \epsilon \right )\). It is defined as the inverse of the matrix element for the graviton-ghost kinetic term:\footnote{Again, we use momentum conservation to set \(p_1^\sigma := p^\sigma\) and \(p_2^\sigma := - p^\sigma\) in the expression \(\ghostfr_2^{\mu_2 \nu_2 \vert \mu_3 \nu_3} \left ( p_1^\sigma, p_2^\sigma \right )\).}
	\begin{equation}
		\ghostprop_{\rho_1 \vert \rho_2} \left ( p^\sigma; 0 \right ) \ghostfr_2^{\rho_2 \vert \rho_3} \left ( p^\sigma \right ) = \hat{\delta}_{\rho_1}^{\rho_3} \, .
	\end{equation}
	Finally, we denote the graviton-matter \(m\)-point vertex Feynman rule of type \(k\) from \lemref{lem:matter-model-Lagrange-densities} with ingoing momenta \(\set{p_1^\sigma, \cdots, p_m^\sigma}\) via \(\matterfrk_m^{\kappa \dots \tau \| o \dots t \triplevert \mu_1 \nu_1 \vert \cdots \vert \mu_m \nu_m} \left ( p_1^\sigma, \cdots, p_m^\sigma \right )\), where we count only graviton particles, as the matter-contributions are condensed into the tensors \(\tensor[_k]{\! T}{}\), whose Feynman rule contributions can be found e.g.\ in \cite{Romao_Silva}. They are defined as follows:
	\begin{multline}
		\matterfrk_m^{\kappa \dots \tau \| o \dots t \triplevert \mu_1 \nu_1 \vert \cdots \vert \mu_m \nu_m} \left ( p_1^\sigma, \cdots, p_m^\sigma \right ) := \\ \imaginary \left ( \frac{\bar{\delta}}{\bar{\delta} \tensor[_k]{\! \widehat{T}}{_\kappa _\dots _\tau _\| _o _\dots _t}} \prod_{i = 1}^m \frac{\bar{\delta}}{\bar{\delta} \hat{h}_{\mu_i \nu_i}} \right ) \mathscr{F} \left ( \tensor[_k]{{\overline{\mathcal{L}}^{m,0}_{\text{QGR-SM}}}}{} \right ) \, ,
	\end{multline}
	where we use again the above mentioned setting.
\end{defn}

\enter

\begin{con}
	We consider all momenta \(\set{p_1^\sigma, \cdots, p_m^\sigma}\) incoming and we assume momentum conservation on quadratic Feynman rules, i.e.\ set \(p_1^\sigma := p^\sigma\) and \(p_2^\sigma := - p^\sigma\).
\end{con}

\subsection{Preparations for gravitons and their ghosts} \label{ssec:preparations_graviton_graviton_ghost}

In this subsection we prepare all necessary objects for the graviton and graviton-ghost Feynman rules.

\enter

\begin{lem} \label{lem:traces_FR}
	Introducing the notation
	\begin{equation}
	\mathfrak{T}_n^{\mu_1 \nu_1 \vert \cdots \vert \mu_n \nu_n} := \left ( \prod_{i = 1}^n \frac{\bar{\delta}}{\bar{\delta} \hat{h}_{\mu_i \nu_i}} \right ) \mathscr{F} \left ( \tr{\left ( \eta h \right )^n} \right ) \, ,
	\end{equation}
	we obtain
	\begin{subequations}
	\begin{align}
		\mathfrak{T}_n^{\mu_1 \nu_1 \vert \cdots \vert \mu_n \nu_n} & = \frac{1}{2^n} \sum_{\mu_i \leftrightarrow \nu_i} \sum_{s \in S_n} \mathfrak{t}_n^{\mu_{s(1)} \nu_{s(1)} \vert \cdots \vert \mu_{s(n)} \nu_{s(n)}}
		\intertext{with}
		\mathfrak{t}_n^{\mu_1 \nu_1 \vert \cdots \vert \mu_n \nu_n} & = \gcoupling^n \left ( \hat{\delta}^{\nu_1}_{\nu_{n+1}} \prod_{a = 1}^n \hat{\eta}^{\mu_a \nu_{a+1}} \right ) \, .
	\end{align}
	\end{subequations}
	Furthermore, introducing the notation
	\begin{equation}
		\mathfrak{H}_n^{\mu \nu \triplevert \mu_1 \nu_1 \vert \cdots \vert \mu_n \nu_n} := \left ( \prod_{i = 1}^n \frac{\bar{\delta}}{\bar{\delta} \hat{h}_{\mu_i \nu_i}} \right ) \mathscr{F} \left ( \left ( h^n \right )^{\mu \nu} \right ) \, ,
	\end{equation}
	we obtain
	\begin{subequations}
	\begin{align}
		\mathfrak{H}_0^{\mu \nu} & = \eta^{\mu \nu}
		\intertext{and for \(n > 0\)}
		\mathfrak{H}_n^{\mu \nu \triplevert \mu_1 \nu_1 \vert \cdots \vert \mu_n \nu_n} & = \frac{1}{2^n} \sum_{\mu_i \leftrightarrow \nu_i} \sum_{s \in S_n} \mathfrak{h}_n^{\mu \nu \vert \mu_{s(1)} \nu_{s(1)} \vert \cdots \vert \mu_{s(n)} \nu_{s(n)}}
		\intertext{with}
		\mathfrak{h}_n^{\mu \nu \triplevert \mu_1 \nu_1 \vert \cdots \vert \mu_n \nu_n} & = \gcoupling^n \left ( \hat{\delta}^{\mu}_{\mu_0} \hat{\delta}^{\nu}_{\nu_{n+1}} \prod_{a = 0}^n \hat{\eta}^{\mu_a \nu_{a+1}} \right ) \, .
	\end{align}
	\end{subequations}
	Moreover, introducing the notation
	\begin{equation}
		\left ( \mathfrak{H}_n^\prime \right )_\rho^{\mu \nu \triplevert \mu_1 \nu_1 \vert \cdots \vert \mu_n \nu_n} \left ( p_1^\sigma, \cdots, p_n^\sigma \right ) := \left ( \prod_{i = 1}^n \frac{\bar{\delta}}{\bar{\delta} \hat{h}_{\mu_i \nu_i}} \right ) \mathscr{F} \left ( \partial_\rho \left ( \left ( h^n \right )^{\mu \nu} \right ) \right ) \, ,
	\end{equation}
	we obtain
	\begin{subequations}
	\begin{align}
		\left ( \mathfrak{H}_0^\prime \right )_\rho^{\mu \nu} & = 0 \\
		\intertext{and for \(n > 0\)}
		\begin{split}
			\left ( \mathfrak{H}_n^\prime \right )_\rho^{\mu \nu \triplevert \mu_1 \nu_1 \vert \cdots \vert \mu_n \nu_n} \left ( p_1^\sigma, \cdots, p_n^\sigma \right ) & = \\ & \hphantom{=} \mkern-44mu \frac{1}{2^n} \sum_{\mu_i \leftrightarrow \nu_i} \sum_{s \in S_n} \left ( \mathfrak{h}_n^\prime \right )_\rho^{\mu \nu \triplevert \mu_{s(1)} \nu_{s(1)} \vert \cdots \vert \mu_{s(n)} \nu_{s(n)}} \left ( p_{s(1)}^\sigma, \cdots, p_{s(n)}^\sigma \right )
		\end{split}
		\intertext{with}
		\left ( \mathfrak{h}_n^\prime \right )_\rho^{\mu \nu \triplevert \mu_1 \nu_1 \vert \cdots \vert \mu_n \nu_n} \left ( p_1^\sigma, \cdots, p_n^\sigma \right ) & = \gcoupling^n \left ( \sum_{m = 1}^n p^n_\rho \right ) \left ( \hat{\delta}^{\mu}_{\mu_0} \hat{\delta}^{\nu}_{\nu_{n+1}} \prod_{a = 0}^n \hat{\eta}^{\mu_a \nu_{a+1}} \right ) \, .
	\end{align}
	\end{subequations}
\end{lem}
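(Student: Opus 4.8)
All three identities reduce to the same elementary computation: differentiate a monomial of $n$ graviton fields, evaluate each elementary functional derivative, and contract indices. The plan is therefore to carry this out in detail for $\mathfrak{T}_n$ and then to adapt it to $\mathfrak{H}_n$ and $\mathfrak{H}_n'$. The basic ingredient is the single elementary derivative, which by the symmetrization convention reads
\[
	\frac{\bar\delta}{\bar\delta \hat h_{\mu_i \nu_i}} \hat h_{\rho\sigma} = \frac12 \left ( \hat\delta_\rho^{\mu_i} \hat\delta_\sigma^{\nu_i} + \hat\delta_\sigma^{\mu_i} \hat\delta_\rho^{\nu_i} \right ) .
\]
Applying $n$ such derivatives to a monomial in $n$ copies of $\hat h$ and using the Leibniz rule produces a sum over all bijective assignments of the $n$ derivatives to the $n$ factors, i.e.\ a sum over $s \in S_n$; since $S_n$ is a group it is immaterial whether one records this as a permutation of derivatives or of the particle labels $\mu_{s(i)} \nu_{s(i)}$, which is the form used in the statement. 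Each elementary derivative contributes a factor $\frac12$ and a pair of terms related by $\mu_i \leftrightarrow \nu_i$, so the product of the $n$ of them yields exactly the global prefactor $2^{-n}$ together with the symmetrizer $\sum_{\mu_i \leftrightarrow \nu_i}$, while each graviton field carries one power of $\gcoupling$, producing the overall $\gcoupling^n$.

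\textbf{The trace term $\mathfrak{T}_n$.} I would write $\tr{(\eta h)^n}$ explicitly as the cyclic contraction $h^{\kappa_1}_{\kappa_2} h^{\kappa_2}_{\kappa_3} \cdots h^{\kappa_n}_{\kappa_1}$ in the notation of \lemref{lem:inverse_metric_series}, and fix a permutation $s \in S_n$. Each elementary derivative turns one factor $h_{\lambda_a \kappa_a}$ into a symmetrized pair of Kronecker deltas, and carrying out the contractions over the internal indices $\kappa_1, \dots, \kappa_n$ then assembles these deltas together with the accompanying $\eta$'s into the closed chain $\hat\delta^{\nu_1}_{\nu_{n+1}} \prod_{a=1}^n \hat\eta^{\mu_a \nu_{a+1}}$, with the cyclic identification $\nu_{n+1} \equiv \nu_1$ understood. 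This is precisely $\mathfrak{t}_n$, and combining it with the prefactor and symmetrizer from the previous step gives the claimed expression for $\mathfrak{T}_n$.

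\textbf{The open chain $\mathfrak{H}_n$ and its derivative $\mathfrak{H}_n'$.} The computation for $(h^n)^{\mu\nu}$ is identical except that the two outermost indices $\mu, \nu$ now remain free, so the closed $\eta$-chain becomes the open one $\hat\delta^\mu_{\mu_0} \hat\delta^\nu_{\nu_{n+1}} \prod_{a=0}^n \hat\eta^{\mu_a \nu_{a+1}}$; the case $n = 0$ is immediate from $(h^0)^{\mu\nu} = \eta^{\mu\nu}$. For $\mathfrak{H}_n'$ one inserts the extra $\partial_\rho$: by the Leibniz rule it distributes over the $n$ factors of $h$, and under the Fourier transformation of \defnref{defn:fourier_transform} the differentiated factor is multiplied by its momentum, so summing over which factor is hit gives the overall prefactor $\sum_{m=1}^n p_\rho^m$ (the total incoming graviton momentum) and leaves the tensor structure untouched, yielding $\mathfrak{h}_n'$; the case $n = 0$ vanishes since $\eta^{\mu\nu}$ is constant.

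\textbf{Expected obstacle.} There is no conceptual difficulty here --- the whole argument is Leibniz plus index bookkeeping --- but the accounting has to be done carefully: one must keep the cyclic contraction of the trace distinct from the open contraction of $h^n$, and check that the $n!$ permutations, the $2^n$ independent index swaps and the compensating $2^{-n}$ combine exactly as stated. The one point needing extra care is in $\mathfrak{H}_n'$: distributing $\partial_\rho$ over the product \emph{before} Fourier transforming must be reconciled with the convention that the $i$-th external graviton carries momentum $p_i$, and it is this reconciliation that turns the single-factor momenta into the total-momentum prefactor rather than into any individual $p_i$.
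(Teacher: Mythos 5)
Your proposal is correct and is precisely the computation the paper alludes to: the paper's own proof is the one-line statement that the lemma ``follows directly from the definition,'' and your Leibniz-rule bookkeeping of the $n!$ assignments, the $2^{-n}$ from the symmetrized derivatives, the closed versus open index chains, and the total-momentum prefactor in $\mathfrak{H}_n'$ is exactly the direct verification being invoked. No gap; your version simply makes the omitted details explicit.
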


\begin{proof}
	This follows from directly from the definition.
\end{proof}

\enter

\begin{col} \label{col:inverse_metric_vielbeins_FR}
	Given the situation of \lemref{lem:traces_FR}, we have
	{\allowdisplaybreaks
	\begin{align}
		\left ( \prod_{i = 1}^n \frac{\bar{\delta}}{\bar{\delta} \hat{h}_{\mu_i \nu_i}} \right ) \mathscr{F} \left ( \eval{g^{\mu \nu}}_{\order{\gcoupling^n}} \right ) & = \left ( - 1 \right )^n \mathfrak{H}_n^{\mu \nu \triplevert \mu_1 \nu_1 \vert \cdots \vert \mu_n \nu_n} \, ,
		\\
		\left ( \prod_{i = 1}^n \frac{\bar{\delta}}{\bar{\delta} \hat{h}_{\mu_i \nu_i}} \right ) \mathscr{F} \left ( \eval{e_\rho^r}_{\order{\gcoupling^n}} \right ) & = \binom{\frac{1}{2}}{n} \left ( \mathfrak{H}_n \right )_\rho^{r \triplevert \mu_1 \nu_1 \vert \cdots \vert \mu_n \nu_n} \, ,
		\\
		\left ( \prod_{i = 1}^n \frac{\bar{\delta}}{\bar{\delta} \hat{h}_{\mu_i \nu_i}} \right ) \mathscr{F} \left ( \eval{e^\rho_r}_{\order{\gcoupling^n}} \right ) & = \binom{- \frac{1}{2}}{n} \left ( \mathfrak{H}_n \right )_r^{\rho \triplevert \mu_1 \nu_1 \vert \cdots \vert \mu_n \nu_n} \, ,
		\\
		\left ( \prod_{i = 1}^n \frac{\bar{\delta}}{\bar{\delta} \hat{h}_{\mu_i \nu_i}} \right ) \mathscr{F} \left ( \eval{\left ( \partial_\sigma e_\rho^r \right )}_{\order{\gcoupling^n}} \right ) & = \binom{\frac{1}{2}}{n} \hat{\eta}_{\mu \rho} \hat{\delta}_\nu^r \left ( \mathfrak{H}_n^\prime \right )_\sigma^{\mu \nu \triplevert \mu_1 \nu_1 \vert \cdots \vert \mu_n \nu_n} \left ( p_1^\sigma, \cdots, p_n^\sigma \right )
		\intertext{and}
		\left ( \prod_{i = 1}^n \frac{\bar{\delta}}{\bar{\delta} \hat{h}_{\mu_i \nu_i}} \right ) \mathscr{F} \left ( \eval{\left ( \partial_\sigma e^\rho_r \right )}_{\order{\gcoupling^n}} \right ) & = \binom{- \frac{1}{2}}{n} \hat{\delta}_\mu^\rho \hat{\eta}_{\nu r} \left ( \mathfrak{H}_n^\prime \right )_\sigma^{\mu \nu \triplevert \mu_1 \nu_1 \vert \cdots \vert \mu_n \nu_n} \left ( p_1^\sigma, \cdots, p_n^\sigma \right ) \, .
	\end{align}
}
\end{col}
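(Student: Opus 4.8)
The plan is to read off each of the five identities directly from the series representations of \lemref{lem:inverse_metric_series} and \lemref{lem:vielbeins_series}, restricted to the grade-\(n\) part in \(\gcoupling\), and then to recognize the resulting expressions in terms of the symmetrized building blocks \(\mathfrak{H}_n\) and \(\mathfrak{H}_n^\prime\) of \lemref{lem:traces_FR}. The only inputs beyond these lemmata are the linearity of the Fourier transformation \(\mathscr{F}\) and of the symmetrized functional derivative \(\textfrac{\bar{\delta}}{\bar{\delta} \hat{h}_{\mu_i \nu_i}}\), so that the grade-\(n\) coefficient can be pulled out of both operations; in particular no new combinatorics is required, since \(\mathfrak{H}_n\) and \(\mathfrak{H}_n^\prime\) already carry the full symmetrization over the swaps \(\mu_i \leftrightarrow \nu_i\) and over \(S_n\).

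For the inverse metric I would use \lemref{lem:inverse_metric_series} to write its grade-\(n\) part as \((-\gcoupling)^n (h^n)^{\mu \nu}\), factor out \((-1)^n\), and apply \(\prod_{i=1}^n \textfrac{\bar{\delta}}{\bar{\delta} \hat{h}_{\mu_i \nu_i}}\) after \(\mathscr{F}\); by the very definition of \(\mathfrak{H}_n\) this equals \((-1)^n \mathfrak{H}_n^{\mu \nu \triplevert \mu_1 \nu_1 \vert \cdots \vert \mu_n \nu_n}\), with the overall \(\gcoupling^n\) absorbed into \(\mathfrak{H}_n\) as in \lemref{lem:traces_FR}. For the vielbein and the inverse vielbein I would proceed identically, reading the grade-\(n\) parts \(\binom{1/2}{n} (h^n)_\mu^m\) and \(\binom{-1/2}{n} (h^n)_m^\mu\) off \lemref{lem:vielbeins_series}. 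The one extra step here is to rewrite the mixed spacetime/frame-index objects in terms of the purely upper-index \((h^n)^{\mu \nu}\) using \(h_\mu^m := \eta^{m \nu} h_{\mu \nu}\) and \(h^\mu_m := \eta^{\mu \nu} \delta^\rho_m h_{\nu \rho}\); this produces exactly the \(\hat{\eta}\)- and \(\hat{\delta}\)-prefactors encoded in the notation \((\mathfrak{H}_n)_\rho^{r \triplevert \cdots}\) and \((\mathfrak{H}_n)_r^{\rho \triplevert \cdots}\), leaving the binomial coefficients \(\binom{1/2}{n}\), \(\binom{-1/2}{n}\) out front.

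For the two derivative identities I would additionally use that \(\partial_\sigma\) commutes with the \(\gcoupling\)-grading and, under \(\mathscr{F}\), with the functional derivative: applying \(\mathscr{F}(\partial_\sigma(\cdot))\) to \((h^n)^{\mu \nu}\) converts each differentiated graviton leg into a factor of its incoming momentum, which is precisely what the momentum-weighted object \((\mathfrak{H}_n^\prime)_\sigma^{\mu \nu \triplevert \mu_1 \nu_1 \vert \cdots \vert \mu_n \nu_n}(p_1^\sigma, \cdots, p_n^\sigma)\) of \lemref{lem:traces_FR} packages, including the permuted momentum assignment \(p_{s(1)}^\sigma, \cdots, p_{s(n)}^\sigma\) dictated by \(s \in S_n\). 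Combining this with the vielbein series and the index conversion of the previous paragraph yields the last two lines with the same binomial prefactors. The point needing the most care --- more a bookkeeping check than a genuine obstacle --- is to confirm that the momentum-labelling convention built into \(\textfrac{\bar{\delta}}{\bar{\delta} \hat{h}_{\mu_i \nu_i}}\) (the ``bar'' tying a preceding momentum to the particle index \(i\)) is consistent with the way momenta enter \(\mathfrak{H}_n^\prime\); once this is checked, all five identities hold termwise, and the corollary indeed follows directly from \lemref{lem:traces_FR} together with \lemref{lem:inverse_metric_series} and \lemref{lem:vielbeins_series}.
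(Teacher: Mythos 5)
Your proposal is correct and follows essentially the same route as the paper: the paper's own proof is the one-line remark that the corollary ``follows directly from Lemmata \ref{lem:inverse_metric_series}, \ref{lem:vielbeins_series} and \ref{lem:traces_FR}'', and your sketch simply makes explicit the grade-\(n\) restriction of those series, the linearity of \(\mathscr{F}\) and of the symmetrized functional derivative, and the index/momentum bookkeeping that identifies the results with \(\mathfrak{H}_n\) and \(\mathfrak{H}_n^\prime\). No further comparison is needed.
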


\begin{proof}
	This follows directly from Lemmata \ref{lem:inverse_metric_series}, \ref{lem:vielbeins_series} and \ref{lem:traces_FR}.
\end{proof}

\enter

\begin{lem} \label{lem:Christoffel_FR}
	Introducing the notation
	\begin{equation}
		\boldsymbol{\Gamma}^{\mu_1 \nu_1}_{\mu \nu \rho} \left ( p_1^\sigma \right ) := \frac{\bar{\delta}}{\bar{\delta} \hat{h}_{\mu_1 \nu_1}} \mathscr{F} \left ( \Gamma_{\mu \nu \rho} \right )
	\end{equation}
	with
	\begin{equation}
	\begin{split}
		\Gamma_{\mu \nu \rho} & := g_{\rho \sigma} \tensor{\Gamma}{_\mu _\nu ^\sigma} \\
		& \hphantom{:} \equiv \frac{1}{2} \left ( \partial_\mu g_{\nu \rho} + \partial_\nu g_{\rho \mu} - \partial_\rho g_{\mu \nu} \right )
	\end{split}
	\end{equation}
	we obtain
	\begin{equation}
	\begin{split}
		\boldsymbol{\Gamma}^{\mu_1 \nu_1}_{\mu \nu \rho} \left ( p_1^\sigma \right ) & = \frac{\gcoupling}{4} \left ( p^1_\mu \left ( \hat{\delta}_\rho^{\mu_1} \hat{\delta}_\nu^{\nu_1} + \hat{\delta}_\nu^{\mu_1} \hat{\delta}_\rho^{\nu_1} \right ) + p^1_\nu \left ( \hat{\delta}_\mu^{\mu_1} \hat{\delta}_\rho^{\nu_1} + \hat{\delta}_\rho^{\mu_1} \hat{\delta}_\mu^{\nu_1} \right ) \right . \\
		& \hphantom{= \frac{\gcoupling}{4} (} \left . - p^1_\rho \left ( \hat{\delta}_\mu^{\mu_1} \hat{\delta}_\nu^{\nu_1} + \hat{\delta}_\nu^{\mu_1} \hat{\delta}_\mu^{\nu_1} \right ) \right ) \, .
	\end{split}
	\end{equation}
\end{lem}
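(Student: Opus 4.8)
The plan is to treat this lemma as a direct unpacking of the definitions: unlike \lemref{lem:inverse_metric_series} and \lemref{lem:vielbeins_series}, the object $\boldsymbol{\Gamma}^{\mu_1 \nu_1}_{\mu \nu \rho}$ involves only a \emph{single} symmetrized functional derivative acting on a quantity that is already linear in the graviton field, so that no series expansion is needed. First I would substitute the metric decomposition $g_{\mu \nu} \equiv \eta_{\mu \nu} + \gcoupling h_{\mu \nu}$ from \defnref{defn:md_and_gf} into the explicit lowered-index expression $\Gamma_{\mu \nu \rho} = \tfrac{1}{2} \left ( \partial_\mu g_{\nu \rho} + \partial_\nu g_{\rho \mu} - \partial_\rho g_{\mu \nu} \right )$. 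Since the background Minkowski metric is constant, $\partial_\lambda \eta_{\mu \nu} = 0$, so that
\[
	\Gamma_{\mu \nu \rho} = \frac{\gcoupling}{2} \left ( \partial_\mu h_{\nu \rho} + \partial_\nu h_{\rho \mu} - \partial_\rho h_{\mu \nu} \right ) \, ,
\]
which is manifestly of order $\gcoupling^1$ and linear in $h$.

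Next I would apply the Fourier transformation $\mathscr{F}$ of \defnref{defn:fourier_transform} and then act with the single symmetrized functional derivative $\textfrac{\bar{\delta}}{\bar{\delta} \hat{h}_{\mu_1 \nu_1}}$. By the conventions fixed for the vertex Feynman rules, $\mathscr{F}$ replaces each factor $\partial_\lambda h_{\alpha \beta}$ by $p_\lambda \hat{h}_{\alpha \beta}$, and $\textfrac{\bar{\delta}}{\bar{\delta} \hat{h}_{\mu_1 \nu_1}}$ then sends $p_\kappa \hat{h}_{\rho \sigma}$ to $\tfrac{1}{2} p^1_\kappa \left ( \hat{\delta}_\rho^{\mu_1} \hat{\delta}_\sigma^{\nu_1} + \hat{\delta}_\sigma^{\mu_1} \hat{\delta}_\rho^{\nu_1} \right )$. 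Applying this termwise to the three monomials above, each term contributes the numerical factor $\tfrac{\gcoupling}{2} \cdot \tfrac{1}{2} = \tfrac{\gcoupling}{4}$, the momentum $p^1$ contracted with the index that carried the derivative (namely $\mu$, then $\nu$, then $\rho$), and a symmetrized pair of Kronecker deltas identifying $\{\mu_1, \nu_1\}$ with the index pair of $h_{\nu \rho}$, $h_{\rho \mu}$ and $h_{\mu \nu}$ respectively, the last carrying a relative minus sign. Collecting the three contributions reproduces verbatim the asserted formula for $\boldsymbol{\Gamma}^{\mu_1 \nu_1}_{\mu \nu \rho} \left ( p_1^\sigma \right )$.

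There is no genuine obstacle in this lemma; the only points requiring care are the bookkeeping of the $\mu_1 \leftrightarrow \nu_1$ symmetrization built into $\textfrac{\bar{\delta}}{\bar{\delta} \hat{h}_{\mu_1 \nu_1}}$ --- which automatically renders the output symmetric under $\mu_1 \leftrightarrow \nu_1$, as it must be since $h$ is a symmetric tensor --- and the signs and momentum labels inherited from \defnref{defn:fourier_transform} together with the agreement that all momenta are incoming. This elementary identity will subsequently be fed, via the Leibniz rule, into Corollary~\ref{col:inverse_metric_vielbeins_FR} and the analogous expansions in order to assemble the Feynman rules for $\tensor{\Gamma}{_\mu _\nu ^\sigma} = g^{\sigma \lambda} \Gamma_{\mu \nu \lambda}$ and for the higher curvature objects entering the graviton and graviton-ghost vertices.
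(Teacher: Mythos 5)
Your proposal is correct and follows essentially the same route as the paper, which simply records the intermediate expression $\widehat{\Gamma}_{\mu \nu \rho} = \frac{\gcoupling}{2} \big ( p_\mu \hat{h}_{\nu \rho} + p_\nu \hat{h}_{\rho \mu} - p_\rho \hat{h}_{\mu \nu} \big )$ and declares the result immediate. You merely spell out the same two steps --- inserting the metric decomposition so only the $h$-terms survive, then applying the symmetrized derivative of the definition to pick up the factor $\tfrac{1}{2}$ and the symmetrized Kronecker deltas --- in full detail.
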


\begin{proof}
	This follows from directly from the expression
	\begin{equation}
		\widehat{\Gamma}_{\mu \nu \rho} = \frac{\gcoupling}{2} \left ( p_\mu \hat{h}_{\nu \rho} + p_\nu \hat{h}_{\rho \mu} - p_\rho \hat{h}_{\mu \nu} \right ) \, .
	\end{equation}
\end{proof}

\enter

\begin{lem} \label{lem:Ricci_scalar_FR}
	Introducing the notation
	\begin{equation}
		\mathfrak{R}_n^{\mu_1 \nu_1 \vert \cdots \vert \mu_n \nu_n} \left ( p_1^\sigma, \cdots, p_n^\sigma \right ) := \left ( \prod_{i = 1}^n \frac{\bar{\delta}}{\bar{\delta} \hat{h}_{\mu_i \nu_i}} \right ) \mathscr{F} \left ( \eval{R}_{\order{\gcoupling^{n}}} \right ) \, ,
	\end{equation}
	we obtain
	\begin{subequations}
	\begin{align}
		\mathfrak{R}_0 & = 0 \, , \\
		\mathfrak{R}_1^{\mu_1 \nu_1} \left ( p_1^\sigma \right ) & = - \gcoupling \left ( p_1^{\mu_1} p_1^{\nu_1} - p_1^2 \hat{\eta}^{\mu_1 \nu_1} \right )
		\intertext{and for \(n > 1\)}
		\mathfrak{R}_n^{\mu_1 \nu_1 \vert \cdots \vert \mu_n \nu_n} \left ( p_1^\sigma, \cdots, p_n^\sigma \right ) & = \frac{1}{2^n} \sum_{\mu_i \leftrightarrow \nu_i} \sum_{s \in S_n} \mathfrak{r}_n^{\mu_{s(1)} \nu_{s(1)} \vert \cdots \vert \mu_{s(n)} \nu_{s(n)}} \left ( p_{s(1)}^\sigma, \cdots, p_{s(n)}^\sigma \right )
		\intertext{with}
		\begin{split}
			\mathfrak{r}_n^{\mu_1 \nu_1 \vert \cdots \vert \mu_n \nu_n} \left ( p_1^\sigma, \cdots, p_n^\sigma \right ) & = \left ( \mathfrak{r}_n^{\partial \Gamma} \right )^{\mu_1 \nu_1 \vert \cdots \vert \mu_n \nu_n} \left ( p_1^\sigma, \cdots, p_n^\sigma \right ) \\ & \hphantom{=} + \left ( \mathfrak{r}_n^{\Gamma^2} \right )^{\mu_1 \nu_1 \vert \cdots \vert \mu_n \nu_n} \left ( p_1^\sigma, \cdots, p_n^\sigma \right ) \, ,
		\end{split}
		\\
		\begin{split}
			\left ( \mathfrak{r}_n^{\partial \Gamma} \right )^{\mu_1 \nu_1 \vert \cdots \vert \mu_n \nu_n} \left ( p_1^\sigma, \cdots, p_n^\sigma \right ) & = \left ( - \gcoupling \right )^n \sum_{i+j = n-1} \left ( \hat{\delta}^\rho_{\nu_{i+1}} \prod_{a = 0}^i \hat{\eta}^{\mu_a \nu_{a+1}} \right ) \left ( \hat{\delta}^\nu_{\mu_i} \prod_{b = i}^{i+j} \hat{\eta}^{\mu_b \nu_{b+1}} \right ) \\
			& \hphantom{= - \gcoupling^n \sum_{i+j = n-1}} \times \left ( p^n_{\mu_0} p^n_\nu \hat{\delta}_\rho^{\mu_n} - p^n_{\mu_0} p^n_\rho \hat{\delta}_\nu^{\mu_n} \right ) \\
			& \hphantom{=} \mkern-72mu - \left ( - \gcoupling \right )^n \sum_{i+j+k = n-2} \left ( \hat{\delta}^\rho_{\nu_{i+1}} \prod_{a = 0}^i \hat{\eta}^{\mu_a \nu_{a+1}} \right ) \left ( \hat{\delta}^\nu_{\mu_i} \hat{\delta}^\sigma_{\nu_{i+j+1}} \prod_{b = i}^{i+j} \hat{\eta}^{\mu_b \nu_{b+1}} \right ) \\
			 & \hphantom{= - \left ( - \gcoupling \right )^n \sum_{i+j+k = n-2}} \mkern-104mu \times \left ( \hat{\delta}^\kappa_{\mu_{i+j}} \hat{\delta}^\lambda_{\nu_{i+j+k+1}} \prod_{c = i+j}^{i+j+k} \hat{\eta}^{\mu_c \nu_{c+1}} \right ) \\
			 & \hphantom{= - \left ( - \gcoupling \right )^n \sum_{i+j+k = n-2}} \mkern-104mu \times \left ( \left ( p^{n-1}_{\mu_0} \hat{\delta}_\rho^{\mu_{n-1}} \hat{\delta}_\kappa^{\nu_{n-1}} \right ) \left ( \frac{1}{2} p^n_\lambda \hat{\delta}_\nu^{\mu_n} \hat{\delta}_\sigma^{\nu_n} - p^n_\nu \hat{\delta}_\lambda^{\mu_n} \hat{\delta}_\sigma^{\nu_n} \right ) \right . \\
			 & \hphantom{= - \left ( - \gcoupling \right )^n \sum_{i+j+k = n-2} \times (} \mkern-104mu \left . + \frac{1}{2} \left ( p^{n-1}_\nu \hat{\delta}_{\mu_0}^{\mu_{n-1}} \hat{\delta}_\kappa^{\nu_{n-1}} \right ) \left ( p^n_\sigma \hat{\delta}_\rho^{\mu_n} \hat{\delta}_\lambda^{\nu_n} \right ) \vphantom{\left ( \frac{1}{2} \right )} \right )
		\end{split}
		\intertext{and}
		\begin{split}
			\left ( \mathfrak{r}_n^{\Gamma^2} \right )^{\mu_1 \nu_1 \vert \cdots \vert \mu_n \nu_n} \left ( p_1^\sigma, \cdots, p_n^\sigma \right ) & = \\
			& \hphantom{=} \mkern-120mu - \left ( - \gcoupling \right )^n \sum_{i+j+k = n-2} \left ( \hat{\delta}^\rho_{\nu_{i+1}} \prod_{a = 0}^i \hat{\eta}^{\mu_a \nu_{a+1}} \right ) \left ( \hat{\delta}^\nu_{\mu_i} \hat{\delta}^\sigma_{\nu_{i+j+1}} \prod_{b = i}^{i+j} \hat{\eta}^{\mu_b \nu_{b+1}} \right ) \\
			& \hphantom{= - \left ( - \gcoupling \right )^n \sum_{i+j+k = n-2}} \mkern-152mu \times \left ( \hat{\delta}^\kappa_{\mu_{i+j}} \hat{\delta}^\lambda_{\nu_{i+j+k+1}} \prod_{c = i+j}^{i+j+k} \hat{\eta}^{\mu_c \nu_{c+1}} \right ) \\
			& \hphantom{= - \left ( - \gcoupling \right )^n \sum_{i+j+k = n-2}} \mkern-152mu \times \left ( \left ( p^{n-1}_\kappa \hat{\delta}_{\mu_0}^{\mu_{n-1}} \hat{\delta}_\rho^{\nu_{n-1}} \right ) \left ( \frac{1}{2} p^n_\nu \hat{\delta}_\sigma^{\mu_n} \hat{\delta}_\lambda^{\nu_n} - \frac{1}{4} p^n_\lambda \hat{\delta}_\nu^{\mu_n} \hat{\delta}_\sigma^{\nu_n} \right ) \right . \\
			& \hphantom{= - \left ( - \gcoupling \right )^n \sum_{i+j+k = n-2} \times (} \mkern-152mu \left . - \left ( p^{n-1}_\nu \hat{\delta}_{\mu_0}^{\mu_{n-1}} \hat{\delta}_\kappa^{\nu_{n-1}} \right ) \left ( \frac{1}{2} p^n_\rho \hat{\delta}_\sigma^{\mu_n} \hat{\delta}_\lambda^{\nu_n} - \frac{1}{4} p^n_\sigma \hat{\delta}_\rho^{\mu_n} \hat{\delta}_\lambda^{\nu_n} \right ) \vphantom{\left ( \frac{1}{2} \right )} \right ) \, .
		\end{split}
	\end{align}
	\end{subequations}
\end{lem}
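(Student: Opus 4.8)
The plan is to read off $\eval{R}_{\order{\gcoupling^n}}$ from Corollary~\ref{col:ricci_scalar_for_the_levi_civita_connection_restriction}, Fourier transform it, and then act with the symmetrized functional derivatives, reducing every power $(h^i)^{\mu\rho}$ of the graviton that occurs to the kernels supplied by Lemma~\ref{lem:traces_FR}. Concretely, Corollary~\ref{col:ricci_scalar_for_the_levi_civita_connection_restriction} already presents $\eval{R}_{\order{\gcoupling^n}}$ as a polynomial in the inverse-metric powers $(h^i)^{\mu\rho},(h^j)^{\nu\sigma},(h^k)^{\kappa\lambda}$ times first and second partial derivatives of individual graviton fields, carrying an overall factor $(-\gcoupling)^n$ (plus an extra minus sign on the leading $R^{\partial\Gamma}$ piece). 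By Definition~\ref{defn:fourier_transform} the Fourier transform turns each $\partial_\mu$ acting on a graviton into the incoming momentum of that leg, so that $\eval{R}_{\order{\gcoupling^n}}$ becomes a sum of products of graviton fields dressed with momentum factors. Applying $\prod_{i=1}^n \bar\delta/\bar\delta\hat h_{\mu_i\nu_i}$ by the graded Leibniz rule then distributes the $n$ derivatives over the graviton factors of each summand: a block of $i$ of them acts on $(h^i)^{\mu\rho}$, a block of $j$ on $(h^j)^{\nu\sigma}$, a block of $k$ on $(h^k)^{\kappa\lambda}$ where present, and the remaining one or two on the explicitly differentiated gravitons, which simply contribute the momenta $p^{n-1}$, $p^n$ together with symmetrized Kronecker deltas.

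The cases $n=0$ and $n=1$ are immediate. Corollary~\ref{col:ricci_scalar_for_the_levi_civita_connection_restriction} gives $\eval{R}_{\order{\gcoupling^0}}=0$, hence $\mathfrak{R}_0=0$; for $n=1$ only $\eval{R^{\partial\Gamma}}_{\order{\gcoupling^1}}=\gcoupling\,\eta^{\mu\rho}\eta^{\nu\sigma}(\partial_\mu\partial_\nu h_{\rho\sigma}-\partial_\mu\partial_\rho h_{\nu\sigma})$ survives, so the transform produces two factors of $p_1$ from the two derivatives and the single functional derivative $\bar\delta/\bar\delta\hat h_{\mu_1\nu_1}$ contracts the free indices, directly giving $-\gcoupling(p_1^{\mu_1}p_1^{\nu_1}-p_1^2\hat\eta^{\mu_1\nu_1})$ with trivial $\mu_1\leftrightarrow\nu_1$ symmetrization.

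For $n>1$ I would substitute, for the block of $i$ derivatives acting on $(h^i)^{\mu\rho}$, the kernel $\mathfrak{h}_i$ of Lemma~\ref{lem:traces_FR}, which glues those $i$ graviton legs into a chain $\prod_{a=0}^{i}\hat\eta^{\mu_a\nu_{a+1}}$ of metric contractions running from the index $\mu$ to the index $\rho$, and likewise for $(h^j)^{\nu\sigma}$ and $(h^k)^{\kappa\lambda}$. Threading these chains together along the contraction pattern dictated by the index placements in Corollary~\ref{col:ricci_scalar_for_the_levi_civita_connection_restriction}, and attaching the momenta brought down by the derivatives on the differentiated gravitons, reproduces the unsymmetrized kernels $\mathfrak{r}_n^{\partial\Gamma}$ and $\mathfrak{r}_n^{\Gamma^2}$ summand by summand; the internal sums $\sum_{i+j=n-1}$ and $\sum_{i+j+k=n-2}$ are exactly the free parameters in the Leibniz distribution among the power-of-$h$ factors. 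The remaining prefactor $\tfrac{1}{2^n}\sum_{\mu_i\leftrightarrow\nu_i}\sum_{s\in S_n}$ encodes the symmetrized functional derivative itself: the $\tfrac12\sum_{\mu_i\leftrightarrow\nu_i}$ per leg is the symmetrization built into $\bar\delta/\bar\delta\hat h_{\mu_i\nu_i}$, and the sum over $s\in S_n$ records the possible assignments of the $n$ derivatives to the $n$ graviton slots, of which the kernel $\mathfrak{r}_n$ represents one.

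The main obstacle is purely combinatorial bookkeeping: one must verify that the chains of $\hat\eta$'s are stitched together exactly as displayed — which chain end is identified with which, and how the index of each differentiated graviton is absorbed — that the momenta $p^{n-1}$ and $p^n$ attach to the right legs with the coefficients $\tfrac12$ and $\tfrac14$ of Corollary~\ref{col:ricci_scalar_for_the_levi_civita_connection_restriction}, and that all signs and powers of $\gcoupling$ (in particular the extra minus on the leading $R^{\partial\Gamma}$ term) are propagated faithfully; beyond this index-chasing I expect no conceptual difficulty.
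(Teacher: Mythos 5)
Your proposal follows essentially the same route as the paper: the paper's proof is a one-line combination of Corollary~\ref{col:ricci_scalar_for_the_levi_civita_connection_restriction} with Corollary~\ref{col:inverse_metric_vielbeins_FR} (itself an immediate consequence of the kernels in Lemma~\ref{lem:traces_FR} that you invoke), plus the remark about the global minus sign from Fourier transforming the two derivatives, which you handle correctly in the $n=1$ case and flag as part of the bookkeeping in general. Your write-up simply makes the Leibniz-rule distribution over the $(h^i)$-blocks and the index-stitching explicit, which the paper leaves implicit.
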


\begin{proof}
	This follows directly from Corollaries \ref{col:ricci_scalar_for_the_levi_civita_connection_restriction} and \ref{col:inverse_metric_vielbeins_FR}. Furthermore, we remark the global minus sign due to the Fourier transform and the omission of Kronecker symbols, if possible.
\end{proof}

\enter

\begin{lem} \label{lem:de_Donder_gauge_fixing_FR}
	Introducing the notation
	\begin{equation}
		\deDonderFR_n^{\mu_1 \nu_1 \vert \cdots \vert \mu_n \nu_n} \left ( p_1^\sigma, \cdots, p_n^\sigma \right ) := \left ( \prod_{i = 1}^n \frac{\bar{\delta}}{\bar{\delta} \hat{h}_{\mu_i \nu_i}} \right ) \mathscr{F} \left ( \eval{\deDonder^2}_{\order{\gcoupling^{n}}} \right ) \, ,
	\end{equation}
	we obtain
	\begin{subequations}
	\begin{align}
		\deDonderFR_0 & = 0 \, , \\
		\deDonderFR_1^{\mu_1 \nu_1} \left ( p_1^\sigma \right ) & = 0
		\intertext{and for \(n > 1\)}
		\! \! \! \deDonderFR_n^{\mu_1 \nu_1 \vert \cdots \vert \mu_n \nu_n} \left ( p_1^\sigma, \cdots, p_n^\sigma \right ) & = \frac{1}{2^n} \sum_{\mu_i \leftrightarrow \nu_i} \sum_{s \in S_n} \deDonderpreFR_n^{\mu_{s(1)} \nu_{s(1)} \vert \cdots \vert \mu_{s(n)} \nu_{s(n)}} \left ( p_{s(1)}^\sigma, \cdots, p_{s(n)}^\sigma \right )
		\intertext{with}
		\begin{split}
			\! \! \! \deDonderpreFR_n^{\mu_1 \nu_1 \vert \cdots \vert \mu_n \nu_n} \left ( p_1^\sigma, \cdots, p_n^\sigma \right ) & = - \left ( - \gcoupling \right )^n \sum_{i+j+k = n-2} \left ( \hat{\delta}^\rho_{\nu_{i+1}} \prod_{a = 0}^i \hat{\eta}^{\mu_a \nu_{a+1}} \right ) \\ & \hphantom{=} \times \left ( \hat{\delta}^\nu_{\mu_i} \hat{\delta}^\sigma_{\nu_{i+j+1}} \prod_{b = i}^{i+j} \hat{\eta}^{\mu_b \nu_{b+1}} \right ) \left ( \hat{\delta}^\kappa_{\mu_{i+j}} \hat{\delta}^\lambda_{\nu_{i+j+k+1}} \prod_{c = i+j}^{i+j+k} \hat{\eta}^{\mu_c \nu_{c+1}} \right ) \\
			& \mkern-72mu \hphantom{=} \times \left ( \left ( p^{n-1}_\nu \hat{\delta}_\sigma^{\mu_{n-1}} \hat{\delta}_{\mu_0}^{\nu_{n-1}} \right ) \left ( p^n_\kappa \hat{\delta}_\lambda^{\mu_n} \hat{\delta}_\rho^{\nu_n} \right ) - \left ( p^{n-1}_\nu \hat{\delta}_\sigma^{\mu_{n-1}} \hat{\delta}_{\mu_0}^{\nu_{n-1}} \right ) \left ( p^n_\rho \hat{\delta}_\kappa^{\mu_n} \hat{\delta}_\lambda^{\nu_n} \right ) \right . \\ & \mkern-72mu \hphantom{= \times (} \left . + \frac{1}{4} \left ( p^{n-1}_{\mu_0} \hat{\delta}_\nu^{\mu_{n-1}} \hat{\delta}_\sigma^{\nu_{n-1}} \right ) \left ( p^n_\rho \hat{\delta}_\kappa^{\mu_n} \hat{\delta}_\lambda^{\nu_n} \right ) \right ) \, .
		\end{split}
	\end{align}
	\end{subequations}
	In particular, the quadratic part is given by (using momentum conservation, i.e.\ setting \(p_1^\sigma := p^\sigma\) and \(p_2^\sigma := - p^\sigma\))
	\begin{equation} \label{eqn:de_donder-quadratic}
	\begin{split}
		\deDonderFR_2^{\mu_1 \nu_1 \vert \mu_2 \nu_2} \left ( p^\sigma, - p^\sigma \right ) & = \gcoupling^2 \big ( p^{\mu_1} p^{\nu_1} \hat{\eta}^{\mu_2 \nu_2} + p^{\mu_2} p^{\nu_2} \hat{\eta}^{\mu_1 \nu_1} \big ) \\
		& \hphantom{=} - \frac{1}{2} \gcoupling^2 \big ( p^{\mu_1} p^{\mu_2} \hat{\eta}^{\nu_1 \nu_2} + p^{\mu_1} p^{\nu_2} \hat{\eta}^{\nu_1 \mu_2} + p^{\nu_1} p^{\mu_2} \hat{\eta}^{\mu_1 \nu_2} + p^{\nu_1} p^{\nu_2} \hat{\eta}^{\mu_1 \mu_2} \big ) \\
		& \hphantom{=} - \frac{1}{2} \gcoupling^2 \big ( p^2 \hat{\eta}^{\mu_1 \nu_1} \hat{\eta}^{\mu_2 \nu_2} \big ) \, .
	\end{split}
	\end{equation}
\end{lem}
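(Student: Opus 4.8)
The plan is to feed the explicit metric expressions for \(\deDonder^2\) from Corollary~\ref{col:metric_expression_for_de_donder_gauge_fixing_restriction} through the Fourier transformation and the symmetrized functional derivatives, exactly along the lines of the proof of \lemref{lem:Ricci_scalar_FR}. The cases \(n = 0\) and \(n = 1\) are immediate: by Corollary~\ref{col:metric_expression_for_de_donder_gauge_fixing_restriction} we have \(\eval{\deDonder^2}_{\order{\gcoupling^m}} = 0\) for \(m < 2\), so there is nothing on which the functional derivatives can act and \(\deDonderFR_0 = \deDonderFR_1 = 0\).

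For \(n > 1\), I would start from the expression for \(\eval{\deDonder^2}_{\order{\gcoupling^n}}\) in Corollary~\ref{col:metric_expression_for_de_donder_gauge_fixing_restriction}, which is a sum over \(i + j + k = n - 2\) of a product of three Neumann-chains \((h^i)^{\mu\rho}(h^j)^{\nu\sigma}(h^k)^{\kappa\lambda}\) times a bracket of three terms each quadratic in \(\partial h\). Each such summand is homogeneous of degree \(i + j + k + 2 = n\) in the graviton field, so after Fourier transforming --- turning products into convolutions and each \(\partial\) into a factor of \(\imaginary p\) --- the \(n\) symmetrized derivatives \(\prod_{i = 1}^n \textfrac{\bar\delta}{\bar\delta \hat h_{\mu_i \nu_i}}\) distribute one onto each factor of \(h\). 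The three Neumann-chains then turn into the strings of Kronecker symbols and inverse Minkowski metrics of the \(\mathfrak{H}\)-tensors of \lemref{lem:traces_FR}, whereas the two derivative-bearing factors each contribute one momentum; assigning these two momenta to the last two labels of the symmetrized ordering gives the \(p^{n-1}\) and \(p^n\) appearing in the statement. Tracking signs, the prefactor \((-\gcoupling)^n\) from Corollary~\ref{col:metric_expression_for_de_donder_gauge_fixing_restriction} is multiplied by \(\imaginary^2 = -1\) from the two Fourier-transformed partial derivatives, producing the overall \(-(-\gcoupling)^n\). Reinstating the symmetrization \(\tfrac{1}{2^n}\sum_{\mu_i \leftrightarrow \nu_i}\sum_{s \in S_n}\), which passes from distinguishable to indistinguishable gravitons, yields the claimed formula for \(\deDonderpreFR_n\).

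The main work, and the place where an error is most likely to creep in, is the index bookkeeping: one has to check that gluing the three \(\mathfrak{H}\)-chains onto the open index slots of the derivative-bearing factors reproduces exactly the nested products
\[
	\hat\delta^\rho_{\nu_{i+1}} \prod_{a = 0}^i \hat\eta^{\mu_a \nu_{a+1}} \, , \qquad
	\hat\delta^\nu_{\mu_i} \hat\delta^\sigma_{\nu_{i+j+1}} \prod_{b = i}^{i+j} \hat\eta^{\mu_b \nu_{b+1}} \, , \qquad
	\hat\delta^\kappa_{\mu_{i+j}} \hat\delta^\lambda_{\nu_{i+j+k+1}} \prod_{c = i+j}^{i+j+k} \hat\eta^{\mu_c \nu_{c+1}} \, ,
\]
with the momentum factors \(p^{n-1}_\bullet\), \(p^n_\bullet\) landing in the correct slots of the three bracket terms. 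Since \(\deDonder^2\) shares precisely the \(g^{\mu\rho} g^{\nu\sigma} g^{\kappa\lambda} (\partial g)(\partial g)\) shape of the pieces making up \(R^{\partial\Gamma}\) and \(R^{\Gamma^2}\), this is the same bookkeeping already carried out for \(\mathfrak{r}_n^{\partial\Gamma}\) and \(\mathfrak{r}_n^{\Gamma^2}\) in \lemref{lem:Ricci_scalar_FR}, and matching against those expressions gives a strong internal consistency check.

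Finally, for the quadratic part I would specialize to \(n = 2\): this forces \(i = j = k = 0\), so each \(\mathfrak{H}\)-chain collapses to a single \(\hat\eta\). Imposing momentum conservation \(p_1^\sigma := p^\sigma\), \(p_2^\sigma := -p^\sigma\), so that \(p^{n-1} = p\) and \(p^n = -p\) (hence \(p^{n-1}_\bullet p^n_\bullet = -p_\bullet p_\bullet\), flipping the sign once more), and expanding the remaining sums over \(S_2\) and over the two \(\mu_i \leftrightarrow \nu_i\) flips explicitly, collects directly into the displayed expression for \(\deDonderFR_2^{\mu_1 \nu_1 \vert \mu_2 \nu_2}(p^\sigma, -p^\sigma)\); comparison with the quadratic de Donder term \(\deDonder_{(2)}^2\) of Corollary~\ref{col:metric_expression_for_de_donder_gauge_fixing_restriction} serves as a last check.
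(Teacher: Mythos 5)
Your proposal is correct and follows essentially the same route as the paper: the paper's own proof simply cites Corollary~\ref{col:metric_expression_for_de_donder_gauge_fixing_restriction} together with Corollary~\ref{col:inverse_metric_vielbeins_FR} (i.e.\ the \(\mathfrak{H}\)-chains of \lemref{lem:traces_FR}) and remarks on the global minus sign coming from the two Fourier-transformed derivatives, which is exactly the sign bookkeeping you carry out. Your additional observations --- the vanishing of the \(n=0,1\) cases, the parallel with \(\mathfrak{r}_n^{\partial\Gamma}\) and \(\mathfrak{r}_n^{\Gamma^2}\), and the explicit \(n=2\) check --- are all consistent with the paper.
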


\begin{proof}
	This follows directly from Corollaries \ref{col:metric_expression_for_de_donder_gauge_fixing_restriction} and \ref{col:inverse_metric_vielbeins_FR}. Furthermore, we remark the global minus sign due to the Fourier transform and the omission of Kronecker symbols, if possible.
\end{proof}

\enter

\begin{lem} \label{lem:Riemannian_volume_form_FR}
	Introducing the notation
	\begin{equation}
		\mathfrak{V}_n^{\mu_1 \nu_1 \vert \cdots \vert \mu_n \nu_n} := \left ( \prod_{i = 1}^n \frac{\delta}{\delta \hat{h}_{\mu_i \nu_i}} \right ) \mathscr{F} \left ( \eval{\sqrt{- \dt{g}}}_{\order{\gcoupling^{n}}} \right ) \, ,
	\end{equation}
	we obtain
	\begin{subequations}
	\begin{align}
		\mathfrak{V}_n^{\mu_1 \nu_1 \vert \cdots \vert \mu_n \nu_n} & = \frac{1}{2^n} \sum_{\mu_i \leftrightarrow \nu_i} \sum_{s \in S_n} \mathfrak{v}_n^{\mu_{s(1)} \nu_{s(1)} \vert \cdots \vert \mu_{s(n)} \nu_{s(n)}}
		\intertext{with}
		\begin{split}
			\mathfrak{v}_n^{\mu_1 \nu_1 \vert \cdots \vert \mu_n \nu_n} & = \gcoupling^n \sum_{\substack{i + j + k + l = m\\i \geq j \geq k \geq l \geq 0}} \sum_{p = 0}^{j - k} \sum_{q = 0}^{k - l} \sum_{r = 0}^{q} \sum_{s = 0}^l \sum_{t = 0}^s \sum_{u = 0}^t \sum_{v = 0}^u \\
			& \hphantom{=} \mkern-54mu \binom{\frac{1}{2}}{i} \binom{i}{j} \binom{j}{k} \binom{k}{l} \binom{j - k}{p} \binom{k - l}{q} \binom{q}{r} \binom{l}{s} \binom{s}{t} \binom{t}{u} \binom{u}{v} \\
			& \hphantom{=} \mkern-54mu \times \left ( - 1 \right )^{p + q - r + s - t + v} 2^{- j + l + r + s + 2t - 3u + v} 3^{- k + q - r + s - t + u} \\
			& \hphantom{=} \mkern-54mu \times \left ( \prod_{a = 1}^{\boldsymbol{a}} \hat{\eta}^{\mu_a \nu_a} \right ) \left ( \prod_{b = {\boldsymbol{a} + 1}}^{\boldsymbol{a} + \boldsymbol{b}} \hat{\eta}^{\mu_{b} \mu_{b + \boldsymbol{b}}} \hat{\eta}^{\nu_{b} \nu_{b + \boldsymbol{b}}} \right ) \left ( \prod_{c = {\boldsymbol{a} + 2 \boldsymbol{b} + 1}}^{\boldsymbol{a} + 2 \boldsymbol{b} + \boldsymbol{c}} \hat{\eta}^{\mu_{c} \nu_{c + \boldsymbol{c}}} \hat{\eta}^{\mu_{c + \boldsymbol{c}} \nu_{c + 2 \boldsymbol{c}}} \hat{\eta}^{\mu_{c + 2 \boldsymbol{c}} \nu_{c}} \right ) \\
			& \hphantom{=} \mkern-54mu \times \left ( \prod_{d = {\boldsymbol{a} + 2 \boldsymbol{b} + 3 \boldsymbol{c} + 1}}^{\boldsymbol{a} + 2 \boldsymbol{b} + 3 \boldsymbol{c} + \boldsymbol{d}} \hat{\eta}^{\mu_{d} \nu_{d + \boldsymbol{d}}} \hat{\eta}^{\mu_{d + \boldsymbol{d}} \nu_{d + 2 \boldsymbol{d}}} \hat{\eta}^{\mu_{d + 2 \boldsymbol{d}} \nu_{d + 3 \boldsymbol{d}}} \hat{\eta}^{\mu_{d + 3 \boldsymbol{d}} \nu_{d}} \right )
		\end{split}
		\intertext{and}
		\begin{split}
			\boldsymbol{a} & := i + j + k + l - 2p - 2q - r - 2s - t - u \\
			\boldsymbol{b} & := p + q - r + s - t + 2u - 2v \\
			\boldsymbol{c} & := r + t - u \\
			\boldsymbol{d} & := v \, .
		\end{split}
	\end{align}
	\end{subequations}
\end{lem}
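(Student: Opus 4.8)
The plan is to read off $\mathfrak{V}_n$ directly from Corollary~\ref{col:determinant_metric_restriction} together with the multilinearizations of the traces $\tr{\left ( \eta h \right )^j}$ recorded in Lemma~\ref{lem:traces_FR}, in the same spirit as the one-line reductions used for Lemmata~\ref{lem:Ricci_scalar_FR} and~\ref{lem:de_Donder_gauge_fixing_FR}. Recall that Corollary~\ref{col:determinant_metric_restriction} writes $\eval{\sqrt{- \dt{g}}}_{\order{\gcoupling^n}}$ as a finite sum of monomials of the form $c \, \mathfrak{a}^{\boldsymbol{a}} \mathfrak{b}^{\boldsymbol{b}} \mathfrak{c}^{\boldsymbol{c}} \mathfrak{d}^{\boldsymbol{d}}$, where $\mathfrak{a} = \gcoupling \tr{\eta h}$, $\mathfrak{b} = \gcoupling^2 \tr{\left ( \eta h \right )^2}$, $\mathfrak{c} = \gcoupling^3 \tr{\left ( \eta h \right )^3}$ and $\mathfrak{d} = \gcoupling^4 \tr{\left ( \eta h \right )^4}$, the exponents $\boldsymbol{a}, \boldsymbol{b}, \boldsymbol{c}, \boldsymbol{d}$ are precisely the four combinations displayed in the statement, the scalar $c$ collects the product of binomial coefficients with the sign and the powers of $2$ and $3$, and the total graviton degree satisfies $\boldsymbol{a} + 2 \boldsymbol{b} + 3 \boldsymbol{c} + 4 \boldsymbol{d} = n$ on every such monomial (this is the content of the outer constraint, with the summation index printed as $m$ to be read as $n$).

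First I would observe that, unlike for the Ricci scalar or the de Donder term, no partial derivative acts on $h$ inside $\sqrt{- \dt{g}}$: the Fourier transform therefore merely replaces each factor $h_{\rho \sigma}$ by $\hat{h}_{\rho \sigma}$ with no accompanying momentum, the bar on $\textfrac{\bar{\delta}}{\bar{\delta} \hat{h}}$ is inert, and the symmetrized derivative coincides here with the ordinary index-symmetrized functional derivative. Since $\textfrac{\bar{\delta}}{\bar{\delta} \hat{h}}$ is a derivation, applying $\prod_{i=1}^n \textfrac{\bar{\delta}}{\bar{\delta} \hat{h}_{\mu_i \nu_i}}$ to a monomial of graviton degree $n$ distributes, by the Leibniz rule, the $n$ external legs over the $n$ copies of $\hat{h}$; the multilinearization of each trace factor is exactly what Lemma~\ref{lem:traces_FR} supplies through the objects $\mathfrak{T}_j$, namely $\hat{\eta}^{\mu_a \nu_a}$ for an $\mathfrak{a}$-block, the paired structure $\hat{\eta}^{\mu_b \mu_{b + \boldsymbol{b}}} \hat{\eta}^{\nu_b \nu_{b + \boldsymbol{b}}}$ for a $\mathfrak{b}$-block, the cyclic triple for a $\mathfrak{c}$-block and the cyclic quadruple for a $\mathfrak{d}$-block --- precisely the four products appearing in the displayed $\mathfrak{v}_n$. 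Assigning the legs $1, \dots, \boldsymbol{a}$ to the $\mathfrak{a}$-blocks, the next $2 \boldsymbol{b}$ to the $\mathfrak{b}$-blocks, and so on, yields one representative term $\mathfrak{v}_n$ carrying the coefficient $c$ of Corollary~\ref{col:determinant_metric_restriction}, and the full functional derivative is recovered as the sum over all ways of making this assignment --- which is exactly the operation $\frac{1}{2^n} \sum_{\mu_i \leftrightarrow \nu_i} \sum_{s \in S_n}$ acting on $\mathfrak{v}_n$.

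The one point that genuinely requires care --- and the step I expect to be the main obstacle --- is the combinatorial bookkeeping of this last symmetrization. The sum $\frac{1}{2^n} \sum_{\mu_i \leftrightarrow \nu_i} \sum_{s \in S_n}$ reproduces each distinct $\hat{\eta}$-contraction with a multiplicity equal to the order of its stabilizer --- permutations among blocks of the same type, cyclic rotations inside each triple and each quadruple, transpositions inside each pair, and the overall $\mu_i \leftrightarrow \nu_i$ redundancy --- while the Leibniz distribution of the $n$ derivatives over the $n$ factors of a given monomial generates exactly the same tensors with exactly the same multiplicity, so that no residual symmetry factor survives and the scalar prefactor is literally the $c$ of Corollary~\ref{col:determinant_metric_restriction}. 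This is the identical cancellation already used implicitly in Lemma~\ref{lem:traces_FR}, where $\sum_{s \in S_n}$ overcounts a single cyclic word by the factor $n$ and the $n$ copies of $h$ in $\tr{\left ( \eta h \right )^n}$ supply precisely that factor; I would simply verify that it persists monomial by monomial for the four block types here. Once this is checked, collecting terms and pulling out $\gcoupling^n$ gives the stated formula for $\mathfrak{v}_n$ with $\boldsymbol{a}, \boldsymbol{b}, \boldsymbol{c}, \boldsymbol{d}$ as displayed, and hence for $\mathfrak{V}_n$.
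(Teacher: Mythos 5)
Your proposal is correct and takes essentially the same route as the paper, whose entire proof is the one-line remark that the lemma follows directly from Corollary~\ref{col:determinant_metric_restriction} together with the multilinearization machinery of Lemma~\ref{lem:traces_FR} and Corollary~\ref{col:inverse_metric_vielbeins_FR}. Your explicit matching of the Leibniz distribution of the $n$ functional derivatives against the $\frac{1}{2^n}\sum_{\mu_i \leftrightarrow \nu_i}\sum_{s \in S_n}$ symmetrization (both being sums over assignments of external labels to the $n$ factors of $\hat{h}$, with the $2^{-n}$ coming from the symmetrized derivative) supplies precisely the combinatorial detail the paper leaves implicit.
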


\begin{proof}
	This follows directly from Corollaries \ref{col:determinant_metric_restriction} and \ref{col:inverse_metric_vielbeins_FR}.
\end{proof}

\subsection{Preparations for for gravitons and matter} \label{ssec:preparations_gravitons-matter}

In this subsection we prepare all necessary objects for the graviton-matter Feynman rules. As will be discussed in detail in the following four Subsubsections, the gravitational interactions with matter from the Standard Model can be classified into the following 10 Lagrange densities, henceforth refered to as matter-model Lagrange densities of type \(k\). We calculate only the gravitational interactions for the matter-model Lagrange densities and refer for the corresponding matter contributions to \cite{Romao_Silva} in order to keep this article at a reasonable length.

\enter

\begin{lem} \label{lem:matter-model-Lagrange-densities}
	Consider (effective) Quantum General Relativity coupled to the Standard Model (QGR-SM). Then the interaction Lagrange densities between gravitons and matter particles are of the following 10 types:\footnote{We remark that the tensors \(\tensor[_k]{\! T}{}\) are not related to Hilbert stress-energy tensors. More precisely, they are defined as the graviton-free matter contributions of the corresponding Lagrange densities.}
	{\allowdisplaybreaks
	\begin{align}
		\tensor[_1]{\mathcal{L}}{_{\text{QGR-SM}}} & := \tensor[_1]{\! T}{} \dif V_g \, , \\
		\tensor[_2]{\mathcal{L}}{_{\text{QGR-SM}}} & := \left ( g^{\mu \nu} \, \tensor[_2]{\! T}{_\mu _\nu} \right ) \dif V_g \, , \\
		\tensor[_3]{\mathcal{L}}{_{\text{QGR-SM}}} & := \left ( g^{\mu \nu} g^{\rho \sigma} \, \tensor[_3]{\! T}{_\mu _\nu _\rho _\sigma} \right ) \dif V_g \, , \\
		\tensor[_4]{\mathcal{L}}{_{\text{QGR-SM}}} & := \left ( g^{\mu \nu} \tensor{\Gamma}{_\mu _\nu ^\tau} \, \tensor[_4]{\! T}{_\tau} \right ) \dif V_g \, , \\
		\tensor[_5]{\mathcal{L}}{_{\text{QGR-SM}}} & := \left ( g^{\mu \nu} g^{\rho \sigma} \tensor{\Gamma}{_\mu _\nu ^\tau} \, \tensor[_5]{\! T}{_\rho _\sigma _\tau} \right ) \dif V_g \, , \\
		\tensor[_6]{\mathcal{L}}{_{\text{QGR-SM}}} & := \left ( g^{\mu \nu} g^{\rho \sigma} \tensor{\Gamma}{_\mu _\nu ^\kappa} \tensor{\Gamma}{_\rho _\sigma ^\lambda} \, \tensor[_6]{\! T}{_\kappa _\lambda} \right ) \dif V_g \, , \\
		\tensor[_7]{\mathcal{L}}{_{\text{QGR-SM}}} & := \left ( e_0^o \, \tensor[_7]{\! T}{_o} \right ) \dif V_g \, , \\
		\tensor[_8]{\mathcal{L}}{_{\text{QGR-SM}}} & := \left ( e_0^o e^{\rho r} \, \tensor[_8]{\! T}{_o _\rho _r} \right ) \dif V_g \, , \\
		\tensor[_9]{\mathcal{L}}{_{\text{QGR-SM}}} & := \left ( e_0^o e^{\rho r} e^{\sigma s} \left ( \partial_\rho e_\sigma^t \right ) \, \tensor[_9]{\! T}{_o _r _s _t} \right ) \dif V_g \, , \\
		\intertext{and}
		\tensor[_{10}]{\mathcal{L}}{_{\text{QGR-SM}}} & := \left ( e_0^o e^{\rho r} e^{\sigma s} e_\tau^t \tensor{\Gamma}{_\rho _\sigma ^\tau} \, \tensor[_{10}]{\! T}{_o _r _s _t} \right ) \dif V_g \, .
	\end{align}
	}
\end{lem}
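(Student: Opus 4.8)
The proof amounts to carrying out, sector by sector, the covariantization of the Standard-Model Lagrange density and then reading off the tensor structure of each term --- which is exactly the content of the four subsubsections that follow. The plan is to go through the matter content of the Standard Model one sector at a time --- the scalar (Higgs) sector, the gauge-boson sector together with its (covariant) gauge-fixing term and its Faddeev--Popov ghosts, and the fermion sector --- covariantizing each flat-space Lagrange density by the minimal-coupling prescription ($\eta_{\mu \nu} \to g_{\mu \nu}$; $\partial_\mu \to \nabla^{(g)}_\mu$ on tensor fields and $\partial_\mu \to \nabla^{(g)}_\mu + \tfrac14 \omega_\mu{}^{mn}\gamma_{mn}$ on spinors; $\dif V_\eta \to \dif V_g$), expanding the covariant derivatives and the spin connection in terms of $\partial_\mu$, $\tensor{\Gamma}{_\mu _\nu ^\tau}$, $g^{\mu \nu}$ and the (inverse) vielbeins $e^m_\mu$, $e^\mu_m$, and then verifying that every resulting summand carries one of the ten tabulated tensor structures, with all matter-field data (the matter fields, their own derivatives, their internal gauge connections) packaged into the tensors $\tensor[_k]{\! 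T}{}$.

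First I would dispatch the vielbein-free sectors. A scalar contributes $\big(\tfrac12 g^{\mu \nu}(\partial_\mu \phi)(\partial_\nu \phi) - V(\phi)\big)\dif V_g$: since $\nabla^{(g)}_\mu \phi = \partial_\mu \phi$ and the internal gauge covariant derivative only adds matter, the kinetic term is of type 2 and $V(\phi)$ (in particular the mass term) of type 1. The Yang--Mills term $-\tfrac14 g^{\mu \rho} g^{\nu \sigma} F^a_{\mu \nu} F^a_{\rho \sigma}\,\dif V_g$ is of type 3, since the antisymmetry of $F_{\mu \nu}$ cancels the Christoffels in $2\nabla^{(g)}_{[\mu} A_{\nu]}$. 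The covariant gauge-fixing functional $g^{\mu \nu}\nabla^{(g)}_\mu A_\nu = g^{\mu \nu}\partial_\mu A_\nu - g^{\mu \nu}\tensor{\Gamma}{_\mu _\nu ^\lambda}A_\lambda$ squares into a type-3, a type-5 and a type-6 term, and the gauge-ghost Lagrange density $\bar{c}^a\big(g^{\mu \nu}\nabla^{(g)}_\mu D_\nu c\big)^a \dif V_g$ expands via $g^{\mu \nu}\nabla^{(g)}_\mu D_\nu c = g^{\mu \nu}\partial_\mu(D_\nu c) - g^{\mu \nu}\tensor{\Gamma}{_\mu _\nu ^\lambda}(D_\lambda c)$ into a type-2 term (both the $g^{\mu \nu}\partial_\mu \partial_\nu$ piece and the $g^{\mu \nu}\partial_\mu(\text{matter})_\nu$ piece) and a type-4 term.

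The fermion sector is where the real work lies, and I expect it to be the main obstacle. The covariantized Dirac Lagrange density is $\big(\imaginary\,\bar\psi\gamma^m e^\mu_m \nabla_\mu \psi - m\bar\psi\psi\big)\dif V_g$ with $\nabla_\mu \psi = \partial_\mu \psi + \tfrac14 \omega_\mu{}^{mn}\gamma_{mn}\psi$. The mass term (and the Yukawa couplings) are of type 1; the $\partial_\mu \psi$ piece of the kinetic term carries a single inverse vielbein and, together with the minimal gauge coupling $\bar\psi\gamma^m e^\mu_m A_\mu \psi$, accounts for the type-7 structure, while the type-8 structure arises when a second inverse vielbein is needed to convert a further frame index of a fermion bilinear. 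The delicate step is the spin connection: using the vielbein postulate one writes $\omega_\mu{}^{m}{}_{n} = e^m_\lambda e^\nu_n\,\tensor{\Gamma}{_\mu _\nu ^\lambda} - (\partial_\mu e^m_\nu)\,e^\nu_n$ (up to the index and sign conventions of \cite{Prinz_2}); inserting this into $e^\mu_m\,\omega_\mu{}^{\cdot}{}_{\cdot}$ and absorbing the $\gamma$-matrix and spinor data into the matter tensor, the $\partial e$-piece yields the type-9 structure (vielbeins with exactly one derivative of a vielbein) and the $\Gamma$-piece the type-10 structure (vielbeins with exactly one Christoffel). One then checks completeness: after covariantization the Standard-Model Lagrange density is at most quadratic in derivatives of the fields and involves covariant derivatives only of scalars, vectors and spinors, so no summand ever carries a second derivative of a vielbein, more than two Christoffels (type 6 being the maximum), more than two inverse metrics (type 3), or more vielbein factors than tabulated; hence the list closes. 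Assembling the sectors, every gravity--matter interaction term of QGR-SM is a linear combination of $\tensor[_1]{\mathcal{L}}{_{\text{QGR-SM}}}, \dots, \tensor[_{10}]{\mathcal{L}}{_{\text{QGR-SM}}}$ with the matter content absorbed into the $\tensor[_k]{\! T}{}$, whose explicit Feynman-rule contributions can be read off from \cite{Romao_Silva}. The hard part, to reiterate, is being confident that the vielbein expansion of the Dirac operator together with the spin connection closes precisely on types 7--10, generating neither higher powers of $\partial e$ nor higher powers of $\Gamma$.
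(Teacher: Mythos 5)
Your overall strategy --- covariantize the Standard Model sector by sector, expand covariant derivatives and the spin connection into $g^{\mu\nu}$, $\tensor{\Gamma}{_\mu_\nu^\tau}$ and (inverse) vielbeins, and read off the tensor structures --- is exactly what the paper does; its proof is a one-paragraph sector-by-sector assignment whose details are the four subsubsections following the lemma. Your treatment of the scalar, Yang--Mills, gauge-fixing and gauge-ghost sectors matches the paper's (and your assignment of the ghost Lagrange density to types 2 and 4 agrees with the paper's detailed subsubsection, which is the correct reading). Your explicit completeness check (no term can carry more Christoffels, inverse metrics, or differentiated vielbeins than tabulated) is a worthwhile addition the paper leaves implicit. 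You also omit the spontaneous-symmetry-breaking modifications of the $Z$- and $W^\pm$-gauge fixing and the electroweak ghost--Goldstone couplings, which is where the gauge sector additionally populates types 1, 2 and 4; this is a minor omission since no new tensor structures arise there.

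There is, however, one genuine gap, and it sits precisely where you anticipated trouble: the fermion sector. You classify the Dirac mass term and the Yukawa couplings as type 1 and the $\partial_\mu\psi$ kinetic piece as type 7. In the paper's conventions the Dirac adjoint is \emph{not} metric-independent: it is defined as $\overline{\psi} := e_0^o\left(\gamma_o\psi\right)^\dagger = e_0^o\,\overline{\psi}_o$, carrying one vielbein with fixed timelike curved index, because the paper deliberately refuses to work in charts where $e_0^o \equiv \delta_0^o$ (doing so would break invariance under general diffeomorphisms). Consequently every spinor bilinear carries a prefactor $e_0^o$: the mass and Yukawa terms are of type 7 (one vielbein, $\tensor[_7]{\! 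T}{_o} = -m_\psi\overline{\psi}_o\psi - \dots$), the kinetic term $\overline{\psi}\,\imaginary e^{\rho r}\gamma_r\partial_\rho\psi$ is of type 8 (two vielbeins), and the spin-connection pieces are of types 9 and 10 (four vielbein factors, one of them differentiated or traded for a Christoffel). Your classification is shifted down by exactly this missing $e_0^o$, so as written it does not reproduce the tensor structures in the lemma's statement and would miss the graviton couplings generated by the vielbein inside $\overline{\psi}$. The fix is purely bookkeeping --- insert $\overline{\psi} = e_0^o\overline{\psi}_o$ before reading off the structures --- but without it the fermionic part of the lemma is not established.
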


\begin{proof}
	A direct computation shows that the scalar particles form the Standard Model are of type 1 and 2. Furthermore, the spinor particles from the Standard Model are of type 7, 8, 9 and 10. Moreover, the bosonic gauge boson particles from the Standard Model are of type 3, 5 and 6 and additionally 1, 2 and 4 for spontaneous symmetry breaking. Finally, the gauge ghosts are of type 2 and 6 and additionally 1 for spontaneous symmetry breaking. This is discussed in detail in the following four Subsubsections.
\end{proof}

\subsubsection{Gravitons and scalar particles} \label{sssec:gravitons_and_scalar_particles}

Scalar particles from the Standard Model are the Higgs and Goldstone bosons.\footnote{The gauge ghosts are discussed in \sssecref{sssec:gravitons_and_gauge_ghosts}.} In the following we describe the interaction of gravitons with a real scalar field and with a vector of complex scalar fields, subjected to the action of a gauge group \(G\) (which leads to spontaneous symmetry breaking). Geometrically they are described via sections \(\phi \in \Gamma \left ( \bbM, \mathbb{R} \right )\) and \(\Phi \in \Gamma \left ( \bbM, \mathbb{C}^i \right )\), respectively, where \(i = \operatorname{DimRep} \left ( \rho \right )\) is the dimension of the representation of the gauge group \(G\) on \(\mathbb{C}^i\), acting fiberwise via \(\rho \colon G \times \mathbb{C}^i \to \mathbb{C}^i\). Then, the corresponding Lagrange densities are given by
\begin{align}
	\mathcal{L}_\text{GR-\(\mathbb{R}\)-Scalar} & = \left ( \frac{1}{2} g^{\mu \nu} \left ( \partial_\mu \phi \right ) \left ( \partial_\nu \phi \right ) + \sum_{i \in \boldsymbol{I}_\phi} \frac{\alpha_i}{i!} \phi^i \right ) \dif V_g
	\intertext{and}
	\mathcal{L}_\text{GR-\(\mathbb{C}^i\)-Scalar} & = \left ( g^{\mu \nu} \big ( \nabla^{G \times_\rho \mathbb{C}^i}_\mu \Phi \big )^\dagger \big ( \nabla^{G \times_\rho \mathbb{C}^i}_\nu \Phi \big ) + \sum_{i \in \boldsymbol{I}_\Phi} \frac{\alpha_i}{i!} \big ( \Phi^\dagger \Phi \big )^i \right ) \dif V_g \, , \label{eqn:vector_complex_scalar_field}
\end{align}
where \(\boldsymbol{I}_\phi\) and \(\boldsymbol{I}_\Phi\) denote the interaction sets with particle mass \(- \alpha_2\) and coupling constants \(\alpha_i\) for \(i \neq 2\), \(\dagger\) denotes Hermitian conjugation and
\begin{equation}
	\nabla^{G \times_\rho \mathbb{C}^i}_\mu := \partial_\mu + \imaginary \mathrm{g} A^a_\mu \mathfrak{b}_a
\end{equation}
is the covariant derivative on the \(\mathbb{C}^i\)-bundle, with connection form \(\imaginary \mathrm{g} A \in \Gamma \big ( \bbM, T^* \bbM \otimes \operatorname{End} ( \mathbb{C}^i ) \big )\). The Higgs bundle from the Standard Model is of the form \eqnref{eqn:vector_complex_scalar_field} with further interactions coming from the gauge fixing of the corresponding Electroweak gauge bosons, cf.\ \sssecref{sssec:gravitons_and_gauge_bosons}. These interactions correspond to type 1 and 2 from \lemref{lem:matter-model-Lagrange-densities}. More precisely, we have
\begin{align}
	\tensor[_1]{\! T}{} & := \sum_{i \in \boldsymbol{I}_\phi} \frac{\alpha_i}{i!} \phi^i + \sum_{i \in \boldsymbol{I}_\Phi} \frac{\alpha_i}{i!} \big ( \Phi^\dagger \Phi \big )^i
	\intertext{and}
	\tensor[_2]{\! T}{_\mu _\nu} & := \frac{1}{2} \left ( \partial_\mu \phi \right ) \left ( \partial_\nu \phi \right ) + \big ( \nabla^{G \times_\rho \mathbb{C}^i}_\mu \Phi \big )^\dagger \big ( \nabla^{G \times_\rho \mathbb{C}^i}_\nu \Phi \big ) \, .
\end{align}

\subsubsection{Gravitons and spinor particles}

Spinor particles from the Standard Model are leptons and quarks. In the following we describe the interaction of gravitons with spinor fields and with a vector of spinor fields, subjected to the action of a gauge group \(G\). Geometrically they are described via sections \(\psi \in \Gamma \left ( \bbM, \Sigma \bbM \right )\) and \(\Psi \in \Gamma \left ( \bbM, \Sigma \bbM^{\oplus j} \right )\), respectively, where \(j = \operatorname{DimRep} \left ( \varrho \right )\) is the dimension of the representation of the gauge group \(G\) on \(\Sigma M^{\oplus j}\), acting fiberwise via \(\varrho \colon G \times \Sigma M^{\oplus j} \to \Sigma M^{\oplus j}\). The corresponding dual spinor fields are defined via
\begin{align}
	\overline{\psi} & := e_0^o \left ( \gamma_o \psi \right )^\dagger
	\intertext{and}
	\overline{\Psi} & := e_0^o \left ( \boldsymbol{\gamma}_o \Psi \right )^\dagger \, ,
\end{align}
where \(e_0^o\) is a vielbein with its curved index fixed to \(\mu \equiv 0\) and flat index \(o\), i.e.\ a vielbein contracted with the normalized timelike vector field \(e \left ( \dif t \right )\), and \(\gamma_m\) and \(\boldsymbol{\gamma}_m\) are the Dirac matrices for the Minkowski background metric \(\eta\) on \(\bbM\) and \(\bbM^{\oplus j}\), respectively. Thus, dual spinor fields depend on the metric via the vielbein \(e_0^o\) with fixed timelike curved index.\footnote{We emphasize the placement of \(\gamma_o\) and \(\boldsymbol{\gamma}_o\) in the following equations, as only the timelike Dirac matrices \(\gamma_0\) and \(\boldsymbol{\gamma}_0\) are hermitian, whereas the other Dirac matrices are antihermitian.} We remark that if the spacetime \((M,\met)\) is globally hyperbolic, it is possible to choose charts in which \(e_0^o \equiv \delta_0^o\), as is done implicitly in e.g.\ \cite{Choi_Shim_Song,Schuster,Rodigast_Schuster_1,Rodigast_Schuster_2}. However it should be noted that in this setting the theory is no longer invariant under general diffeomorphisms, but only under the subgroup of diffeomorphisms preserving global hyperbolicity. As we do not want to restrict our analysis to such charts and diffeomorphisms, we set
\begin{equation}
	\overline{\psi}_o := \left ( \gamma_o \psi \right )^\dagger
\end{equation}
for later use. Then, the corresponding Lagrange densities are given by
\begin{align}
	\mathcal{L}_\text{GR-Spinor} & = \Big ( \overline{\psi} \big ( \imaginary \slashed{\nabla}^{\Sigma M} - m_\psi \big ) \psi \Big ) \dif V_g
	\intertext{and}
	\mathcal{L}_\text{GR-Spinor\(^j\)} & = \Big ( \overline{\Psi} \big ( \imaginary \slashed{\nabla}^{G \times_\varrho \Sigma M^{\oplus j}} - \boldsymbol{m}_\Psi \big ) \Psi \Big ) \dif V_g \, ,
\end{align}
where \(\boldsymbol{m}_\Psi\) is a diagonal \(j \times j\)-matrix with entries given via the corresponding spinor particle masses, and with the Dirac operators given via
\begin{align}
	\slashed{\nabla}^{\Sigma M} & := e^{\mu m} \gamma_m \left ( \partial_\mu + \varpi_\mu \right )
	\intertext{and}
	\slashed{\nabla}^{G \times_\varrho \Sigma M^{\oplus j}} & := e^{\mu m} \boldsymbol{\gamma}_m \left ( \partial_\mu + \varpi_\mu \right ) + e^{\mu m} \gamma_m \left ( \imaginary \mathrm{g} A^a_\mu \mathfrak{b}_a \right ) \, ,
\end{align}
where \(\varpi_\mu \in \Gamma \left ( \bbM, T^* \bbM \otimes \operatorname{End} \left ( \Sigma M \right ) \right )\) is the spin connection form and \(\imaginary \mathrm{g} A \in \Gamma \big ( \bbM, T^* \bbM \otimes \operatorname{End} \left ( \Sigma M^{\oplus j} \right ) \! \big )\) the corresponding gauge group connection form. These interactions correspond to type 7, 8, 9 and 10 from \lemref{lem:matter-model-Lagrange-densities}. More precisely, we have
\begin{align}
	\tensor[_7]{\! T}{_o} & := - m_\psi \overline{\psi}_o \psi - \overline{\Psi} \boldsymbol{m}_\Psi \Psi \, , \\
	\tensor[_8]{\! T}{_o _\rho _r} & := \overline{\psi}_o \gamma_r \left ( \partial_\rho \psi \right ) + \overline{\Psi}_o \boldsymbol{\gamma}_r \left ( \partial_\rho \Psi \right ) \, , \\
	\tensor[_9]{\! T}{_o _r _s _t} & := - \frac{\imaginary}{4} \overline{\psi}_o \left ( \gamma_r \sigma_{s t} \right ) \psi - \frac{\imaginary}{4} \overline{\Psi}_o \left ( \boldsymbol{\gamma}_r \boldsymbol{\sigma}_{s t} \right ) \Psi \, ,
	\intertext{with \(\sigma_{s t} := \frac{\imaginary}{2} \left [ \gamma_s, \gamma_t \right ]\) and \(\boldsymbol{\sigma}_{s t} := \frac{\imaginary}{2} \left [ \boldsymbol{\gamma}_s, \boldsymbol{\gamma}_t \right ]\), and}
	\begin{split}
		\tensor[_{10}]{\! T}{_o _r _s _t} & := - \frac{\imaginary}{4} \overline{\psi}_o \left ( \gamma_r \sigma_{s t} \right ) \psi - \frac{\imaginary}{4} \overline{\Psi}_o \left ( \boldsymbol{\gamma}_r \boldsymbol{\sigma}_{s t} \right ) \Psi \\
		& \hphantom{:} \equiv \tensor[_9]{\! T}{_o _r _s _t} \, .
	\end{split}
\end{align}
We remark that the interaction of leptons and quarks with the Higgs and Goldstone bosons are given by
\begin{equation}
	\mathcal{L}_\text{Yukawa} = - \left ( \sum_{\{ \phi, \overline{\psi}_o, \psi \} \in \boldsymbol{I}_Y} \alpha_{\{ \phi, \overline{\psi}_o, \psi \}} \phi \overline{\psi}_o \psi \right ) \dif V_g
\end{equation}
which represent the Yukawa interaction terms for the interaction set \(\boldsymbol{I}_Y\), with corresponding coupling constants \(\alpha_{\{ \phi, \overline{\psi}_o, \psi \}}\). These interactions are of type 7 from \lemref{lem:matter-model-Lagrange-densities}. More precisely, we have
\begin{equation}
	\tensor[_7]{\! T}{_o} := - \sum_{\{ \phi, \overline{\psi}_o, \psi \} \in \boldsymbol{I}_Y} \alpha_{\{ \phi, \overline{\psi}_o, \psi \}} \phi \overline{\psi}_o \psi \, .
\end{equation}

\subsubsection{Gravitons and gauge bosons} \label{sssec:gravitons_and_gauge_bosons}

Gauge bosons from the Standard Model are the photon, the \(Z\)- and \(W^\pm\)-bosons, and the gluons. In the following we describe the interaction of gravitons with gauge bosons from a Quantum Yang--Mills theory. We denote the Yang--Mills gauge group by \(G\) and its Lie algebra by \(\mathfrak{g}\). Geometrically, gauge bosons are described via connection forms \(\imaginary \mathrm{g} A \in \Gamma \left ( \bbM, T^* \bbM \otimes \mathfrak{g} \right )\) on the underlying principle bundle. More precisely, they are given as the components with respect to a basis choice \(\set{\mathfrak{b}_a}\) on \(\mathfrak{g}\). Then, the corresponding Lagrange densities are given by\footnote{We emphasize the minus sign coming from the square of \(F^a_{\mu \nu} := \imaginary \mathrm{g} \big ( \partial_\mu A^a_\nu - \partial_\nu A^a_\mu - \mathrm{g} f^{abc} A^a_\mu A^b_\nu \big )\) in our conventions. Furthermore, we remark that this obviously also includes abelian gauge theories, such as electrodynamics, by setting \(\mathfrak{g}\) to be abelian, i.e.\ \(f^{abc} \equiv 0\). \label{ftn:YM-ED}}
\begin{align}
	\mathcal{L}_\text{GR-YM} & = \left ( \frac{1}{4 \mathrm{g}^2} \delta_{ab} g^{\mu \nu} g^{\rho \sigma} F^a_{\mu \rho} F^b_{\nu \sigma} \right ) \dif V_g
	\intertext{and the Lorenz gauge fixing by\footnotemark}
	\mathcal{L}_\text{GR-YM-GF} & = \left ( \frac{1}{2 \xi} \delta_{ab} g^{\mu \nu} g^{\rho \sigma} \big ( \nabla^{(g)}_\mu A^a_\nu \big ) \big ( \nabla^{(g)}_\rho A^b_\sigma \big ) \right ) \dif V_g \, . \label{eqn:GR-YM-GF}
\end{align}
\footnotetext{It is convenient to use the covariant Lorenz gauge fixing \(g^{\mu \nu} \nabla^{(g)}_\mu A^a_\nu \overset{!}{=} 0\), as this choice avoids couplings from graviton-ghosts to gauge ghosts \cite{Prinz_5}.}
\noindent These interactions correspond to type 3, 5 and 6 from \lemref{lem:matter-model-Lagrange-densities}. More precisely, we have\footnote{We remark the minus sign due to the covariant derivative on forms and the additional factor of 2 due to the binomial theorem in \eqnref{eqn:gauge_boson_vii}.}
\begin{align}
	\tensor[_3]{\! T}{_\mu _\nu _\rho _\sigma} & := \frac{1}{4 \mathrm{g}^2} \delta_{ab} F^a_{\mu \rho} F^b_{\nu \sigma} + \frac{1}{2 \xi} \delta_{ab} \big ( \partial_\mu A^a_\nu \big ) \big ( \partial_\rho A^b_\sigma \big ) \, , \\
	\tensor[_5]{\! T}{_\mu _\nu _\tau} & := - \frac{1}{\xi} \delta_{ab} \big ( \partial_\mu A^a_\nu \big ) A^b_\tau \label{eqn:gauge_boson_vii}
	\intertext{and}
	\tensor[_6]{\! T}{_\kappa _\lambda} & := \frac{1}{2 \xi} \delta_{ab} A^a_\kappa A^b_\lambda \, .
\end{align}
We remark that the Lorenz gauge fixing Lagrange densities for the \(Z\)- and \(W^\pm\)-bosons need slight modifications due to the spontaneous symmetry breaking and are given by
\begin{align}
	\mathcal{L}_\text{\(Z\)-Boson-GF} & = \left ( \frac{1}{2 \xi_Z} g^{\mu \nu} g^{\rho \sigma} \big ( \nabla^{(g)}_\mu Z_\nu ) \big ( \nabla^{(g)}_\rho Z_\sigma \big ) + m_Z \phi_Z g^{\mu \nu} \big ( \nabla^{(g)}_\mu Z_\nu \big ) + \frac{\xi_Z}{2} m_Z^2 \phi_Z^2 \right ) \dif V_g
	\intertext{and}
	\begin{split}
		\mathcal{L}_\text{\(W\)-Boson-GF} & = \left ( \frac{1}{\xi_W} g^{\mu \nu} g^{\rho \sigma} \big ( \nabla^{(g)}_\mu W^-_\nu \big ) \big ( \nabla^{(g)}_\rho W^+_\sigma \big ) + \xi_W m_W^2 \phi_{W^-} \phi_{W^+} \right . \\ & \phantom{= (} \left . + \imaginary m_W g^{\mu \nu} \left ( \phi_{W^+} \big ( \nabla^{(g)}_\mu W^-_\nu \big ) - \phi_{W^-} \big ( \nabla^{(g)}_\mu W^+_\nu \big ) \right ) \right ) \dif V_g \, ,
	\end{split}
\end{align}
where \(\xi_s\) is the corresponding gauge fixing parameter and \(m_s\) the corresponding mass for \(s \in \set{Z, W^+, W^-}\), and \(\phi_Z\), \(\phi_{W^+}\) and \(\phi_{W^-}\) are the Goldstone bosons. These interactions additionally require type 1, 2 and 4 from \lemref{lem:matter-model-Lagrange-densities}. More precisely, we have
\begin{align}
	\tensor[_1]{\! T}{} & := \frac{\xi_Z}{2} m_Z^2 \phi_Z^2 + \xi_W m_W^2 \phi_{W^-} \phi_{W^+} \, , \\
	\tensor[_2]{\! T}{_\mu _\nu} & := \xi_Z m_Z \phi_Z \big ( \partial_\mu Z_\nu \big ) + \imaginary \xi_W m_W \left ( \phi_{W^+} \big ( \partial_\mu W^-_\nu \big ) - \phi_{W^-} \big ( \partial_\mu W^+_\nu \big ) \right ) \, , \\
	\tensor[_3]{\! T}{_\mu _\nu _\rho _\sigma} & := \frac{1}{2 \xi_Z} \big ( \partial_\mu Z_\nu \big ) \big ( \partial_\rho Z_\sigma \big ) + \frac{1}{\xi_W} \big ( \partial_\mu W^-_\nu \big ) \big ( \partial_\rho W^+_\sigma \big )
	\intertext{and}
	\tensor[_4]{\! T}{_\tau} & := \left ( \xi_s m_s \right ) \phi^s A^{-s}_\tau \, .
\end{align}
We refer to \sssecref{sssec:gravitons_and_scalar_particles} for further interactions between \(Z\)- and \(W^\pm\)-bosons and Higgs and Goldstone bosons coming from the covariant derivative on the Higgs bundle.

\subsubsection{Gravitons and gauge ghosts} \label{sssec:gravitons_and_gauge_ghosts}

Gauge ghosts and gauge antighosts from the Standard Model, accompanying their corresponding gauge bosons \(\imaginary \mathrm{g} A \in \Gamma \left ( \bbM, T^* \bbM \otimes \mathfrak{g} \right )\), are fermionic \(\mathfrak{g}\)-valued scalar particles \(c \in \Gamma \left ( \bbM, \Pi \left ( \mathfrak{g} \right ) \right )\) and \(\overline{c} \in \Gamma \left ( \bbM, \Pi \left ( \mathfrak{g}^* \right ) \right )\). Then, the corresponding Lagrange density is given by
\begin{equation}
	\mathcal{L}_\text{GR-YM-Ghost} = \left ( g^{\mu \nu} \overline{c}_a \big ( \nabla^{(g)}_\mu \partial_\nu c^a \big ) + \imaginary \mathrm{g} g^{\mu \nu} \tensor{f}{^a _b _c} \overline{c}_a \big ( \nabla^{(g)}_\mu A^b_\nu c^c \big ) \right ) \dif V_g \, .
\end{equation}
These interactions correspond to type 2 and 4 from \lemref{lem:matter-model-Lagrange-densities}. More precisely, we have\footnote{The ghost Lagrange densities are calculated with Faddeev--Popov's method \cite{Faddeev_Popov}, cf.\ \cite[Subsubsection 2.2.3]{Prinz_2}. We mention that this construction can be embedded into a more general context, using BRST and anti-BRST operators \cite{Baulieu_Thierry-Mieg}.}
\begin{align}
	\tensor[_2]{\! T}{_\mu _\nu} & := \overline{c}_a \big ( \partial_\mu \partial_\nu c^a \big ) + \imaginary \mathrm{g} \tensor{f}{^a _b _c} \overline{c}_a \big ( \partial_\mu A^b_\nu c^c \big )
	\intertext{and}
	\tensor[_4]{\! T}{_\tau} & := \overline{c}_a \big ( \partial_\tau c^a \big ) + \imaginary \mathrm{g} \tensor{f}{^a _b _c} \overline{c}_a A^b_\tau c^c \, .
\end{align}
We remark that the interaction of Electroweak gauge ghosts with the Higgs and Goldstone bosons are given by
\begin{equation}
	\mathcal{L}_\text{EW-Ghost} = \left ( \sum_{\set{s_1, s_2, s_3} \in \boldsymbol{I}_\text{EW-Ghost}} \left ( \xi_{s_2} m_{s_2} \right ) \phi^{s_1} \overline{c}^{s_2} c^{s_3} \right ) \dif V_g \, ,
\end{equation}
where \(\xi_{s_i}\) is the corresponding gauge fixing parameter, \(m_{s_i}\) the corresponding mass for \(s_i \in \set{A, Z, W^+, W^-, H}\) and \(\boldsymbol{I}_\text{EW-Ghost}\) is the corresponding interaction set. These interactions are of type 1 from \lemref{lem:matter-model-Lagrange-densities}. More precisely, we have
\begin{equation}
	\tensor[_1]{\! T}{} := \sum_{\set{s_1, s_2, s_3} \in \boldsymbol{I}_\text{EW-Ghost}} \left ( \xi_{s_2} m_{s_2} \right ) \phi^{s_1} \overline{c}^{s_2} c^{s_3} \, .
\end{equation}
We comment that with our chosen covariant Lorenz gauge fixing in \eqnref{eqn:GR-YM-GF} there are no interactions between graviton-ghosts and gauge ghosts present. This is due to the fact, that the gauge-fixing Lagrange density is a tensor density of weight 1, cf.\ \cite{Prinz_5}.

\subsection{Feynman rules for gravitons and their ghosts}

Having done all preparations in \ssecref{ssec:preparations_graviton_graviton_ghost}, we now list the corresponding Feynman rules for gravitons and their ghosts.

\enter

\begin{thm} \label{thm:grav-fr}
	Given the metric decomposition \(g_{\mu \nu} = \eta_{\mu \nu} + \gcoupling h_{\mu \nu}\) and assume \(\left | \gcoupling \right | \left \| h \right \|_{\max} := \left | \gcoupling \right | \max_{\lambda \in \operatorname{EW} \left ( h \right )} \left | \lambda \right | < 1\), where \(\operatorname{EW} \left ( h \right )\) denotes the set of eigenvalues of \(h\). Then the graviton \(2\)-point vertex Feynman rule for (effective) Quantum General Relativity reads (where \(\zeta\) denotes the gauge parameter and we use momentum conservation on the quadratic term, i.e.\ set \(p_1^\sigma := p^\sigma\) and \(p_2^\sigma := - p^\sigma\)):
		\begin{equation}
		\begin{split}
			\gravfr_2^{\mu_1 \nu_1 \vert \mu_2 \nu_2} \left ( p^\sigma; \zeta \right ) & = \frac{\imaginary}{4} \left ( 1 - \frac{1}{\zeta} \right ) \left ( p^{\mu_1} p^{\nu_1} \hat{\eta}^{\mu_2 \nu_2} + p^{\mu_2} p^{\nu_2} \hat{\eta}^{\mu_1 \nu_1} \right ) \\
			& \hphantom{=} \mkern-55mu - \frac{\imaginary}{8} \left ( 1 - \frac{1}{\zeta} \right ) \left ( p^{\mu_1} p^{\mu_2} \hat{\eta}^{\nu_1 \nu_2} + p^{\mu_1} p^{\nu_2} \hat{\eta}^{\nu_1 \mu_2} + p^{\nu_1} p^{\mu_2} \hat{\eta}^{\mu_1 \nu_2} + p^{\nu_1} p^{\nu_2} \hat{\eta}^{\mu_1 \mu_2} \right ) \\
			& \hphantom{=} \mkern-55mu - \frac{\imaginary}{4} \left ( 1 - \frac{1}{2 \zeta} \right ) \left ( p^2 \hat{\eta}^{\mu_1 \nu_1} \hat{\eta}^{\mu_2 \nu_2} \right ) \\
			& \hphantom{=} \mkern-55mu + \frac{\imaginary}{8} \left ( p^2 \hat{\eta}^{\mu_1 \mu_2} \hat{\eta}^{\nu_1 \nu_2} + p^2 \hat{\eta}^{\mu_1 \nu_2} \hat{\eta}^{\nu_1 \mu_2} \right )
		\end{split}
		\end{equation}
	Furthermore, the graviton \(n\)-point vertex Feynman rules with \(n > 2\) for (effective) Quantum General Relativity read (where \(\pregravfr_n\) denotes the corresponding unsymmetrized Feynman rules and \(\boldsymbol{\delta}_{m_1 \neq n}\) is set to \(0\) if \(m_1 = n\) and to \(1\) else, eliminating contributions coming from total derivatives):
		\begin{subequations}
		\begin{align}
			\gravfr_n^{\mu_1 \nu_1 \vert \cdots \vert \mu_n \nu_n} \left ( p_1^\sigma, \cdots, p_n^\sigma \right ) & = \frac{\imaginary}{2^n} \sum_{\mu_i \leftrightarrow \nu_i} \sum_{s \in S_n} \pregravfr_n^{\mu_{s(1)} \nu_{s(1)} \vert \cdots \vert \mu_{s(n)} \nu_{s(n)}} \left ( p_{s(1)}^\sigma, \cdots, p_{s(n)}^\sigma \right )
			\intertext{with}
			\begin{split}
				\pregravfr_n^{\mu_1 \nu_1 \vert \cdots \vert \mu_n \nu_n} \left ( p_1^\sigma, \cdots, p_n^\sigma \right ) & = \\
				& \mkern-210mu \frac{\left ( - \gcoupling \right )^{n-2}}{2} \sum_{m_1 + m_2 = n} \left \{ \sum_{i = 0}^{m_1 - 1} \left ( \hat{\delta}^\mu_{\mu_0} \hat{\delta}^\rho_{\nu_{i+1}} \prod_{a = 0}^i \hat{\eta}^{\mu_a \nu_{a+1}} \right ) \left ( \hat{\delta}^\nu_{\mu_i} \hat{\delta}^\sigma_{\nu_{m_1}} \prod_{b = i}^{m_1 - 1} \hat{\eta}^{\mu_b \nu_{b+1}} \right ) \right . \\
				& \mkern-210mu \hphantom{\frac{\left ( - \gcoupling \right )^{n-2}}{2} \sum \{ \sum} \times \boldsymbol{\delta}_{m_1 \neq n} \left [ p^{m_1}_\mu p^{m_1}_\nu \hat{\delta}_\rho^{\mu_{m_1}} \hat{\delta}_\sigma^{\nu_{m_1}} - p^{m_1}_\mu p^{m_1}_\rho \hat{\delta}_\nu^{\mu_{m_1}} \hat{\delta}_\sigma^{\nu_{m_1}} \right ] \\
				& \mkern-210mu \hphantom{\frac{\left ( - \gcoupling \right )^{n-2}}{2} \sum} - \sum_{j + k + l = m_1 - 2} \left ( \hat{\delta}^\mu_{\mu_0} \hat{\delta}^\rho_{\nu_{j+1}} \prod_{a = 0}^j \hat{\eta}^{\mu_a \nu_{a+j}} \right ) \left ( \hat{\delta}^\nu_{\mu_j} \hat{\delta}^\sigma_{\nu_{j+k+1}} \prod_{b = j}^{j+k} \hat{\eta}^{\mu_b \nu_{b+1}} \right ) \\
				& \mkern-210mu \hphantom{\frac{\left ( - \gcoupling \right )^{n-2}}{2} \sum +} \times \left ( \hat{\delta}^\kappa_{\mu_{j+k}} \hat{\delta}^\lambda_{\nu_{m_1 - 1}} \prod_{c = j+k}^{m_1 - 2} \hat{\eta}^{\mu_c \nu_{c+1}} \right ) \\
				& \mkern-210mu \hphantom{\frac{\left ( - \gcoupling \right )^{n-2}}{2} \sum +} \times \left ( \boldsymbol{\delta}_{m_1 \neq n} \left [ \left ( p^{n-1}_{\mu} \hat{\delta}_\rho^{\mu_{n-1}} \hat{\delta}_\kappa^{\nu_{n-1}} \right ) \left ( \frac{1}{2} p^n_\lambda \hat{\delta}_\nu^{\mu_n} \hat{\delta}_\sigma^{\nu_n} - p^n_\nu \hat{\delta}_\lambda^{\mu_n} \hat{\delta}_\sigma^{\nu_n} \right ) \right . \right . \\
				& \mkern-210mu \hphantom{\frac{\left ( - \gcoupling \right )^{n-2}}{2} \sum + \times ( + \boldsymbol{\delta}_{m_1 \neq n} [} \left . + \frac{1}{2} \left ( p^{n-1}_\nu \hat{\delta}_{\mu}^{\mu_{n-1}} \hat{\delta}_\kappa^{\nu_{n-1}} \right ) \left ( p^n_\sigma \hat{\delta}_\rho^{\mu_n} \hat{\delta}_\lambda^{\nu_n} \right ) \vphantom{\left ( \frac{1}{2} \right )} \right ] \\
				& \mkern-210mu \hphantom{\frac{\left ( - \gcoupling \right )^{n-2}}{2} \sum + \times (} + \left ( p^{n-1}_\kappa \hat{\delta}_{\mu}^{\mu_{n-1}} \hat{\delta}_\rho^{\nu_{n-1}} \right ) \left ( \frac{1}{2} p^n_\nu \hat{\delta}_\sigma^{\mu_n} \hat{\delta}_\lambda^{\nu_n} - \frac{1}{4} p^n_\lambda \hat{\delta}_\nu^{\mu_n} \hat{\delta}_\sigma^{\nu_n} \right ) \\
				& \mkern-210mu \hphantom{\frac{\left ( - \gcoupling \right )^{n-2}}{2} \sum + \times (} - \left . \left ( p^{n-1}_\nu \hat{\delta}_{\mu}^{\mu_{n-1}} \hat{\delta}_\kappa^{\nu_{n-1}} \right ) \left ( \frac{1}{2} p^n_\rho \hat{\delta}_\sigma^{\mu_n} \hat{\delta}_\lambda^{\nu_n} - \frac{1}{4} p^n_\sigma \hat{\delta}_\rho^{\mu_n} \hat{\delta}_\lambda^{\nu_n} \right ) \right \} \\
				& \mkern-210mu \hphantom{\frac{\left ( - \gcoupling \right )^{n-2}}{2} \sum \{} \boldsymbol{\times} \left \{ \sum_{\substack{i + j + k + l = m_2\\i \geq j \geq k \geq l \geq 0}} \sum_{p = 0}^{j - k} \sum_{q = 0}^{k - l} \sum_{r = 0}^{q} \sum_{s = 0}^l \sum_{t = 0}^s \sum_{u = 0}^t \sum_{v = 0}^u \right . \\
				& \mkern-210mu \hphantom{\frac{\left ( - \gcoupling \right )^{n-2}}{2} \sum \{ \boldsymbol{\times}} \binom{\frac{1}{2}}{i} \binom{i}{j} \binom{j}{k} \binom{k}{l} \binom{j - k}{p} \binom{k - l}{q} \binom{q}{r} \binom{l}{s} \binom{s}{t} \binom{t}{u} \binom{u}{v} \\
				& \mkern-210mu \hphantom{\frac{\left ( - \gcoupling \right )^{n-2}}{2} \sum \{ \boldsymbol{\times}} \times \left ( - 1 \right )^{p + q - r + s - t + v} 2^{- j + l + r + s + 2t - 3u + v} 3^{- k + q - r + s - t + u} \\
				& \mkern-210mu \hphantom{\frac{\left ( - \gcoupling \right )^{n-2}}{2} \sum \{ \boldsymbol{\times}} \times \left ( \prod_{a = m_1 + 1}^{m_1 + \boldsymbol{a}} \hat{\eta}^{\mu_a \nu_a} \right ) \left ( \prod_{b = m_1 + \boldsymbol{a} + 1}^{m_1 + \boldsymbol{a} + \boldsymbol{b}} \hat{\eta}^{\mu_{b} \mu_{b + \boldsymbol{b}}} \hat{\eta}^{\nu_{b} \nu_{b + \boldsymbol{b}}} \right ) \\
				& \mkern-210mu \hphantom{\frac{\left ( - \gcoupling \right )^{n-2}}{2} \sum \{ \boldsymbol{\times}} \times \left ( \prod_{c = m_1 + \boldsymbol{a} + 2 \boldsymbol{b} + 1}^{m_1 + \boldsymbol{a} + 2 \boldsymbol{b} + \boldsymbol{c}} \hat{\eta}^{\mu_{c} \nu_{c + \boldsymbol{c}}} \hat{\eta}^{\mu_{c + \boldsymbol{c}} \nu_{c + 2 \boldsymbol{c}}} \hat{\eta}^{\mu_{c + 2 \boldsymbol{c}} \nu_{c}} \right ) \\
				& \mkern-210mu \hphantom{\frac{\left ( - \gcoupling \right )^{n-2}}{2} \sum \{ \boldsymbol{\times}} \left . \times \left ( \prod_{d = m_1 + \boldsymbol{a} + 2 \boldsymbol{b} + 3 \boldsymbol{c} + 1}^{m_1 + \boldsymbol{a} + 2 \boldsymbol{b} + 3 \boldsymbol{c} + \boldsymbol{d}} \hat{\eta}^{\mu_{d} \nu_{d + \boldsymbol{d}}} \hat{\eta}^{\mu_{d + \boldsymbol{d}} \nu_{d + 2 \boldsymbol{d}}} \hat{\eta}^{\mu_{d + 2 \boldsymbol{d}} \nu_{d + 3 \boldsymbol{d}}} \hat{\eta}^{\mu_{d + 3 \boldsymbol{d}} \nu_{d}} \right ) \right \}
			\end{split}
			\intertext{and}
			\begin{split}
				\boldsymbol{a} & := i + j + k + l - 2p - 2q - r - 2s - t - u \\
				\boldsymbol{b} & := p + q - r + s - t + 2u - 2v \\
				\boldsymbol{c} & := r + t - u \\
				\boldsymbol{d} & := v
			\end{split}
		\end{align}
		\end{subequations}
\end{thm}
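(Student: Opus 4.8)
The plan is to assemble both parts of the statement — the quadratic vertex ($n=2$) and the genuine vertices ($n>2$) — from the grade-$m$ restrictions of \sectionref{sec:expansion_lagrange_density} and the elementary Feynman-rule building blocks of \ssecref{ssec:preparations_graviton_graviton_ghost}. For $n=2$ the relevant restricted Lagrange density is $\mathcal{L}_\text{QGR}^{0,0} = -\tfrac{1}{2\gcoupling^2}\bigl(\eval{\sqrt{-\dt{g}}\,R}_{\order{\gcoupling^2}} + \tfrac{1}{2\zeta}\deDonder_{(2)}^2\bigr)\dif V_\eta$. First I would split $\eval{\sqrt{-\dt{g}}\,R}_{\order{\gcoupling^2}} = \eval{R}_{\order{\gcoupling^2}} + \eval{\sqrt{-\dt{g}}}_{\order{\gcoupling}}\eval{R}_{\order{\gcoupling}}$ (the $\eval{R}_{\order{\gcoupling^0}}$-term vanishing by flatness of the background), apply the two symmetrised functional derivatives and the Fourier transform with $\mathfrak{R}_2$, $\mathfrak{R}_1$ (\lemref{lem:Ricci_scalar_FR}) and $\mathfrak{V}_1$ (\lemref{lem:Riemannian_volume_form_FR}) and, for the gauge-fixing piece, \eqnref{eqn:de_donder-quadratic}; then impose $p_1^\sigma := p^\sigma =: -p_2^\sigma$ and collect. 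Sorting the contractions into the $\tfrac{1}{\zeta}$-proportional block, the $\tfrac{1}{2\zeta}$-proportional $p^2\hat{\eta}\hat{\eta}$ block and the gauge-independent $p^2\hat{\eta}\hat{\eta}$ block produces the displayed $\gravfr_2$ — in particular the $\hat{\eta}^{\mu_1\nu_1}\hat{\eta}^{\mu_2\nu_2}$ contributions of the Einstein--Hilbert and the de Donder sectors combine into the stated $\bigl(1-\tfrac{1}{2\zeta}\bigr)$ coefficient.

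For $n>2$ only $\mathcal{L}_\text{GR}$ contributes, since $\mathcal{L}_\text{GF}$ is strictly quadratic in $h$ and $\mathcal{L}_\text{Ghost}$ carries ghost fields; hence $\mathcal{L}_\text{QGR}^{(n-2),0} = -\tfrac{1}{2\gcoupling^2}\eval{\sqrt{-\dt{g}}\,R}_{\order{\gcoupling^n}}\dif V_\eta$. I would Leibniz-expand $\eval{\sqrt{-\dt{g}}\,R}_{\order{\gcoupling^n}} = \sum_{m_1+m_2=n}\eval{R}_{\order{\gcoupling^{m_1}}}\,\eval{\sqrt{-\dt{g}}}_{\order{\gcoupling^{m_2}}}$ using \colref{col:ricci_scalar_for_the_levi_civita_connection_restriction} and \colref{col:determinant_metric_restriction}, and then apply the $n$-fold symmetrised functional derivative. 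By the Leibniz rule it splits into $m_1$ derivatives hitting the Ricci factor and $m_2$ hitting the volume factor; the choice of which $m_1$ of the $n$ graviton legs attach to the Ricci factor, as well as the $\mu_i\leftrightarrow\nu_i$ exchanges, is precisely what the prefactor $\tfrac{1}{2^n}\sum_{\mu_i\leftrightarrow\nu_i}\sum_{s\in S_n}$ restores, so it suffices to compute the unsymmetrised $\pregravfr_n$ for the fixed assignment in which legs $1,\dots,m_1$ sit on $R$ and legs $m_1+1,\dots,n$ on $\sqrt{-\dt{g}}$. On the Ricci side I would feed \colref{col:ricci_scalar_for_the_levi_civita_connection_restriction} through the $\hat{\eta}$-chain tensors $\mathfrak{H}_n,\mathfrak{H}_n^\prime$ and the Christoffel rule $\boldsymbol{\Gamma}$ (Lemmata~\ref{lem:traces_FR} and \ref{lem:Christoffel_FR}, Corollary~\ref{col:inverse_metric_vielbeins_FR}): the $\partial\partial h$-term of $R^{\partial\Gamma}$ produces the $\sum_{i=0}^{m_1-1}$ block with the double momentum $p^{m_1}p^{m_1}$ on leg $m_1$, and the $(\partial h)(\partial h)$-terms of $R^{\partial\Gamma}$ and of $R^{\Gamma^2}$ produce the $\sum_{j+k+l=m_1-2}$ block with the two single momenta relabelled $p^{n-1},p^{n}$ on legs $n-1,n$. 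On the volume side, \lemref{lem:Riemannian_volume_form_FR} (equivalently \colref{col:determinant_metric_restriction}) gives verbatim the third brace-group with the iterated binomial sums over $i\ge j\ge k\ge l$ and $p,q,r,s,t,u,v$ and the four $\hat{\eta}$-product blocks of lengths $\boldsymbol{a},\boldsymbol{b},\boldsymbol{c},\boldsymbol{d}$. The signs then bookkeep as stated: each Fourier-transformed $\partial$ contributes $\mp\imaginary p$ and, every monomial of $R$ being quadratic in derivatives, the $\imaginary$'s pair up to the global minus sign recorded in \lemnref{lem:Ricci_scalar_FR}{lem:de_Donder_gauge_fixing_FR}, while the alternating Neumann series of \lemref{lem:inverse_metric_series} together with the prefactor $-\tfrac{1}{2\gcoupling^2}$ supplies the overall $\tfrac{(-\gcoupling)^{n-2}}{2}$ and the internal $(-1)^{p+q-r+s-t+v}$, $2^{\bullet}$, $3^{\bullet}$ weights of \colref{col:determinant_metric_restriction}.

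The one genuinely delicate point is the indicator $\boldsymbol{\delta}_{m_1\neq n}$, which multiplies only the $R^{\partial\Gamma}$-derived structures: in the sector $m_2=0$, with no graviton leg on $\sqrt{-\dt{g}}$, those contributions must be removed because they assemble, after the Fourier transform and momentum conservation, into a total derivative — the all-valence counterpart of the tadpole $\mathcal{L}_\text{QGR}^{-1,0}$ omitted in \sectionref{sec:expansion_lagrange_density}. To make this precise one would use the classical rewriting $\sqrt{-\dt{g}}\,R \equiv \partial_\mu(\cdots) + \sqrt{-\dt{g}}\,R^{\Gamma^2} + (\text{terms carrying } \partial(\sqrt{-\dt{g}}\,g^{\nu\sigma}))$, in which the boundary term never contributes to a vertex and the remaining corrections are exactly those supported when at least one leg sits on the $\sqrt{-\dt{g}}$-prefactor. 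I expect this isolation — together with the sheer combinatorial bookkeeping of threading the $\hat{\eta}$-contraction chains when the $n$ functional derivatives are spread over the several copies of the Neumann series inside $R$, the two derivative-carrying gravitons and the iterated binomial expansion of $\sqrt{-\dt{g}}$, and the final check that the double symmetrisation exactly reconstitutes the Bose-symmetric vertex — to be the main obstacle; everything else is a direct application of the Corollaries and Lemmata already in place.
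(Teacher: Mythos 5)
Your proposal follows essentially the same route as the paper's proof: a Leibniz expansion of \(\eval{\sqrt{-\dt{g}}\,R}_{\order{\gcoupling^n}}\) into \(\sum_{m_1+m_2=n}\eval{R}_{\order{\gcoupling^{m_1}}}\eval{\sqrt{-\dt{g}}}_{\order{\gcoupling^{m_2}}}\), assembled from \lemref{lem:Ricci_scalar_FR} and \lemref{lem:Riemannian_volume_form_FR} (via \lemref{lem:traces_FR}), with the de Donder contribution of \lemref{lem:de_Donder_gauge_fixing_FR} added only at \(n=2\) and the \(\boldsymbol{\delta}_{m_1\neq n}\) indicator accounting for the total derivatives arising from the degree-\(n\) \(R^{\partial\Gamma}\) contributions. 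In fact you supply more detail than the paper's (very terse) argument, in particular on the isolation of the boundary term behind \(\boldsymbol{\delta}_{m_1\neq n}\), and the sign and prefactor bookkeeping you describe is consistent with the stated result.
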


\begin{proof}
	This follows from the combination of Lemmata \ref{lem:traces_FR}, \ref{lem:Ricci_scalar_FR}, \ref{lem:de_Donder_gauge_fixing_FR} and \ref{lem:Riemannian_volume_form_FR}, since we have
	\begin{subequations}
	\begin{align}
		\gravfr_n^{\mu_1 \nu_1 \vert \cdots \vert \mu_n \nu_n} \left ( p_1^\sigma, \cdots, p_n^\sigma \right ) & = \frac{\imaginary}{2^n} \sum_{\mu_i \leftrightarrow \nu_i} \sum_{s \in S_n} \pregravfr_n^{\mu_{s(1)} \nu_{s(1)} \vert \cdots \vert \mu_{s(n)} \nu_{s(n)}} \left ( p_{s(1)}^\sigma, \cdots, p_{s(n)}^\sigma \right )
		\intertext{with}
		\begin{split}
			\pregravfr_n^{\mu_1 \nu_1 \vert \cdots \vert \mu_n \nu_n} \left ( p_1^\sigma, \cdots, p_n^\sigma \right ) & = \\ & \hphantom{=} \mkern-120mu - \frac{1}{2 \gcoupling^2} \sum_{m = 1}^n \left ( \mathfrak{R}_m^{\mu_1 \nu_1 \vert \cdots \vert \mu_m \nu_m} \left ( p_1^\sigma, \cdots, p_m^{\sigma_m} \right ) \times \mathfrak{V}_{n - m}^{\mu_{n - m} \nu_{n - m} \vert \cdots \vert \mu_n \nu_n} \right ) \\ & \hphantom{= - \frac{\imaginary}{\gcoupling^2} \sum_{m = 1}^n} \mkern-120mu + \boldsymbol{\delta}_{n = 2} \frac{1}{2 \zeta} \deDonderFR_2^{\mu_1 \nu_1 \vert \mu_2 \nu_2} \left ( p_1^\sigma, p_2^\sigma \right ) \, ,
		\end{split}
	\end{align}
	\end{subequations}
	where \(\boldsymbol{\delta}_{n = 2}\) is set to 1 for \(n = 2\) and to 0 else, and modulo total derivatives which come from the \(R^{\partial \Gamma}\) contributions of degree \(n\).
\end{proof}

\enter

\begin{rem} \label{rem:one-valent-FR}
	The one-valent Feynman rule actually reads
	\begin{equation}
		\gravfr_1^{\mu_1 \nu_1} \left ( p_1^\sigma \right ) = \frac{\imaginary}{2 \gcoupling} \left ( p_1^{\mu_1} p_1^{\nu_1} - p_1^2 \hat{\eta}^{\mu_1 \nu_1} \right ) \, .
	\end{equation}
	However this term comes from a total derivative in the Lagrange density and can thus be set to zero. Equivalently, on the level of Feynman rules, it vanishes due to momentum conservation.
\end{rem}

\enter

\begin{thm} \label{thm:grav-prop}
	Given the situation of \thmref{thm:grav-fr}, the graviton propagator Feynman rule for (effective) Quantum General Relativity reads:
	\begin{equation} \label{eqn:grav-prop}
	\begin{split}
		\gravprop_{\mu_1 \nu_1 \vert \mu_2 \nu_2} \left ( p^\sigma; \zeta; \epsilon \right ) & = - \frac{2 \imaginary}{p^2 + \imaginary \epsilon} \left [ \vphantom{\frac{1}{p^2}} \left ( \hat{\eta}_{\mu_1 \mu_2} \hat{\eta}_{\nu_1 \nu_2} + \hat{\eta}_{\mu_1 \nu_2} \hat{\eta}_{\nu_1 \mu_2} - \hat{\eta}_{\mu_1 \nu_1} \hat{\eta}_{\mu_2 \nu_2} \right ) \right . \\
		& \! \! \! \! \! \! \! \! \! \! \! \! \! \! \! \! \! \hphantom{\frac{2}{p^2 + \imaginary \epsilon}} \left . - \left ( \frac{1 - \zeta}{p^2} \right ) \left ( \hat{\eta}_{\mu_1 \mu_2} p_{\nu_1} p_{\nu_2} + \hat{\eta}_{\mu_1 \nu_2} p_{\nu_1} p_{\mu_2} + \hat{\eta}_{\nu_1 \mu_2} p_{\mu_1} p_{\nu_2} + \hat{\eta}_{\nu_1 \nu_2} p_{\mu_1} p_{\mu_2} \right ) \right ]
	\end{split}
	\end{equation}
\end{thm}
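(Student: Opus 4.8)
The plan is to verify directly that the stated tensor satisfies the defining equation
\[
\gravprop_{\mu_1 \nu_1 \vert \mu_2 \nu_2} \left ( p^\sigma; \zeta; 0 \right ) \gravfr_2^{\mu_2 \nu_2 \vert \mu_3 \nu_3} \left ( p^\sigma; \zeta \right ) = \frac{1}{2} \left ( \hat{\delta}_{\mu_1}^{\mu_3} \hat{\delta}_{\nu_1}^{\nu_3} + \hat{\delta}_{\mu_1}^{\nu_3} \hat{\delta}_{\nu_1}^{\mu_3} \right ) \, ,
\]
with \(\gravfr_2\) taken from \thmref{thm:grav-fr}, and then to reinstate the Feynman prescription for the resulting pole. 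I would work in the space of rank-four tensors that are symmetric within each index pair \(\mu_i \nu_i\) and under exchange of the two pairs; a spanning set built from \(\hat{\eta}\) and the momentum \(p\) is given by the symmetrized identity \(E_1\), the trace tensor \(E_2 := \hat{\eta}^{\mu_1 \nu_1} \hat{\eta}^{\mu_2 \nu_2}\), the symmetrization \(E_3\) of \(\hat{\eta}^{\mu_1 \mu_2} p^{\nu_1} p^{\nu_2}\), the tensor \(E_4 := \hat{\eta}^{\mu_1 \nu_1} p^{\mu_2} p^{\nu_2} + \hat{\eta}^{\mu_2 \nu_2} p^{\mu_1} p^{\nu_1}\), and \(E_5 := p^{\mu_1} p^{\nu_1} p^{\mu_2} p^{\nu_2}\). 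The one genuinely preparatory step is to tabulate the finite ``multiplication table'': the result of contracting \(E_a\) with \(E_b\) over one index pair, re-expanded in the \(E_c\) with coefficients that are polynomial in \(p^2\).

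With this in hand, I would rewrite both objects in the basis: by \thmref{thm:grav-fr} the quadratic vertex reads \(\gravfr_2 = \frac{\imaginary}{4} p^2 E_1 - \frac{\imaginary}{4}\left(1 - \frac{1}{2\zeta}\right) p^2 E_2 - \frac{\imaginary}{4}\left(1 - \frac{1}{\zeta}\right) E_3 + \frac{\imaginary}{4}\left(1 - \frac{1}{\zeta}\right) E_4\), and the claimed propagator is \(\gravprop = - \frac{2\imaginary}{p^2 + \imaginary \epsilon}\big(2 E_1 - E_2 - \frac{2(1-\zeta)}{p^2} E_3\big)\). Setting \(\epsilon = 0\) and contracting with the multiplication table, I would check that the contributions of \(E_2, E_3, E_4, E_5\) cancel identically — this is where the \(\zeta\)-dependent pieces of \(\gravfr_2\) must conspire with the \((1-\zeta)/p^2\) term of \(\gravprop\) — and that the coefficient of \(E_1\) equals \(1\), i.e. the product equals \(E_1\), written with mixed indices as \(\frac{1}{2}(\hat{\delta}_{\mu_1}^{\mu_3}\hat{\delta}_{\nu_1}^{\nu_3} + \hat{\delta}_{\mu_1}^{\nu_3}\hat{\delta}_{\nu_1}^{\mu_3})\). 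Equivalently, one may make the ansatz \(\gravprop = \sum_{a} c_a(p^2,\zeta) E_a\), impose the defining equation, and solve the resulting linear system; one finds \(c_4 = c_5 = 0\) and recovers the stated coefficients. The crucial point for consistency is that the right-hand side is demanded to be exactly \(E_1\), with no admixture of the trace tensor \(E_2\) — precisely the convention excluding the a priori possible term \(\hat{\eta}_{\mu_1 \nu_1} \hat{\eta}^{\mu_3 \nu_3}\) — since the full symmetrized identity is not in the image of contraction with \(\gravfr_2\).

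I expect the main obstacle to be purely organizational: keeping track of where factors of \(p^2\) enter the multiplication table and matching the \(1/\zeta\)-dependence so that everything outside \(E_1\) cancels. A clean way to avoid index gymnastics, and a good internal check, is to pass instead to the transverse and longitudinal projectors \(\theta_{\mu\nu} := \hat{\eta}_{\mu\nu} - p_\mu p_\nu / p^2\) and \(\omega_{\mu\nu} := p_\mu p_\nu / p^2\) and the induced spin-\(2\), spin-\(1\) and spin-\(0\) projectors on symmetric two-tensors; in that basis \(\gravfr_2\) is block-diagonal, with the gauge-fixing term \(\deDonderFR_2\) (cf.\ \lemref{lem:de_Donder_gauge_fixing_FR}) only modifying the non-spin-\(2\) blocks, so that inversion is immediate block by block, and re-expanding in \(\hat{\eta}\) and \(p\) yields \eqnref{eqn:grav-prop}. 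Finally, since the resulting propagator carries an overall simple pole at \(p^2 = 0\), I would reinstate the Feynman prescription \(p^2 \mapsto p^2 + \imaginary \epsilon\) to regulate the Landau singularity, which leaves the algebraic identity at \(\epsilon = 0\) untouched.
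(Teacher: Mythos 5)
Your proposal is correct and takes essentially the same route as the paper: the paper's proof likewise just recalls \(\gravfr_2\) from \thmref{thm:grav-fr} and inverts it subject to the defining relation \(\gravfr_2 \cdot \gravprop = \frac{1}{2}(\hat{\delta}^{\mu_1}_{\mu_3}\hat{\delta}^{\nu_1}_{\nu_3} + \hat{\delta}^{\mu_1}_{\nu_3}\hat{\delta}^{\nu_1}_{\mu_3})\), with the same convention of treating each pair \(\mu_i\nu_i\) as a single index so that no \(\hat{\eta}_{\mu_1\nu_1}\hat{\eta}^{\mu_3\nu_3}\) term appears on the right-hand side. You merely spell out the tensor-basis (or spin-projector) bookkeeping that the paper leaves implicit, and the cancellations you anticipate do check out.
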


\begin{proof}
	To calculate the graviton propagator, we recall
	\begin{equation} \label{eqn:grav-fr-quadratic}
	\begin{split}
		\gravfr_2^{\mu_1 \nu_1 \vert \mu_2 \nu_2} \left ( p^\sigma; \zeta \right ) & = \frac{\imaginary}{4} \left ( 1 - \frac{1}{\zeta} \right ) \left ( p^{\mu_1} p^{\nu_1} \hat{\eta}^{\mu_2 \nu_2} + p^{\mu_2} p^{\nu_2} \hat{\eta}^{\mu_1 \nu_1} \right ) \\
		& \hphantom{=} \mkern-55mu - \frac{\imaginary}{8} \left ( 1 - \frac{1}{\zeta} \right ) \left ( p^{\mu_1} p^{\mu_2} \hat{\eta}^{\nu_1 \nu_2} + p^{\mu_1} p^{\nu_2} \hat{\eta}^{\nu_1 \mu_2} + p^{\nu_1} p^{\mu_2} \hat{\eta}^{\mu_1 \nu_2} + p^{\nu_1} p^{\nu_2} \hat{\eta}^{\mu_1 \mu_2} \right ) \\
		& \hphantom{=} \mkern-55mu - \frac{\imaginary}{4} \left ( 1 - \frac{1}{2 \zeta} \right ) \left ( p^2 \hat{\eta}^{\mu_1 \nu_1} \hat{\eta}^{\mu_2 \nu_2} \right ) \\
		& \hphantom{=} \mkern-55mu + \frac{\imaginary}{8} \left ( p^2 \hat{\eta}^{\mu_1 \mu_2} \hat{\eta}^{\nu_1 \nu_2} + p^2 \hat{\eta}^{\mu_1 \nu_2} \hat{\eta}^{\nu_1 \mu_2} \right )
	\end{split}
	\end{equation}
	from \thmref{thm:grav-fr} and then invert it to obtain the propagator, i.e.\ such that\footnote{Where we treat the tuples of indices \(\mu_i \nu_i\) as one index, i.e.\ exclude the a priori possible term \(\hat{\eta}^{\mu_1 \nu_1} \hat{\eta}_{\mu_3 \nu_3}\) on the right hand side.}
	\begin{equation}
		\gravfr_2^{\mu_1 \nu_1 \vert \mu_2 \nu_2} \left ( p^\sigma; \zeta \right ) \gravprop_{\mu_2 \nu_2 \vert \mu_3 \nu_3} \left ( p^\sigma; \zeta; 0 \right ) = \frac{1}{2} \left ( \hat{\delta}^{\mu_1}_{\mu_3} \hat{\delta}^{\nu_1}_{\nu_3} + \hat{\delta}^{\mu_1}_{\nu_3} \hat{\delta}^{\nu_1}_{\mu_3} \right )
	\end{equation}
	holds, and we obtain \eqnref{eqn:grav-prop}.
\end{proof}

\enter

\begin{thm} \label{thm:ghost-fr}
	Given the situation of \thmref{thm:grav-fr}, the graviton-ghost \(2\)-point vertex Feynman rule for (effective) Quantum General Relativity reads:
	\begin{equation}
		\ghostfr_2^{\rho_1 \rho_2} \left ( p^\sigma \right ) = \frac{\imaginary}{2 \zeta} p^2 \hat{\eta}^{\rho_1 \rho_2}
	\end{equation}
	Furthermore, the graviton-ghost \(n\)-point vertex Feynman rules with \(n > 2\) for (effective) Quantum General Relativity read   (where \(\preghostfr_n\) denotes the corresponding unsymmetrized Feynman rules):
	\begin{subequations}
	\begin{align}
		\begin{split}
			\ghostfr_n^{\rho_1 \vert \rho_2 \| \mu_3 \nu_3 \vert \cdots \vert \mu_n \nu_n} \left ( p_1^\sigma, \cdots, p_n^\sigma \right ) & = \\
			& \hphantom{=} \mkern-72mu \frac{\imaginary}{2^{n-2}} \sum_{\mu_i \leftrightarrow \nu_i} \sum_{\substack{s \in S_{n-2}\\\tilde{s}(i) := s(i-2)+2}} \preghostfr_n^{\rho_1 \vert \rho_2 \| \mu_{\tilde{s}(3)} \nu_{\tilde{s}(3)} \vert \cdots \vert \mu_{\tilde{s}(n)} \nu_{\tilde{s}(n)}} \left ( p_1^\sigma, \cdots, p_n^\sigma \right )
		\end{split}
		\intertext{with}
		\begin{split}
			\preghostfr_n^{\rho_1 \vert \rho_2 \| \mu_3 \nu_3 \vert \cdots \vert \mu_n \nu_n} \left ( p_1^\sigma, \cdots, p_n^\sigma \right ) & = \frac{\left ( - \gcoupling \right )^{n-2}}{4} \left \{ \Bigg ( \hat{\delta}^{\rho_1}_{\mu_3} \hat{\delta}^{\mu}_{\nu_{n+1}} \prod_{a = 3}^n \hat{\eta}^{\mu_a \nu_{a+1}} \Bigg ) \hat{\eta}^{\rho_2 \nu} \hat{\eta}^{\rho \sigma} \right . \\
			& \hphantom{=} \times \Bigg [ p^{(2)}_\nu \bigg ( p^{(3)}_\rho \hat{\delta}_\mu^{\mu_3} \hat{\delta}_\sigma^{\nu_3} + p^{(3)}_\sigma \hat{\delta}_\rho^{\mu_3} \hat{\delta}_\mu^{\nu_3} - p^{(3)}_\mu \hat{\delta}_\rho^{\mu_3} \hat{\delta}_\sigma^{\nu_3} \bigg ) \\
			& \hphantom{= \times [} \left . - 2 p^{(2)}_\rho \bigg ( p^{(3)}_\sigma \hat{\delta}_\nu^{\mu_3} \hat{\delta}_\mu^{\nu_3} + p^{(3)}_\nu \hat{\delta}_\mu^{\mu_3} \hat{\delta}_\sigma^{\nu_3} - p^{(3)}_\mu \hat{\delta}_\sigma^{\mu_3} \hat{\delta}_\nu^{\nu_3} \bigg ) \Bigg ] \right \} \, ,
		\end{split}
	\end{align}
	\end{subequations}
	where particle \(1\) is the graviton-ghost, particle \(2\) is the graviton-antighost and the other particles are gravitons.
\end{thm}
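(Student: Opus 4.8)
The plan is to follow the template used for \thmref{thm:grav-fr}: by definition the $n$-point graviton-ghost vertex Feynman rule is obtained by Fourier transforming the grade-$(n-2)$ part in $\gcoupling$ of the ghost Lagrange density that is linear in the ghost and antighost, i.e.\ $\mathcal{L}_\text{QGR}^{n-2,1}$, and then acting with the symmetrized functional derivatives $\textfrac{\bar{\delta}}{\bar{\delta}\widehat{\gravitonghost}_{\rho_1}}$, $\textfrac{\bar{\delta}}{\bar{\delta}\widehat{\overline{\gravitonghost}}_{\rho_2}}$ and $\prod_{i=3}^{n}\textfrac{\bar{\delta}}{\bar{\delta}\hat{h}_{\mu_i \nu_i}}$, together with the prefactor $\imaginary$. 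A welcome simplification relative to \thmref{thm:grav-fr} is that $\mathcal{L}_\text{Ghost}$ is written with the flat volume form $\dif V_\eta$, so no factors of $\sqrt{-\dt{g}}$, hence no contribution of \lemref{lem:Riemannian_volume_form_FR}, appear.

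For $n=2$ only the first summand $-\tfrac{1}{2\zeta}\eta^{\rho\sigma}\overline{\gravitonghost}^\mu(\partial_\rho\partial_\sigma\gravitonghost_\mu)\dif V_\eta$ contributes, since the second summand starts at order $\gcoupling$. Fourier transforming the two derivatives, so that $\eta^{\rho\sigma}\partial_\rho\partial_\sigma$ becomes $\pm p^2$ with the sign conventions used in the proof of \lemref{lem:Ricci_scalar_FR}, then differentiating with respect to $\widehat{\gravitonghost}_{\rho_1}$ and $\widehat{\overline{\gravitonghost}}_{\rho_2}$, which by the placement of $\eta^{\rho\sigma}$ and $\overline{\gravitonghost}^\mu=\eta^{\mu\nu}\overline{\gravitonghost}_\nu$ just produces $\hat{\eta}^{\rho_1\rho_2}$ up to a Grassmann sign from the ordering of the two derivatives, and collecting the signs, yields $\ghostfr_2^{\rho_1\rho_2}(p^\sigma)=\tfrac{\imaginary}{2\zeta}p^2\hat{\eta}^{\rho_1\rho_2}$.

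For $n>2$ only the second summand $-\tfrac12\eta^{\rho\sigma}\overline{\gravitonghost}^\mu\big(\partial_\mu(\tensor{\Gamma}{^\nu _\rho _\sigma}\gravitonghost_\nu)-2\partial_\rho(\tensor{\Gamma}{^\nu _\mu _\sigma}\gravitonghost_\nu)\big)\dif V_\eta$ contributes. The key move is the factorization $\tensor{\Gamma}{^\nu _\rho _\sigma}=g^{\nu\lambda}\Gamma_{\rho\sigma\lambda}$, with $\Gamma_{\rho\sigma\lambda}$ the fully lowered Christoffel symbol of \lemref{lem:Christoffel_FR}, which is linear in the graviton, and $g^{\nu\lambda}$ the Neumann series of \lemref{lem:inverse_metric_series}, whose Fourier-transformed symmetrized derivatives are the $\mathfrak{H}$-tensors of \lemref{lem:traces_FR} via Corollary~\ref{col:inverse_metric_vielbeins_FR}. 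Restricting to order $\gcoupling^{n-2}$ forces exactly one graviton into $\Gamma_{\rho\sigma\lambda}$ --- the ``special'' one, labelled graviton $3$ in the unsymmetrized expression and carrying momentum $p^{(3)}$ --- and the remaining $n-3$ gravitons into $(h^{n-3})^{\nu\lambda}$, producing the chain $\prod_a\hat{\eta}^{\mu_a\nu_{a+1}}$. The outer derivatives $\partial_\mu$ and $-2\partial_\rho$ act on the full product $\Gamma_{\rho\sigma\lambda}\gravitonghost_\nu$, and by momentum conservation the momentum they produce is (minus) that of the antighost, which is why only $p^{(2)}_\nu$, $p^{(2)}_\rho$ and $p^{(3)}$ appear and no ghost or bulk-graviton momenta do. Assembling the prefactors --- the $-\tfrac12$ from $\mathcal{L}_\text{Ghost}$, the $(-\gcoupling)^{n-3}$ from the Neumann series, the $\tfrac{\gcoupling}{2}$ from $\Gamma_{\rho\sigma\lambda}$, the $\hat{\eta}^{\rho\sigma}$ from $\eta^{\rho\sigma}$, the $\hat{\eta}^{\rho_2\nu}$ from $\textfrac{\bar{\delta}\overline{\gravitonghost}^\mu}{\bar{\delta}\widehat{\overline{\gravitonghost}}_{\rho_2}}$, and the Fourier signs as in \lemnref{lem:Christoffel_FR}{lem:de_Donder_gauge_fixing_FR}, which combine to the $\tfrac{(-\gcoupling)^{n-2}}{4}$ in front of $\preghostfr_n$ --- and finally symmetrizing over the $S_{n-2}$ permutations of the graviton labels $3,\dots,n$ (this averages the choice of which graviton sits in the Christoffel) and over the $\mu_i\leftrightarrow\nu_i$ swaps, with the normalization $\tfrac{1}{2^{n-2}}$, gives $\ghostfr_n$.

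The main obstacle is the bookkeeping of the derivatives in the interaction term: one must determine correctly, after Fourier transforming the outer $\partial_\mu$, $\partial_\rho$ acting on the product $\Gamma_{\rho\sigma\lambda}\gravitonghost_\nu$ together with the derivative already inside $\Gamma_{\rho\sigma\lambda}$, which external momentum is routed to which slot, and one must reconcile the unsymmetrized chain coming from a fixed monomial of the Neumann series with the intrinsically symmetrized $\mathfrak{H}$ building block so that no permutation is over- or under-counted in the final $S_{n-2}$ sum. The remaining steps --- expanding $g^{\nu\lambda}$, applying Corollary~\ref{col:inverse_metric_vielbeins_FR} and \lemref{lem:Christoffel_FR}, distributing the three terms of $\Gamma_{\rho\sigma\lambda}$ and the two outer-derivative terms, and collecting the index contractions --- are routine.
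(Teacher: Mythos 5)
Your proposal is correct and follows essentially the same route as the paper: the $n=2$ case is read off from the kinetic term, and for $n>2$ the paper likewise factorizes $\tensor{\Gamma}{^\nu _\rho _\sigma}=g^{\nu\lambda}\Gamma_{\rho\sigma\lambda}$, inserts the Neumann-series chain $\mathfrak{h}_{n-3}$ from \lemref{lem:traces_FR} together with $\boldsymbol{\Gamma}$ from \lemref{lem:Christoffel_FR}, and converts the outer-derivative momentum sum $\sum_{k\neq 2}p^{(k)}_\tau$ into $-p^{(2)}_\tau$ by momentum conservation, exactly as you describe. The bookkeeping issues you flag (momentum routing and the reconciliation of the unsymmetrized chain with the $S_{n-2}$ sum) are precisely the ones the paper's displayed intermediate formula resolves.
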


\begin{proof}
	The case \(n = 2\) is immediate and the cases \(n > 2\) follow directly from Lemmata \ref{lem:traces_FR} and \ref{lem:Christoffel_FR}, since we have for \(n > 2\)
	\begin{equation}
	\begin{split}
		\ghostfr_n^{\rho_1 \vert \rho_2 \| \mu_3 \nu_3 \vert \cdots \vert \mu_n \nu_n} \left ( p_1^\sigma, \cdots, p_n^\sigma \right ) & = \\
		& \hphantom{=} \mkern-72mu \frac{\imaginary}{2^{n-2}} \sum_{\mu_i \leftrightarrow \nu_i} \sum_{\substack{s \in S_{n-2}\\\tilde{s}(i) := s(i-2)+2}} \preghostfr_n^{\rho_1 \vert \rho_2 \| \mu_{\tilde{s}(3)} \nu_{\tilde{s}(3)} \vert \cdots \vert \mu_{\tilde{s}(n)} \nu_{\tilde{s}(n)}} \left ( p_1^\sigma, \cdots, p_n^\sigma \right )
	\end{split}
	\end{equation}
	with
	\begin{multline}
		\preghostfr_n^{\rho_1 \vert \rho_2 \| \mu_3 \nu_3 \vert \cdots \vert \mu_n \nu_n} \left ( p_1^\sigma, \cdots, p_n^\sigma \right ) = \\ \frac{\left ( - 1 \right )^{n}}{2} \mathfrak{h}_{n-3}^{\rho_1 \mu \triplevert \mu_4 \nu_4 \vert \cdots \vert \mu_n \nu_n} \hat{\eta}^{\rho_2 \nu} \hat{\eta}^{\rho \sigma} \left [ \left ( \sum_{\substack{k = 1\\k \neq 2}}^n p_\nu^k \right ) \boldsymbol{\Gamma}^{\mu_3 \nu_3}_{\rho \sigma \mu} \left ( p_1^\sigma \right ) - 2 \left ( \sum_{\substack{k = 1\\k \neq 2}}^n p_\rho^k \right ) \boldsymbol{\Gamma}^{\mu_3 \nu_3}_{\nu \sigma \mu} \left ( p_1^\sigma \right ) \right ] \, ,
	\end{multline}
	and then used momentum conservation twice, i.e.\ the relation
	\begin{equation}
		\left ( \sum_{\substack{k = 1\\k \neq 2}}^n p_\tau^{(k)} \right ) = - p^{(2)}_\tau \, .
	\end{equation}
\end{proof}

\enter

\begin{thm} \label{thm:ghost-prop}
	Given the situation of \thmref{thm:grav-fr}, the graviton-ghost propagator Feynman rule for (effective) Quantum General Relativity reads:
	\begin{equation} \label{eqn:grav-ghost-prop}
		\ghostprop_{\rho_1 \vert \rho_2} \left ( p^2, \epsilon \right ) = - \frac{2 \imaginary \zeta}{p^2 + \imaginary \epsilon} \hat{\eta}_{\rho_1 \rho_2}
	\end{equation}
\end{thm}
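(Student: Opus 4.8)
The plan is to invert the quadratic graviton-ghost vertex Feynman rule, following precisely the definition of \(\ghostprop\) laid out earlier, namely that it satisfies \(\ghostprop_{\rho_1 \vert \rho_2}\left(p^\sigma; 0\right) \ghostfr_2^{\rho_2 \vert \rho_3}\left(p^\sigma\right) = \hat{\delta}_{\rho_1}^{\rho_3}\). First I would recall from \thmref{thm:ghost-fr} that the graviton-ghost two-point vertex is \(\ghostfr_2^{\rho_1 \rho_2}\left(p^\sigma\right) = \frac{\imaginary}{2 \zeta} p^2 \hat{\eta}^{\rho_1 \rho_2}\), i.e.\ that the ghost kinetic operator is, in the tangent-index sense, simply a scalar multiple of the identity.

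Second, I would make the ansatz \(\ghostprop_{\rho_1 \vert \rho_2}\left(p^\sigma; 0\right) = c\left(p^2\right) \hat{\eta}_{\rho_1 \rho_2}\) and determine \(c\) from the inversion condition: contracting gives \(c\left(p^2\right) \frac{\imaginary}{2 \zeta} p^2 \hat{\delta}_{\rho_1}^{\rho_3} \overset{!}{=} \hat{\delta}_{\rho_1}^{\rho_3}\), hence \(c\left(p^2\right) = - \frac{2 \imaginary \zeta}{p^2}\). No term proportional to \(p_{\rho_1} p_{\rho_2}\) is needed, in contrast to the graviton propagator of \thmref{thm:grav-prop}, precisely because \(\ghostfr_2\) carries no longitudinal part that such a term would have to compensate.

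Third, to regulate the pole at \(p^2 = 0\) I would impose the Feynman \(\imaginary \epsilon\)-prescription, replacing \(p^2 \mapsto p^2 + \imaginary \epsilon\) in the denominator, which produces \(\ghostprop_{\rho_1 \vert \rho_2}\left(p^2; \epsilon\right) = - \frac{2 \imaginary \zeta}{p^2 + \imaginary \epsilon} \hat{\eta}_{\rho_1 \rho_2}\), exactly \eqnref{eqn:grav-ghost-prop}.

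I do not expect any genuine obstacle: since the ghost kinetic term is a scalar multiple of \(\hat{\eta}\), the inversion is a one-line computation rather than the tensorial linear system that the graviton propagator in \thmref{thm:grav-prop} requires. The only points deserving explicit comment are the absence of a longitudinal \(p_{\rho_1} p_{\rho_2}\)-contribution — justified directly by the form of \(\ghostfr_2\) from \thmref{thm:ghost-fr} — and the choice of causal boundary condition via the \(\imaginary \epsilon\)-regulator, which is the standard one.
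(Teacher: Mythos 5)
Your proposal is correct and follows essentially the same route as the paper: recall \(\ghostfr_2^{\rho_1 \vert \rho_2}\left(p^\sigma\right) = \frac{\imaginary}{2\zeta} p^2 \hat{\eta}^{\rho_1 \rho_2}\) from \thmref{thm:ghost-fr} and invert it against the condition \(\ghostfr_2^{\rho_1 \vert \rho_2} \ghostprop_{\rho_2 \vert \rho_3} = \hat{\delta}^{\rho_1}_{\rho_3}\), yielding \(-\frac{2\imaginary\zeta}{p^2+\imaginary\epsilon}\hat{\eta}_{\rho_1\rho_2}\). Your additional remarks on the absence of a longitudinal term and the \(\imaginary\epsilon\)-prescription are consistent with, though not spelled out in, the paper's one-line inversion.
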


\begin{proof}
	To calculate the graviton propagator, we recall
	\begin{equation}
		\ghostfr_2^{\rho_1 \vert \rho_2} \left ( p^\sigma \right ) = \frac{\imaginary}{2 \zeta} p^2 \hat{\eta}^{\rho_1 \rho_2}
	\end{equation}
	from \thmref{thm:ghost-fr} and then invert it to obtain the propagator, i.e.\ such that
	\begin{equation}
		\ghostfr_2^{\rho_1 \vert \rho_2} \left ( p^\sigma \right ) \ghostprop_{\rho_2 \vert \rho_3} \left ( p^2, 0 \right ) = \hat{\delta}^{\rho_1}_{\rho_3}
	\end{equation}
	holds, and we obtain \eqnref{eqn:grav-ghost-prop}.
\end{proof}

\enter

\begin{exmp} \label{exmp:FR}
	Given the situation of \thmref{thm:grav-fr}, the three- and four-valent graviton vertex Feynman rules read as follows:\footnote{We have used momentum conservation, i.e.\ performed a partial integration on the Lagrange density for General Relativity, to obtain a more compact form.}
	\begin{subequations}
	\begin{align}
		& \gravfr_3^{\mu_1 \nu_1 \vert \mu_2 \nu_2 \vert \mu_3 \nu_3} \left ( p_1^\sigma, p_2^\sigma, p_3^\sigma \right ) = \frac{\imaginary}{8} \sum_{\mu_i \leftrightarrow \nu_i} \sum_{s \in S_3} \pregravfr_3^{\mu_{s(1)} \nu_{s(1)} \vert \mu_{s(2)} \nu_{s(2)} \vert \mu_{s(3)} \nu_{s(3)}} \left ( p_{s(1)}^\sigma, p_{s(2)}^\sigma \right )
		\intertext{with}
		\begin{split}
			& \pregravfr_3^{\mu_1 \nu_1 \vert \mu_2 \nu_2 \vert \mu_3 \nu_3} \left ( p_1^\sigma, p_2^\sigma \right ) = \frac{\gcoupling}{4} \Bigg \{ \frac{1}{2} p_1^{\mu_3} p_2^{\nu_3} \hat{\eta}^{\mu_1 \mu_2} \hat{\eta}^{\nu_1 \nu_2} - p_1^{\mu_3} p_2^{\mu_1} \hat{\eta}^{\nu_1 \mu_2} \hat{\eta}^{\nu_2 \nu_3} \\
			& \phantom{\pregravfr_3^{\mu_1 \nu_1 \vert \mu_2 \nu_2 \vert \mu_3 \nu_3} \left ( p_1^\sigma, p_2^\sigma \right ) = \frac{\gcoupling}{4} \Bigg \{} + \left ( p_1 \cdot p_2 \right ) \bigg ( - \frac{1}{2} \hat{\eta}^{\mu_1 \nu_1 } \hat{\eta}^{\mu_2 \mu_3 } \hat{\eta}^{\nu_2 \nu_3 } + \hat{\eta}^{\mu_1 \nu_2 } \hat{\eta}^{\mu_2 \nu_3 } \hat{\eta}^{\mu_3 \nu_1 } \\ & \phantom{\pregravfr_3^{\mu_1 \nu_1 \vert \mu_2 \nu_2 \vert \mu_3 \nu_3} \left ( p_1^\sigma, p_2^\sigma \right ) = \frac{\gcoupling}{4} \Bigg \{ + \left ( p_1 \cdot p_2 \right ) \bigg (} - \frac{1}{4} \hat{\eta}^{\mu_1 \mu_2 } \hat{\eta}^{\nu_1 \nu_2 } \hat{\eta}^{\mu_3 \nu_3 } + \frac{1}{8} \hat{\eta}^{\mu_1 \nu_1 } \hat{\eta}^{\mu_2 \nu_2 } \hat{\eta}^{\mu_3 \nu_3 } \bigg ) \! \Bigg \}
		\end{split}
	\end{align}
	\end{subequations}
	and
	\begin{subequations}
	\begin{align}
		& \gravfr_4^{\mu_1 \nu_1 \vert \cdots \vert \mu_4 \nu_4} \left ( p_1^\sigma, \cdots , p_4^\sigma \right ) = \frac{\imaginary}{16} \sum_{\mu_i \leftrightarrow \nu_i} \sum_{s \in S_4} \pregravfr_4^{\mu_{s(1)} \nu_{s(1)} \vert \cdots \vert \mu_{s(4)} \nu_{s(4)}} \left ( p_{s(1)}^\sigma, p_{s(2)}^\sigma \right )
		\intertext{with}
		\begin{split}
		& \pregravfr_4^{\mu_1 \nu_1 \vert \cdots \vert \mu_4 \nu_4} \left ( p_1^\sigma, p_2^\sigma \right ) = \frac{\gcoupling}{4} \Bigg \{ - p_1^{\mu_3} p_2^{\nu_3} \hat{\eta}^{\mu_1 \mu_2} \hat{\eta}^{\nu_1 \mu_4} \hat{\eta}^{\nu_2 \nu_4} + 2 p_1^{\mu_3} p_2^{\mu_1} \hat{\eta}^{\nu_1 \mu_2} \hat{\eta}^{\nu_2 \mu_4} \hat{\eta}^{\nu_3 \nu_4} \\
		& \phantom{\pregravfr_4^{\mu_1 \nu_1 \vert \cdots \vert \mu_4 \nu_4} \left ( p_1^\sigma, p_2^\sigma \right ) = \frac{\gcoupling}{4} \Bigg \{} - \frac{1}{2} p_1^{\mu_3} p_2^{\mu_1} \hat{\eta}^{\nu_1 \mu_2} \hat{\eta}^{\nu_2 \nu_3} \hat{\eta}^{\mu_4 \nu_4} + p_1^{\mu_3} p_2^{\nu_3} \hat{\eta}^{\mu_1 \mu_2} \hat{\eta}^{\nu_1 \nu_2} \hat{\eta}^{\mu_4 \nu_4} \\
		& \phantom{\pregravfr_4^{\mu_1 \nu_1 \vert \cdots \vert \mu_4 \nu_4} \left ( p_1^\sigma, p_2^\sigma \right ) = \frac{\gcoupling}{4} \Bigg \{} - \frac{1}{2} p_1^{\mu_3} p_2^{\mu_4} \hat{\eta}^{\mu_1 \mu_2} \hat{\eta}^{ \nu_1 \nu_2} \hat{\eta}^{\nu_3 \nu_4} + p_1^{\mu_3} p_2^{\mu_4} \hat{\eta}^{\mu_1 \mu_2} \hat{\eta}^{\nu_1 \nu_3} \hat{\eta}^{\nu_2 \nu_4} \\
		& \phantom{\pregravfr_4^{\mu_1 \nu_1 \vert \cdots \vert \mu_4 \nu_4} \left ( p_1^\sigma, p_2^\sigma \right ) = \frac{\gcoupling}{4} \Bigg \{} - \frac{1}{2} p_1^{\mu_2} p_2^{\mu_3} \hat{\eta}^{\mu_1 \nu_1} \hat{\eta}^{\nu_2 \mu_4} \hat{\eta}^{\nu_3 \nu_4} + \frac{1}{4} p_1^{\mu_2} p_2^{\mu_1} \hat{\eta}^{\nu_1 \nu_2} \hat{\eta}^{\mu_3 \mu_4} \hat{\eta}^{\nu_3 \nu_4} \\
		& \phantom{\pregravfr_4^{\mu_1 \nu_1 \vert \cdots \vert \mu_4 \nu_4} \left ( p_1^\sigma, p_2^\sigma \right ) =} \! \! \! \! + \left ( p_1 \cdot p_2 \right ) \bigg ( - \frac{1}{16} \hat{\eta}^{\mu_1 \mu_2} \hat{\eta}^{\nu_1 \nu_2} \hat{\eta}^{\mu_3 \nu_3} \hat{\eta}^{\mu_4 \nu_4} + \frac{1}{8} \hat{\eta}^{\mu_1 \mu_2} \hat{\eta}^{\nu_1 \nu_2} \hat{\eta}^{\mu_3 \mu_4} \hat{\eta}^{\nu_3 \nu_4} \\
		& \phantom{\pregravfr_4^{\mu_1 \nu_1 \vert \cdots \vert \mu_4 \nu_4} \left ( p_1^\sigma, p_2^\sigma \right ) = + \left ( p_1 \cdot p_2 \right ) \bigg (} \! \! \! \! + \frac{1}{2} \hat{\eta}^{\mu_1 \mu_2} \hat{\eta}^{\nu_1 \mu_3} \hat{\eta}^{\nu_2 \nu_3} \hat{\eta}^{\mu_4 \nu_4} - \hat{\eta}^{\mu_1 \mu_2} \hat{\eta}^{\nu_1 \mu_3} \hat{\eta}^{\nu_2 \mu_4} \hat{\eta}^{\nu_3 \nu_4} \\
		& \phantom{\pregravfr_4^{\mu_1 \nu_1 \vert \cdots \vert \mu_4 \nu_4} \left ( p_1^\sigma, p_2^\sigma \right ) = + \left ( p_1 \cdot p_2 \right ) \bigg (} \! \! \! \! + \frac{1}{2} \hat{\eta}^{\mu_1 \mu_3} \hat{\eta}^{\nu_1 \nu_3} \hat{\eta}^{\mu_2 \mu_4} \hat{\eta}^{\nu_2 \nu_4} - \frac{1}{2} \hat{\eta}^{\mu_1 \mu_3} \hat{\eta}^{\nu_1 \mu_4} \hat{\eta}^{\mu_2 \nu_3} \hat{\eta}^{\nu_2 \nu_4} \\
		& \phantom{\pregravfr_4^{\mu_1 \nu_1 \vert \cdots \vert \mu_4 \nu_4} \left ( p_1^\sigma, p_2^\sigma \right ) = + \left ( p_1 \cdot p_2 \right ) \bigg (} \! \! \! \! - \frac{1}{4} \hat{\eta}^{\mu_1 \nu_1} \hat{\eta}^{\mu_2 \mu_3} \hat{\eta}^{\nu_2 \nu_3} \hat{\eta}^{\mu_4 \nu_4} + \frac{1}{2} \hat{\eta}^{\mu_1 \nu_1} \hat{\eta}^{\mu_2 \mu_3} \hat{\eta}^{\nu_2 \mu_4} \hat{\eta}^{\nu_3 \nu_4} \\
		& \phantom{\pregravfr_4^{\mu_1 \nu_1 \vert \cdots \vert \mu_4 \nu_4} \left ( p_1^\sigma, p_2^\sigma \right ) = + \left ( p_1 \cdot p_2 \right ) \bigg (} \! \! \! \! + \frac{1}{32} \hat{\eta}^{\mu_1 \nu_1} \hat{\eta}^{\mu_2 \nu_2} \hat{\eta}^{\mu_3 \nu_3} \hat{\eta}^{\mu_4 \nu_4} - \frac{1}{8} \hat{\eta}^{\mu_1 \nu_1} \hat{\eta}^{\mu_2 \nu_2} \hat{\eta}^{\mu_3 \mu_4} \hat{\eta}^{\nu_3 \nu_4} \bigg ) \Bigg \}
		\end{split}
	\end{align}
	\end{subequations}
	We remark that the three- and four-valent graviton vertex Feynman rules agree with the cited literature modulo prefactors and minus signs. Additionally, given the situation of \thmref{thm:ghost-fr}, the three- and four-valent graviton-ghost vertex Feynman rules read as follows:
	\begin{equation}
	\begin{split}
		\ghostfr_3^{\rho_1 \vert \rho_2 \| \mu_3 \nu_3} \left ( p_2^\sigma, p_3^\sigma \right ) & = \frac{\imaginary \gcoupling}{4} \Bigg \{ - p_2^{\rho_2} \bigg ( p_3^{\mu_3} \hat{\eta}^{\rho_1 \nu_3} + p_3^{\nu_3} \hat{\eta}^{\rho_1 \mu_3} - p_3^{\rho_1} \hat{\eta}^{\mu_3 \nu_3} \bigg ) \\
		& \phantom{= \frac{\imaginary \gcoupling}{8} \Bigg \{} - p_3^{\rho_1} \bigg ( p_2^{\mu_3} \hat{\eta}^{\rho_2 \nu_3} + p_2^{\nu_3} \hat{\eta}^{\rho_2 \mu_3} \bigg ) + p_3^{\rho_2} \bigg ( p_2^{\mu_3} \hat{\eta}^{\rho_1 \nu_3} + p_2^{\nu_3} \hat{\eta}^{\rho_1 \mu_3} \bigg ) \\
		& \phantom{= \frac{\imaginary \gcoupling}{8} \Bigg \{} + \left ( p_2 \cdot p_3 \right ) \bigg ( \hat{\eta}^{\rho_1 \mu_3} \hat{\eta}^{\rho_2 \nu_3} + \hat{\eta}^{\rho_1 \nu_3} \hat{\eta}^{\rho_2 \mu_3} \bigg ) \Bigg \}
	\end{split}
	\end{equation}
	and
	\begin{equation}
	\begin{split}
		& \ghostfr_4^{\rho_1 \vert \rho_2 \| \mu_3 \nu_3 \vert \mu_4 \nu_4} \left ( p_2^\sigma, p_3^\sigma, p_4^\sigma \right ) = \\
		& \phantom{=} \frac{\imaginary \gcoupling^2}{8} \Bigg \{ p_2^{\rho_2} \bigg ( p_3^{\nu_3} \hat{\eta}^{\rho_1 \mu_4} \hat{\eta}^{\mu_3 \nu_4} + p_3^{\mu_3} \hat{\eta}^{\rho_1 \mu_4} \hat{\eta}^{\nu_3 \nu_4} + p_3^{\nu_3} \hat{\eta}^{\rho_1 \nu_4} \hat{\eta}^{\mu_3 \mu_4} \\
		& \phantom{\frac{\imaginary \gcoupling^2}{16} \Bigg \{ \, p_2^{\rho_2} \bigg (} + p_3^{\mu_3} \hat{\eta}^{\rho_1 \nu_4} \hat{\eta}^{\nu_3 \mu_4}	- p_3^{\mu_4} \hat{\eta}^{\rho_1 \nu_4} \hat{\eta}^{\mu_3 \nu_3} - p_3^{\nu_4} \hat{\eta}^{\rho_1 \mu_4} \hat{\eta}^{\mu_3 \nu_3} \bigg ) \\
		&  \phantom{\frac{\imaginary \gcoupling^2}{16} \Bigg \{} + p_2^{\rho_2} \bigg ( p_4^{\nu_4} \hat{\eta}^{\rho_1 \mu_3} \hat{\eta}^{\mu_4 \nu_3} + p_4^{\mu_4} \hat{\eta}^{\rho_1 \mu_3} \hat{\eta}^{\nu_4 \nu_3} + p_4^{\nu_4} \hat{\eta}^{\rho_1 \nu_3} \hat{\eta}^{\mu_4 \mu_3} \\
		& \phantom{\frac{\imaginary \gcoupling^2}{16} \Bigg \{ \, p_2^{\rho_2} \bigg (} + p_4^{\mu_4} \hat{\eta}^{\rho_1 \nu_3} \hat{\eta}^{\nu_4 \mu_3}	- p_4^{\mu_3} \hat{\eta}^{\rho_1 \nu_3} \hat{\eta}^{\mu_4 \nu_4} - p_4^{\nu_3} \hat{\eta}^{\rho_1 \mu_3} \hat{\eta}^{\mu_4 \nu_4} \bigg ) \\
		& \phantom{\frac{\imaginary \gcoupling^2}{16} \Bigg \{} - p_3^{\rho_2} \bigg ( p_2^{\mu_3} \hat{\eta}^{\rho_1 \mu_4} \hat{\eta}^{\nu_3 \nu_4} + p_2^{\nu_3} \hat{\eta}^{\rho_1 \mu_4} \hat{\eta}^{\mu_3 \nu_4} + p_2^{\mu_3} \hat{\eta}^{\rho_1 \nu_4} \hat{\eta}^{\nu_3 \mu_4} + p_2^{\nu_3} \hat{\eta}^{\rho_1 \nu_4} \hat{\eta}^{\mu_3 \mu_4} \bigg ) \\
		& \phantom{\frac{\imaginary \gcoupling^2}{16} \Bigg \{} - p_4^{\rho_2} \bigg ( p_2^{\mu_4} \hat{\eta}^{\rho_1 \mu_3} \hat{\eta}^{\nu_4 \nu_3} + p_2^{\nu_4} \hat{\eta}^{\rho_1 \mu_3} \hat{\eta}^{\mu_4 \nu_3} + p_2^{\mu_4} \hat{\eta}^{\rho_1 \nu_3} \hat{\eta}^{\nu_4 \mu_3} + p_2^{\nu_4} \hat{\eta}^{\rho_1 \nu_3} \hat{\eta}^{\mu_4 \mu_3} \bigg ) \\
		& \phantom{\frac{\imaginary \gcoupling^2}{16} \Bigg \{} + p_2^{\mu_3} p_3^{\mu_4} \hat{\eta}^{\rho_1 \nu_4} \hat{\eta}^{\rho_2 \nu_3} + p_2^{\nu_3} p_3^{\mu_4} \hat{\eta}^{\rho_1 \nu_4} \hat{\eta}^{\rho_2 \mu_3} + p_2^{\mu_3} p_3^{\nu_4} \hat{\eta}^{\rho_1 \mu_4} \hat{\eta}^{\rho_2 \nu_3} + p_2^{\nu_3} p_3^{\nu_4} \hat{\eta}^{\rho_1 \mu_4} \hat{\eta}^{\rho_2 \mu_3} \\
		& \phantom{\frac{\imaginary \gcoupling^2}{16} \Bigg \{} + p_2^{\mu_4} p_4^{\mu_3} \hat{\eta}^{\rho_1 \nu_3} \hat{\eta}^{\rho_2 \nu_4} + p_2^{\mu_4} p_4^{\nu_3} \hat{\eta}^{\rho_1 \mu_3} \hat{\eta}^{\rho_2 \nu_4} + p_2^{\nu_4} p_4^{\mu_3} \hat{\eta}^{\rho_1 \nu_3} \hat{\eta}^{\rho_2 \mu_4} + p_2^{\nu_4} p_4^{\nu_3} \hat{\eta}^{\rho_1 \mu_3} \hat{\eta}^{\rho_2 \mu_4} \\
		& \phantom{\frac{\imaginary \gcoupling^2}{16} \Bigg \{} - \left ( p_2 \cdot p_3 \right ) \bigg ( \hat{\eta}^{\rho_1 \mu_4} \hat{\eta}^{\rho_2 \mu_3} \hat{\eta}^{\nu_3 \nu_4} +  \hat{\eta}^{\rho_1 \mu_4} \hat{\eta}^{\rho_2 \nu_3} \hat{\eta}^{\mu_3 \nu_4} \\
		& \phantom{\frac{\imaginary \gcoupling^2}{16} \Bigg \{ - \left ( p_2 \cdot p_3 \right ) \bigg (} + \hat{\eta}^{\rho_1 \nu_4} \hat{\eta}^{\rho_2 \mu_3} \hat{\eta}^{\nu_3 \mu_4} +  \hat{\eta}^{\rho_1 \nu_4} \hat{\eta}^{\rho_2 \nu_3} \hat{\eta}^{\mu_3 \mu_4} \bigg ) \\
		& \phantom{\frac{\imaginary \gcoupling^2}{16} \Bigg \{} - \left ( p_2 \cdot p_4 \right ) \bigg ( \hat{\eta}^{\rho_1 \mu_3} \hat{\eta}^{\rho_2 \mu_4} \hat{\eta}^{\nu_4 \nu_3} +  \hat{\eta}^{\rho_1 \mu_3} \hat{\eta}^{\rho_2 \nu_4} \hat{\eta}^{\mu_4 \nu_3} \\
		& \phantom{\frac{\imaginary \gcoupling^2}{16} \Bigg \{ - \left ( p_2 \cdot p_4 \right ) \bigg (} +  \hat{\eta}^{\rho_1 \nu_3} \hat{\eta}^{\rho_2 \mu_4} \hat{\eta}^{\nu_4 \mu_3} +  \hat{\eta}^{\rho_1 \nu_3} \hat{\eta}^{\rho_2 \nu_4} \hat{\eta}^{\mu_4 \mu_3} \bigg ) \Bigg \}
	\end{split}
	\end{equation}
\end{exmp}

\subsection{Feynman rules for gravitons and matter}

Having done all preparations in \ssecref{ssec:preparations_gravitons-matter}, we now list the corresponding Feynman rules for the interactions of gravitons with matter from the Standard Model. To this end we state the Feynman rules for the interactions according to the classification of \lemref{lem:matter-model-Lagrange-densities} and refer for the corresponding matter contributions to \cite{Romao_Silva} in order to keep this article at a reasonable length.

\enter

\begin{thm} \label{thm:matter-fr}
	Given the situation of \thmref{thm:grav-fr} and the matter-model Lagrange densities from \lemref{lem:matter-model-Lagrange-densities}, the graviton-matter \(n\)-point vertex Feynman rule for (effective) Quantum General Relativity coupled to the matter-model Lagrange density of type \(k\) reads:
	\begin{equation}
	\matterfrk_n^{\mu_1 \nu_1 \vert \cdots \vert \mu_n \nu_n} \left ( p_1^\sigma, \cdots, p_n^\sigma \right ) = \frac{\imaginary}{2^n} \sum_{\mu_i \leftrightarrow \nu_i} \sum_{s \in S_n} \prematterfrk_n^{\mu_{s(1)} \nu_{s(1)} \vert \cdots \vert \mu_{s(n)} \nu_{s(n)}} \left ( p_{s(1)}^\sigma, \cdots, p_{s(n)}^\sigma \right )
	\end{equation}
	with
	{\allowdisplaybreaks
	\begin{align}
		\prematterfri_n^{\mu_1 \nu_1 \vert \cdots \vert \mu_n \nu_n} \left ( \tensor[_1]{\! \widehat{T}}{} \right ) & = \tensor[_1]{\! \widehat{T}}{} \mathfrak{v}_n^{\mu_1 \nu_1 \vert \cdots \vert \mu_n \nu_n} \, , \\
		\begin{split}
			\prematterfrii_n^{\mu_1 \nu_1 \vert \cdots \vert \mu_n \nu_n} \left ( \tensor[_2]{\! \widehat{T}}{} \right ) & = \tensor[_2]{\! \widehat{T}}{_\mu _\nu} \sum_{m_1 + m_2 = n} \left ( -1 \right )^{m_1} \mathfrak{h}_{m_1}^{\mu \nu \triplevert \mu_1 \nu_1 \vert \cdots \vert \mu_{m_1} \nu_{m_1}} \\ & \hphantom{=} \times \mathfrak{v}_{m_2}^{\mu_{{m_1} + 1} \nu_{{m_1} + 1} \vert \cdots \vert \mu_n \nu_n} \, ,
		\end{split}
		\\
		\begin{split}
			\prematterfriii_n^{\mu_1 \nu_1 \vert \cdots \vert \mu_n \nu_n} \left ( \tensor[_3]{\! \widehat{T}}{} \right ) & = \tensor[_3]{\! \widehat{T}}{_\mu _\nu _\rho _\sigma} \sum_{\subalign{m_1 & + m_2 \\ & + m_3 = n}} \left ( - 1 \right )^{m_1 + m_2} \mathfrak{h}_{m_1}^{\mu \nu \triplevert \mu_1 \nu_1 \vert \cdots \vert \mu_{m_1} \nu_{m_1}} \\ & \hphantom{=} \times \mathfrak{h}_{m_2}^{\rho \sigma \triplevert \mu_{{m_1} + 1} \nu_{{m_1} + 1} \vert \cdots \vert \mu_{{m_1} + {m_2}} \nu_{{m_1} + {m_2}}} \\ & \hphantom{=} \times \mathfrak{v}_{m_3}^{\mu_{{m_1} + {m_2} + 1} \nu_{{m_1} + {m_2} + 1} \vert \cdots \vert \mu_n \nu_n} \, ,
		\end{split}
		\\
		\begin{split}
			\prematterfriv_n^{\mu_1 \nu_1 \vert \cdots \vert \mu_n \nu_n} \left ( \tensor[_4]{\! \widehat{T}}{}; p_1^\sigma \right ) & = \tensor[_4]{\! \widehat{T}}{_\rho} \boldsymbol{\Gamma}^{\mu_1 \nu_1}_{\mu \nu \sigma} \left ( p_1^\sigma \right ) \sum_{\subalign{m_1 & + m_2 \\ & + m_3 = n - 1}} \left ( - 1 \right )^{m_1 + m_2} \\ & \hphantom{=} \times \mathfrak{h}_{m_1}^{\mu \nu \triplevert \mu_2 \nu_2 \vert \cdots \vert \mu_{m_1 + 1} \nu_{m_1 + 1}} \\ & \hphantom{=} \times \mathfrak{h}_{m_2}^{\rho \sigma \triplevert \mu_{{m_1} + 2} \nu_{{m_1} + 2} \vert \cdots \vert \mu_{{m_1} + {m_2} + 1} \nu_{{m_1} + {m_2} + 1}} \\ & \hphantom{=} \times \mathfrak{v}_{m_3}^{\mu_{{m_1} + {m_2} + 2} \nu_{{m_1} + {m_2} + 2} \vert \cdots \vert \mu_n \nu_n} \, ,
		\end{split}
		\\
		\begin{split}
			\prematterfrv_n^{\mu_1 \nu_1 \vert \cdots \vert \mu_n \nu_n} \left ( \tensor[_5]{\! \widehat{T}}{}; p_1^\sigma \right ) & = \tensor[_5]{\! \widehat{T}}{_\rho _\sigma _\kappa} \boldsymbol{\Gamma}^{\mu_1 \nu_1}_{\mu \nu \lambda} \left ( p_1^\sigma \right ) \sum_{\subalign{m_1 & + m_2 + m_3 \\ & + m_4 = n - 1}} \left ( - 1 \right )^{m_1 + m_2 + m_3} \\ & \hphantom{=} \times \mathfrak{h}_{m_1}^{\mu \nu \triplevert \mu_2 \nu_2 \vert \cdots \vert \mu_{m_1 + 1} \nu_{m_1 + 1}} \\ & \hphantom{=} \times \mathfrak{h}_{m_2}^{\rho \sigma \triplevert \mu_{{m_1} + 2} \nu_{{m_1} + 2} \vert \cdots \vert \mu_{{m_1} + {m_2} + 1} \nu_{{m_1} + {m_2} + 1}} \\ & \hphantom{=} \times \mathfrak{h}_{m_3}^{\kappa \lambda \triplevert \mu_{{m_1} + {m_2} + 2} \nu_{{m_1} + {m_2} + 2} \vert \cdots \vert \mu_{{m_1} + {m_2} + {m_3} + 1} \nu_{{m_1} + {m_2} + {m_3} + 1}} \\ & \hphantom{=} \times \mathfrak{v}_{m_4}^{\mu_{{m_1} + {m_2} + {m_3} + 2} \nu_{{m_1} + {m_2} + {m_3} + 2} \vert \cdots \vert \mu_n \nu_n} \, ,
		\end{split}
		\\
		\begin{split}
			\prematterfrvi_n^{\mu_1 \nu_1 \vert \cdots \vert \mu_n \nu_n} \left ( \tensor[_6]{\! \widehat{T}}{}; p_1^\sigma, p_2^\sigma \right ) & = \tensor[_6]{\! \widehat{T}}{_\kappa _\iota} \boldsymbol{\Gamma}^{\mu_1 \nu_1}_{\mu \nu \lambda} \left ( p_1^\sigma \right ) \boldsymbol{\Gamma}^{\mu_2 \nu_2}_{\rho \sigma \tau} \left ( p_2^\sigma \right ) \sum_{\subalign{m_1 & + m_2 + m_3 \\ & + m_4 + m_5 = n - 2}} \\ & \hphantom{=} \times \left ( - 1 \right )^{m_1 + m_2 + m_3 + m_4} \mathfrak{h}_{m_1}^{\mu \nu \triplevert \mu_3 \nu_3 \vert \cdots \vert \mu_{m_1 + 2} \nu_{m_1 + 2}} \\ & \hphantom{=} \times \mathfrak{h}_{m_2}^{\rho \sigma \triplevert \mu_{{m_1} + {m_2} + 3} \nu_{{m_1} + {m_2} + 3} \vert \cdots \vert \mu_{{m_1} + {m_2} + 2} \nu_{{m_1} + {m_2} + 2}} \\ & \hphantom{=} \times \mathfrak{h}_{m_3}^{\kappa \lambda \triplevert \mu_{{m_1} + 1} \nu_{{m_1} + 1} \vert \cdots \vert \mu_{{m_1} + {m_2}} \nu_{{m_1} + {m_2}}} \\ & \hphantom{=} \times \mathfrak{h}_{m_4}^{\iota \tau \triplevert \mu_{{m_1} + 1} \nu_{{m_1} + 1} \vert \cdots \vert \mu_{{m_1} + {m_2}} \nu_{{m_1} + {m_2}}} \mathfrak{v}_{m_5}^{\mu_{m + 1} \nu_{m + 1} \vert \cdots \vert \mu_n \nu_n} \, ,
		\end{split}
		\\
		\begin{split}
			\prematterfrvii_n^{\mu_1 \nu_1 \vert \cdots \vert \mu_n \nu_n} \left ( \tensor[_7]{\! \widehat{T}}{} \right ) & = \tensor[_7]{\! \widehat{T}}{_o} \sum_{m_1 + m_2 = n} \binom{\frac{1}{2}}{m_1} \hat{\eta}_{0 \upsilon} \mathfrak{h}_{m_1}^{\upsilon o \triplevert \mu_1 \nu_1 \vert \cdots \vert \mu_{m_1} \nu_{m_1}} \\ & \hphantom{=} \times \mathfrak{v}_{m_2}^{\mu_{{m_1} + 1} \nu_{{m_1} + 1} \vert \cdots \vert \mu_n \nu_n} \, ,
		\end{split}
		\\
		\begin{split}
			\prematterfrviii_n^{\mu_1 \nu_1 \vert \cdots \vert \mu_n \nu_n} \left ( \tensor[_8]{\! \widehat{T}}{} \right ) & = \tensor[_8]{\! \widehat{T}}{_o _\rho _r} \sum_{\subalign{m_1 & + m_2 \\ & + m_3 = n}} \binom{\frac{1}{2}}{m_1} \binom{- \frac{1}{2}}{m_2} \hat{\eta}_{0 \upsilon} \mathfrak{h}_{m_1}^{\upsilon o \triplevert \mu_1 \nu_1 \vert \cdots \vert \mu_{m_1} \nu_{m_1}} \\ & \hphantom{=} \times \mathfrak{h}_{m_2}^{\rho r \triplevert \mu_{m_1 + 1} \nu_{m_1 + 1} \vert \cdots \vert \mu_{m_1 + m_2} \nu_{m_1 + m_2}} \\ & \hphantom{=} \times \mathfrak{v}_{m_3}^{\mu_{m_1 + m_2 + 1} \nu_{m_1 + m_2 + 1} \vert \cdots \vert \mu_n \nu_n} \, ,
		\end{split}
		\\
		\begin{split}
			\prematterfrix_n^{\mu_1 \nu_1 \vert \cdots \vert \mu_n \nu_n} \left ( \tensor[_9]{\! \widehat{T}}{}; p_1^\sigma, \cdots, p_n^\sigma \right ) & = \tensor[_9]{\! \widehat{T}}{_o _r _s _t} \sum_{\subalign{m_1 & + m_2 + m_3 \\ & + m_4 + m_5 = n}} \binom{\frac{1}{2}}{m_1} \binom{- \frac{1}{2}}{m_2} \binom{- \frac{1}{2}}{m_3} \binom{\frac{1}{2}}{m_4} \\ & \hphantom{=} \mkern-136mu \times \hat{\eta}_{0 \upsilon} \mathfrak{h}_{m_1}^{\upsilon o \triplevert \mu_1 \nu_1 \vert \cdots \vert \mu_{m_1} \nu_{m_1}} \\ & \hphantom{=} \mkern-136mu \times \mathfrak{h}_{m_2}^{\rho r \triplevert \mu_{m_1 + 1} \nu_{m_1 + 1} \vert \cdots \vert \mu_{m_1 + m_2} \nu_{m_1 + m_2}} \\ & \hphantom{=} \mkern-136mu \times \mathfrak{h}_{m_3}^{\sigma s \triplevert \mu_{m_1 + m_2 + 1} \nu_{m_1 + m_2 + 1} \vert \cdots \vert \mu_{m_1 + m_2 + m_3} \nu_{m_1 + m_2 + m_3}} \\ & \hphantom{=} \mkern-136mu \times \hat{\eta}_{\sigma \tau} \left ( \mathfrak{h}_{m_4}^\prime \right )_\rho^{\tau t \triplevert \mu_{m_1 + m_2 + m_3 + 1} \nu_{m_1 + m_2 + m_3 + 1} \vert \cdots \vert \mu_{m_1 + m_2 + m_3 + m_4} \nu_{m_1 + m_2 + m_3 + m_4}} \\ & \hphantom{\times \hat{\eta}_{\sigma \tau} \left ( \mathfrak{h}_{m_4}^\prime \right )} \mkern-136mu \left ( p_{m_1 + m_2 + m_3 + 1}^{\sigma_{m_1 + m_2 + m_3 + 1}}, \cdots, p_{m_1 + m_2 + m_3 + m_4}^{\sigma_{m_1 + m_2 + m_3 + m_4}} \right ) \\ & \hphantom{=} \mkern-136mu \times \mathfrak{v}_{m_5}^{\mu_{m_1 + m_2 + m_3 + m_4 + 1} \nu_{m_1 + m_2 + m_3 + m_4 + 1} \vert \cdots \vert \mu_n \nu_n} \, ,
		\end{split}
		\intertext{and}
		\begin{split}
			\prematterfrx_n^{\mu_1 \nu_1 \vert \cdots \vert \mu_n \nu_n} \left ( \tensor[_{10}]{\! \widehat{T}}{}; p_1^\sigma \right ) & = \tensor[_{10}]{\! \widehat{T}}{_o _r _s _t}  \boldsymbol{\Gamma}^{\mu_1 \nu_1}_{\rho \sigma \tau} \left ( p_1^\sigma \right ) \sum_{\subalign{m_1 & + m_2 + m_3 \\ & + m_4 + m_5 = n}} \binom{\frac{1}{2}}{m_1} \binom{- \frac{1}{2}}{m_2} \binom{- \frac{1}{2}}{m_3} \\ & \hphantom{=} \mkern-100mu \times \binom{- \frac{1}{2}}{m_4} \hat{\eta}_{0 \upsilon} \mathfrak{h}_{m_1}^{\upsilon o \triplevert \mu_2 \nu_2 \vert \cdots \vert \mu_{m_1 + 1} \nu_{m_1 + 1}} \\ & \hphantom{=} \mkern-100mu \times \mathfrak{h}_{m_2}^{\rho r \triplevert \mu_{m_1 + 2} \nu_{m_1 + 2} \vert \cdots \vert \mu_{m_1 + m_2 + 1} \nu_{m_1 + m_2 + 1}} \\ & \hphantom{=} \mkern-100mu \times \mathfrak{h}_{m_3}^{\sigma s \triplevert \mu_{m_1 + m_2 + 2} \nu_{m_1 + m_2 + 2} \vert \cdots \vert \mu_{m_1 + m_2 + m_3} \nu_{m_1 + m_2 + m_3}} \\ & \hphantom{=} \mkern-100mu \times \mathfrak{h}_{m_4}^{\tau t \triplevert \mu_{m_1 + m_2 + m_3 + 2} \nu_{m_1 + m_2 + m_3 + 2} \vert \cdots \vert \mu_{m_1 + m_2 + m_3 + m_4 + 1} \nu_{m_1 + m_2 + m_3 + m_4 + 1}} \\ & \hphantom{=} \mkern-100mu \times \mathfrak{v}_{m_5}^{\mu_{m_1 + m_2 + m_3 + m_4 + 2} \nu_{m_1 + m_2 + m_3 + m_4 + 2} \vert \cdots \vert \mu_n \nu_n} \, .
		\end{split}
	\end{align}
	}
\end{thm}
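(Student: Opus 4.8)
The plan is to treat all ten types uniformly. By construction, each matter-model Lagrange density from Lemma~\ref{lem:matter-model-Lagrange-densities} is a product of a graviton-free matter tensor $\tensor[_k]{\! T}{}$ with finitely many graviton-carrying building blocks: copies of the inverse metric $g^{\mu \nu}$, of vielbeins $e_\rho^r$ or $e^\rho_r$, of a derivative $\partial_\sigma e$, of a Christoffel symbol $\tensor{\Gamma}{_\mu _\nu ^\tau}$, and the Riemannian volume form $\dif V_g = \sqrt{- \dt{g}} \, \dif V_\eta$. By the defining formula for $\matterfrk_n$, the vertex Feynman rule is obtained by Fourier transforming the $\order{\gcoupling^n}$-part of this product and applying $n$ symmetrized functional derivatives $\textfrac{\bar{\delta}}{\bar{\delta} \hat{h}}$, together with one derivative with respect to the condensed matter tensor, which merely strips $\tensor[_k]{\! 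T}{} \mapsto \tensor[_k]{\! \widehat{T}}{}$. Since the functional derivative obeys the (graded) Leibniz rule, differentiating the product yields a sum over all ways of distributing the $n$ graviton legs among the factors; writing $m_i$ for the number of legs assigned to the $i$-th factor, this is exactly the partition sum $\sum_{m_1 + \dots = n}$ appearing in the claimed formulas --- shifted to $n - 1$, resp.\ $n - 2$, when one, resp.\ two, of the factors is a Christoffel symbol, since by Lemma~\ref{lem:Christoffel_FR} the building block $\boldsymbol{\Gamma}^{\mu_i \nu_i}_{\mu \nu \rho}(p_i^\sigma)$ is linear in $h$ and therefore absorbs precisely one leg.

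Concretely, I would substitute for each factor the corresponding building-block Feynman rule established in \ssecref{ssec:preparations_graviton_graviton_ghost}, in particular Corollary~\ref{col:inverse_metric_vielbeins_FR}: a copy of $g^{\mu \nu}$ carrying $m$ legs contributes $(-1)^m \mathfrak{H}_m$, which is the origin of the sign $(-1)^{m_1 + m_2 + \dots}$ in $\prematterfrk_n$; a vielbein $e_\rho^r$, resp.\ $e^\rho_r$, contributes $\binom{\frac{1}{2}}{m} \mathfrak{H}_m$, resp.\ $\binom{- \frac{1}{2}}{m} \mathfrak{H}_m$, by Lemma~\ref{lem:vielbeins_series}, which is the origin of the binomial prefactors in types 7--10; the derivative $\partial_\sigma e$ contributes $\binom{\pm \frac{1}{2}}{m} \mathfrak{H}_m^\prime$ carrying the momenta of exactly those legs attached to it; the Christoffel symbol contributes $\boldsymbol{\Gamma}^{\mu_i \nu_i}_{\mu \nu \rho}(p_i^\sigma)$; and the volume form contributes $\mathfrak{v}_m$ by Lemma~\ref{lem:Riemannian_volume_form_FR}, whose internal multi-sum over $i \geq j \geq k \geq l$ and $p, q, r, s, t, u, v$ is inherited verbatim from Corollary~\ref{col:determinant_metric_restriction}. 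Assembling these contributions with the path-integral prefactor $\imaginary$ and performing the bosonic symmetrization over $S_n$ and over the index swaps $\mu_i \leftrightarrow \nu_i$, with the normalization $1 / 2^n$ built into the symmetrized derivative, produces precisely the stated $\frac{\imaginary}{2^n} \sum_{\mu_i \leftrightarrow \nu_i} \sum_{s \in S_n} \prematterfrk_n$.

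It then remains to verify, type by type, that the index bookkeeping matches: the first graviton-carrying factor is assigned the legs $\mu_1 \nu_1, \dots, \mu_{m_1} \nu_{m_1}$ --- shifted past the one or two legs consumed by the Christoffel factor(s) where present --- the next factor the following $m_2$ legs, and so on, with the volume-form factor $\mathfrak{v}_{m_r}$ collecting the remaining $m_r$ legs; the free matter indices $o, r, s, t, \dots$ and the spacetime indices contracted between successive $\mathfrak{h}$-factors are placed exactly as dictated by the shape of the corresponding Lagrange density in Lemma~\ref{lem:matter-model-Lagrange-densities}, and the Kronecker symbols arising from the Fourier-transformed functional derivatives are suppressed wherever they can be contracted away, as in the proofs of Lemmata~\ref{lem:Ricci_scalar_FR} and~\ref{lem:de_Donder_gauge_fixing_FR}.

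The main obstacle is not conceptual but organizational: one must keep this index and momentum bookkeeping consistent across all ten cases simultaneously. The two points that require the most care are (i) the correct placement of the single momentum argument of each $\boldsymbol{\Gamma}$ factor and the associated shift of the partition sum for types 4, 5, 6 and 10, and (ii) for type 9, ensuring that the momentum flowing through $\partial_\rho e_\sigma^t$ inside $(\mathfrak{h}_{m_4}^\prime)$ is the sum over only the $m_4$ legs attached to that particular vielbein factor rather than over all $n$ graviton legs. Once these are settled, the remainder is a routine unwinding of Lemmata~\ref{lem:traces_FR} and~\ref{lem:Christoffel_FR} together with Corollaries~\ref{col:inverse_metric_vielbeins_FR} and~\ref{col:determinant_metric_restriction}.
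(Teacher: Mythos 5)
Your proposal is correct and follows essentially the same route as the paper, whose proof simply invokes Corollary~\ref{col:inverse_metric_vielbeins_FR} together with Lemmata~\ref{lem:Christoffel_FR} and~\ref{lem:Riemannian_volume_form_FR}; you merely make explicit the Leibniz-rule distribution of graviton legs among the factors that the paper leaves implicit in ``follows directly''. The two points of care you flag (the leg absorbed by each $\boldsymbol{\Gamma}$ factor and the restriction of the momentum sum in $\mathfrak{h}_{m_4}^\prime$ to the legs attached to that factor) are exactly the right ones.
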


\begin{proof}
	This follows directly from \colref{col:inverse_metric_vielbeins_FR} with Lemmata~\ref{lem:Christoffel_FR}~and~\ref{lem:Riemannian_volume_form_FR}.
\end{proof}

\section{Conclusion} \label{sec:conclusion}

We have derived and presented the Feynman rules for (effective) Quantum General Relativity and the gravitational couplings to the Standard Model. The main results are \thmref{thm:grav-fr} stating the graviton vertex Feynman rules, \thmref{thm:grav-prop} stating the corresponding graviton propagator Feynman rule, \thmref{thm:ghost-fr} stating the graviton-ghost vertex Feynman rules and \thmref{thm:ghost-prop} stating the corresponding graviton-ghost propagator Feynman rule. Additionally, the graviton-matter vertex Feynman rules are stated in \thmref{thm:matter-fr} on the level of 10 generic matter-model Lagrange densities, as classified by \lemref{lem:matter-model-Lagrange-densities}. The complete graviton-matter Feynman rules can then be obtained by adding the corresponding matter contributions, as listed e.g.\ in \cite{Romao_Silva}. Finally, we display the three- and four-valent graviton and graviton-ghost vertex Feynman rules explicitly in \exref{exmp:FR}. In future work, we want to study the BRST double complex for (effective) Quantum General Relativity coupled to the Standard Model in \cite{Prinz_5} and the corresponding longitudinal and transversal structures in \cite{Prinz_6}. Furthermore, we study the appropriate setup for a generalization of Wigner's elementary particle classification to Linearized General Relativity in \cite{Prinz_7}. Moreover, we have studied the renormalization properties of gauge theories and gravity from a Hopf algebraic perspective in \cite{Prinz_2,Prinz_3}. The gravitational Ward identities will be checked in future work as well, with the aim to construct the corresponding cancellation identities. This than leads to the possibility of deriving the corresponding Corolla polynomial \cite{Kreimer_Yeats,Kreimer_Sars_vSuijlekom,Sars,Prinz_1,Kreimer_Corolla,Berghoff_Knispel}, which would relate gravitational amplitudes to the amplitudes of scalar \(\phi^3_4\)-theory.

\section*{Acknowledgments}
\addcontentsline{toc}{section}{Acknowledgments}

The author thanks Axel Kleinschmidt for a clarifying discussion on terminology, Jan Plefka for pointing out further references and Stavros Mougiakakos for useful comments on the first version of the preprint. This research is supported by the International Max Planck Research School for Mathematical and Physical Aspects of Gravitation, Cosmology and Quantum Field Theory.

\bibliography{References}{}

\begin{thebibliography}{10}
  \providebibliographyfont{name}{}%
  \providebibliographyfont{lastname}{}%
  \providebibliographyfont{title}{\emph}%
  \providebibliographyfont{jtitle}{\btxtitlefont}%
  \providebibliographyfont{etal}{\emph}%
  \providebibliographyfont{journal}{}%
  \providebibliographyfont{volume}{}%
  \providebibliographyfont{ISBN}{\MakeUppercase}%
  \providebibliographyfont{ISSN}{\MakeUppercase}%
  \providebibliographyfont{url}{\url}%
  \providebibliographyfont{numeral}{}%
  \expandafter\btxselectlanguage\expandafter {\btxfallbacklanguage}

\btxselectlanguage {english}
\bibitem {Rovelli}
\btxnamefont {\btxlastnamefont {{C. Rovelli}}}\btxauthorcolon\ \btxjtitlefont
  {\btxifchangecase {{N}otes for a brief history of quantum gravity}{{N}otes
  for a brief history of quantum gravity}}.
\newblock \btxjournalfont {Presented at the 9th Marcel Grossmann Meeting in
  Roma}, \btxprintmonthyear{.}{July}{2000}{long}.
\newblock arXiv:gr-qc/0006061v3.

\bibitem {Feynman_Hatfield_Morinigo_Wagner}
\btxnamefont {\btxlastnamefont {{R. P. Feynman}}}, \btxnamefont
  {\btxlastnamefont {{B. Hatfield}}}, \btxnamefont {\btxlastnamefont {{F. B.
  Morinigo}}}\btxandcomma {} \btxandlong {}\ \btxnamefont {\btxlastnamefont
  {{W. Wagner}}}\btxauthorcolon\ \btxtitlefont {{F}eynman {L}ectures on
  {G}ravitation}.
\newblock Frontiers in Physics Series. \btxpublisherfont {Avalon Publishing},
  2002\ifbtxprintISBN {, \mbox{\btxISBN~\btxISBNfont {9780813340388}}}.

\bibitem {DeWitt_I}
\btxnamefont {\btxlastnamefont {{B. S. DeWitt}}}\btxauthorcolon\ \btxjtitlefont
  {\btxifchangecase {{Q}uantum {T}heory of {G}ravity. {I}. {T}he {C}anonical
  {T}heory}{{Q}uantum {T}heory of {G}ravity. {I}. {T}he {C}anonical {T}heory}}.
\newblock \btxjournalfont {Phys. Rev.}, 160:1113--1148,
  \btxprintmonthyear{.}{Aug}{1967}{long}.

\bibitem {DeWitt_II}
\btxnamefont {\btxlastnamefont {{B. S. DeWitt}}}\btxauthorcolon\ \btxjtitlefont
  {\btxifchangecase {{Q}uantum {T}heory of {G}ravity. {II}. {T}he {M}anifestly
  {C}ovariant {T}heory}{{Q}uantum {T}heory of {G}ravity. {II}. {T}he
  {M}anifestly {C}ovariant {T}heory}}.
\newblock \btxjournalfont {Phys. Rev.}, 162:1195--1239,
  \btxprintmonthyear{.}{Oct}{1967}{long}.

\bibitem {DeWitt_III}
\btxnamefont {\btxlastnamefont {{B. S. DeWitt}}}\btxauthorcolon\ \btxjtitlefont
  {\btxifchangecase {{Q}uantum {T}heory of {G}ravity. {III}. {A}pplications of
  the {C}ovariant {T}heory}{{Q}uantum {T}heory of {G}ravity. {III}.
  {A}pplications of the {C}ovariant {T}heory}}.
\newblock \btxjournalfont {Phys. Rev.}, 162:1239--1256,
  \btxprintmonthyear{.}{Oct}{1967}{long}.

\bibitem {DeWitt_IV}
\btxnamefont {\btxlastnamefont {{B. S. DeWitt}}}\btxauthorcolon\ \btxjtitlefont
  {\btxifchangecase {{E}rrata: {Q}uantum {T}heory of {G}ravity. {I} -
  {III}.}{{E}rrata: {Q}uantum {T}heory of {G}ravity. {I} - {III}.}}
\newblock \btxjournalfont {Phys. Rev.}, 171:1834--1834,
  \btxprintmonthyear{.}{Jul}{1968}{long}.

\bibitem {Boulware_Deser_Nieuwenhuizen}
\btxnamefont {\btxlastnamefont {{D. Boulware}}}, \btxnamefont {\btxlastnamefont
  {{S. Deser}}}\btxandcomma {} \btxandlong {}\ \btxnamefont {\btxlastnamefont
  {{P. Van Nieuwenhuizen}}}\btxauthorcolon\ \btxjtitlefont {\btxifchangecase
  {{U}niqueness and nonrenormalizability of quantum gravitation}{{U}niqueness
  and nonrenormalizability of quantum gravitation}}.
\newblock \btxjournalfont {7th International Conference on Gravitation and
  Relativity}, 1974.

\bibitem {tHooft_QG}
\btxnamefont {\btxlastnamefont {{G. 't Hooft}}}\btxauthorcolon\ \btxtitlefont
  {{Q}uantum gravity}, \btxpageslong {}\ 92--113.
\newblock \btxpublisherfont {Springer Berlin Heidelberg}, Berlin, Heidelberg,
  1975\ifbtxprintISBN {, \mbox{\btxISBN~\btxISBNfont {978-3-540-37490-9}}}.
\newblock Eds.: {H. Rollnik} and {K. Dietz}.

\bibitem {Veltman}
\btxnamefont {\btxlastnamefont {{M. J. G. Veltman}}}\btxauthorcolon\
  \btxtitlefont {\btxifchangecase {{Q}uantum {T}heory of
  {G}ravitation}{{Q}uantum {T}heory of {G}ravitation}}, 1976.
\newblock R. Balian and J. Zinn-Justin, eds., LesHouches, Session XXVIII, 1975
  - Methodes en theories des champs/Methods in field theory; \textcopyright
  North-Holland Publishing Company.

\bibitem {Prinz_2}
\btxnamefont {\btxlastnamefont {{D. Prinz}}}\btxauthorcolon\ \btxjtitlefont
  {\btxifchangecase {{A}lgebraic {S}tructures in the {C}oupling of {G}ravity to
  {G}auge {T}heories}{{A}lgebraic {S}tructures in the {C}oupling of {G}ravity
  to {G}auge {T}heories}}.
\newblock \btxjournalfont {Annals Phys.}, 426:168395, 2021.
\newblock arXiv:1812.09919v3 [hep-th].

\bibitem {Sannan}
\btxnamefont {\btxlastnamefont {{S. Sannan}}}\btxauthorcolon\ \btxjtitlefont
  {\btxifchangecase {{G}ravity as the {L}imit of the {T}ype {II} {S}uperstring
  {T}heory}{{G}ravity as the {L}imit of the {T}ype {II} {S}uperstring
  {T}heory}}.
\newblock \btxjournalfont {Phys. Rev. D}, 34:1749, 1986.

\bibitem {Donoghue}
\btxnamefont {\btxlastnamefont {{J. F. Donoghue}}}\btxauthorcolon\
  \btxjtitlefont {\btxifchangecase {{G}eneral relativity as an effective field
  theory: {T}he leading quantum corrections}{{G}eneral relativity as an
  effective field theory: {T}he leading quantum corrections}}.
\newblock \btxjournalfont {Phys. Rev. D}, 50:3874--3888, 1994.
\newblock arXiv:gr-qc/9405057v1.

\bibitem {Choi_Shim_Song}
\btxnamefont {\btxlastnamefont {{S. Y. Choi}}}, \btxnamefont {\btxlastnamefont
  {{J. S. Shim}}}\btxandcomma {} \btxandlong {}\ \btxnamefont {\btxlastnamefont
  {{H. S. Song}}}\btxauthorcolon\ \btxjtitlefont {\btxifchangecase
  {{F}actorization and polarization in linearized gravity}{{F}actorization and
  polarization in linearized gravity}}.
\newblock \btxjournalfont {Phys.Rev. D}, 51:2751--2769, 1995.
\newblock arXiv:hep-th/9411092v1.

\bibitem {tHooft_PQG}
\btxnamefont {\btxlastnamefont {{G. 't Hooft}}}\btxauthorcolon\ \btxjtitlefont
  {\btxifchangecase {{P}erturbative {Q}uantum {G}ravity}{{P}erturbative
  {Q}uantum {G}ravity}}.
\newblock \btxjournalfont {Proceedings of the International School of
  Subnuclear Physics, Erice 2002, From Quarks and Gluons to Quantum Gravity,
  Subnuclear Series Vol. 40, ed. A. Zichichi, World Scientific}, \btxpageslong
  {}\ 249--269, 2002.
\newblock Available at
  \url{https://www.staff.science.uu.nl/~hooft101/gthpub.html}.

\bibitem {Hamber}
\btxnamefont {\btxlastnamefont {{H. W. Hamber}}}\btxauthorcolon\ \btxtitlefont
  {{Q}uantum {G}ravitation}.
\newblock \btxpublisherfont {Springer-Verlag Berlin Heidelberg},
  2009\ifbtxprintISBN {, \mbox{\btxISBN~\btxISBNfont {978-3-540-85292-6}}}.

\bibitem {Schuster}
\btxnamefont {\btxlastnamefont {{T. Schuster}}}\btxauthorcolon\ \btxtitlefont
  {\btxifchangecase {{L}ee--{W}ick {G}auge {T}heory and {E}ffective {Q}uantum
  {G}ravity}{{L}ee--{W}ick {G}auge {T}heory and {E}ffective {Q}uantum
  {G}ravity}}.
\newblock Diploma thesis, Humboldt Universty of Berlin,
  \btxprintmonthyear{.}{December}{2008}{long}.
\newblock Available at \url{https://qft.physik.hu-berlin.de}.

\bibitem {Rodigast_Schuster_1}
\btxnamefont {\btxlastnamefont {{A. Rodigast}}} \btxandlong {}\ \btxnamefont
  {\btxlastnamefont {{T. Schuster}}}\btxauthorcolon\ \btxjtitlefont
  {\btxifchangecase {{N}o {L}ee--{W}ick {F}ields out of {G}ravity}{{N}o
  {L}ee--{W}ick {F}ields out of {G}ravity}}.
\newblock \btxjournalfont {Phys. Rev. D}, 79:125017, 2009.
\newblock arXiv:0903.3851v2 [hep-ph].

\bibitem {Rodigast_Schuster_2}
\btxnamefont {\btxlastnamefont {{A. Rodigast}}} \btxandlong {}\ \btxnamefont
  {\btxlastnamefont {{T. Schuster}}}\btxauthorcolon\ \btxjtitlefont
  {\btxifchangecase {{G}ravitational {C}orrections to {Y}ukawa and phi**4
  {I}nteractions}{{G}ravitational {C}orrections to {Y}ukawa and phi**4
  {I}nteractions}}.
\newblock \btxjournalfont {Phys. Rev. Lett.}, 104:081301, 2010.
\newblock arXiv:0908.2422v3 [hep-th].

\bibitem {Romao_Silva}
\btxnamefont {\btxlastnamefont {{J. C. Romao, J. P. Silva}}}\btxauthorcolon\
  \btxjtitlefont {\btxifchangecase {{A} resource for signs and {F}eynman
  diagrams of the {S}tandard {M}odel}{{A} resource for signs and {F}eynman
  diagrams of the {S}tandard {M}odel}}.
\newblock \btxjournalfont {Int. J. Mod. Phys. A}, 27:1230025, 2012.
\newblock arXiv:1209.6213v2 [hep-ph].

\bibitem {Weinberg_1}
\btxnamefont {\btxlastnamefont {{S. Weinberg}}}\btxauthorcolon\ \btxjtitlefont
  {\btxifchangecase {{Feynman Rules for Any Spin}}{{Feynman Rules for Any
  Spin}}}.
\newblock \btxjournalfont {Phys. Rev.}, 133:B1318--B1332, 1964.

\bibitem {Weinberg_2}
\btxnamefont {\btxlastnamefont {{S. Weinberg}}}\btxauthorcolon\ \btxjtitlefont
  {\btxifchangecase {{Feynman Rules for Any Spin. II. Massless
  Particles}}{{Feynman Rules for Any Spin. II. Massless Particles}}}.
\newblock \btxjournalfont {Phys. Rev.}, 134:B882--B896, 1964.

\bibitem {Weinberg_3}
\btxnamefont {\btxlastnamefont {{S. Weinberg}}}\btxauthorcolon\ \btxjtitlefont
  {\btxifchangecase {{Feynman Rules for Any Spin. III}}{{Feynman Rules for Any
  Spin. III}}}.
\newblock \btxjournalfont {Phys. Rev.}, 181:1893--1899, 1969.

\bibitem {Jimenez}
\btxnamefont {\btxlastnamefont {{E. Jim\'{e}nez}}}\btxauthorcolon\
  \btxjtitlefont {\btxifchangecase {{$\mathcal{N}=1$ super Feynman rules for
  any superspin: Noncanonical SUSY}}{{$\mathcal{N}=1$ super Feynman rules for
  any superspin: Noncanonical SUSY}}}.
\newblock \btxjournalfont {Phys. Rev. D}, 92(8):085013, 2015.
\newblock arXiv:1410.6851v3 [hep-th].

\bibitem {Kreimer_Hopf_Algebra}
\btxnamefont {\btxlastnamefont {{D. Kreimer}}}\btxauthorcolon\ \btxjtitlefont
  {\btxifchangecase {{O}n the {H}opf algebra structure of perturbative quantum
  field theories}{{O}n the {H}opf algebra structure of perturbative quantum
  field theories}}.
\newblock \btxjournalfont {Adv. Theor. Math. Phys.}, 2:303--334, 1998.
\newblock arXiv:q-alg/9707029v4.

\bibitem {Connes_Kreimer_0}
\btxnamefont {\btxlastnamefont {{A. Connes}}} \btxandlong {}\ \btxnamefont
  {\btxlastnamefont {{D. Kreimer}}}\btxauthorcolon\ \btxjtitlefont
  {\btxifchangecase {{R}enormalization in quantum field theory and the
  {R}iemann-{H}ilbert problem}{{R}enormalization in quantum field theory and
  the {R}iemann-{H}ilbert problem}}.
\newblock \btxjournalfont {JHEP}, 09:024, 1999.
\newblock arXiv:hep-th/9909126v3.

\bibitem {Kreimer_Anatomy}
\btxnamefont {\btxlastnamefont {{D. Kreimer}}}\btxauthorcolon\ \btxjtitlefont
  {\btxifchangecase {{A}natomy of a gauge theory}{{A}natomy of a gauge
  theory}}.
\newblock \btxjournalfont {Annals Phys.}, 321:2757--2781, 2006.
\newblock arXiv:hep-th/0509135v3.

\bibitem {vSuijlekom_QED}
\btxnamefont {\btxlastnamefont {{W. D. van Suijlekom}}}\btxauthorcolon\
  \btxjtitlefont {\btxifchangecase {{T}he {H}opf algebra of {F}eynman graphs in
  {QED}}{{T}he {H}opf algebra of {F}eynman graphs in {QED}}}.
\newblock \btxjournalfont {Lett. Math. Phys.}, 77:265--281, 2006.
\newblock arXiv:hep-th/0602126v2.

\bibitem {vSuijlekom_QCD}
\btxnamefont {\btxlastnamefont {{W. D. van Suijlekom}}}\btxauthorcolon\
  \btxjtitlefont {\btxifchangecase {{R}enormalization of gauge fields: {A}
  {H}opf algebra approach}{{R}enormalization of gauge fields: {A} {H}opf
  algebra approach}}.
\newblock \btxjournalfont {Commun. Math. Phys.}, 276:773--798, 2007.
\newblock arXiv:hep-th/0610137v1.

\bibitem {vSuijlekom_BV}
\btxnamefont {\btxlastnamefont {{W. D. van Suijlekom}}}\btxauthorcolon\
  \btxjtitlefont {\btxifchangecase {{T}he structure of renormalization {H}opf
  algebras for gauge theories {I}: {R}epresenting {F}eynman graphs on
  {BV}-algebras}{{T}he structure of renormalization {H}opf algebras for gauge
  theories {I}: {R}epresenting {F}eynman graphs on {BV}-algebras}}.
\newblock \btxjournalfont {Commun. Math. Phys.}, 290:291--319, 2009.
\newblock arXiv:0807.0999v2 [math-ph].

\bibitem {Prinz_3}
\btxnamefont {\btxlastnamefont {{D. Prinz}}}\btxauthorcolon\ \btxjtitlefont
  {\btxifchangecase {{G}auge {S}ymmetries and {R}enormalization}{{G}auge
  {S}ymmetries and {R}enormalization}}.
\newblock \btxjournalfont {To appear in Math. Phys. Anal. Geom.}, 2021.
\newblock arXiv:2001.00104v2 [math-ph].

\bibitem {Kreimer_QG1}
\btxnamefont {\btxlastnamefont {{D. Kreimer}}}\btxauthorcolon\ \btxjtitlefont
  {\btxifchangecase {{A} remark on quantum gravity}{{A} remark on quantum
  gravity}}.
\newblock \btxjournalfont {Annals Phys.}, 323:49--60, 2008.
\newblock arXiv:0705.3897v1 [hep-th].

\bibitem {Kreimer_vSuijlekom}
\btxnamefont {\btxlastnamefont {{D. Kreimer}}} \btxandlong {}\ \btxnamefont
  {\btxlastnamefont {{W. D. van Suijlekom}}}\btxauthorcolon\ \btxjtitlefont
  {\btxifchangecase {{R}ecursive relations in the core {H}opf
  algebra}{{R}ecursive relations in the core {H}opf algebra}}.
\newblock \btxjournalfont {Nucl. Phys. B}, 820:682--693, 2009.
\newblock arXiv:0903.2849v1 [hep-th].

\bibitem {Goldberg}
\btxnamefont {\btxlastnamefont {{J. N. Goldberg}}}\btxauthorcolon\
  \btxjtitlefont {\btxifchangecase {{C}onservation {L}aws in {G}eneral
  {R}elativity}{{C}onservation {L}aws in {G}eneral {R}elativity}}.
\newblock \btxjournalfont {Phys. Rev.}, 111:315--320, 1958.

\bibitem {Capper_Leibbrandt_Ramon-Medrano}
\btxnamefont {\btxlastnamefont {{D. M. Capper}}}, \btxnamefont
  {\btxlastnamefont {{G. Leibbrandt}}}\btxandcomma {} \btxandlong {}\
  \btxnamefont {\btxlastnamefont {{M. Ram\'{o}n Medrano}}}\btxauthorcolon\
  \btxjtitlefont {\btxifchangecase {{C}alculation of the graviton selfenergy
  using dimensional regularization}{{C}alculation of the graviton selfenergy
  using dimensional regularization}}.
\newblock \btxjournalfont {Phys. Rev. D}, 8:4320--4331, 1973.

\bibitem {Capper_Medrano}
\btxnamefont {\btxlastnamefont {{D. M. Capper}}} \btxandlong {}\ \btxnamefont
  {\btxlastnamefont {{M. Ram\'{o}n Medrano}}}\btxauthorcolon\ \btxjtitlefont
  {\btxifchangecase {{G}ravitational {S}lavnov--{W}ard
  identities}{{G}ravitational {S}lavnov--{W}ard identities}}.
\newblock \btxjournalfont {Phys. Rev. D}, 9:1641--1647, 1974.

\bibitem {Capper_Namazie}
\btxnamefont {\btxlastnamefont {{D. M. Capper}}} \btxandlong {}\ \btxnamefont
  {\btxlastnamefont {{M. A. Namazie}}}\btxauthorcolon\ \btxjtitlefont
  {\btxifchangecase {{A} {G}eneral {G}auge {C}alculation of the {G}raviton
  {S}elfenergy}{{A} {G}eneral {G}auge {C}alculation of the {G}raviton
  {S}elfenergy}}.
\newblock \btxjournalfont {Nucl. Phys. B}, 142:535--547, 1978.

\bibitem {Kissler}
\btxnamefont {\btxlastnamefont {{H. Ki\ss{}ler}}}\btxauthorcolon\
  \btxjtitlefont {\btxifchangecase {Off-shell diagrammatics for quantum
  gravity}{Off-shell diagrammatics for quantum gravity}}.
\newblock \btxjournalfont {Phys. Lett. B}, 816:136219, 2021.
\newblock arXiv:2007.08894v2 [hep-th].

\bibitem {Prinz_5}
\btxnamefont {\btxlastnamefont {{D. Prinz}}}\btxauthorcolon\ \btxtitlefont
  {\btxifchangecase {{T}he {BRST} {D}ouble {C}omplex for the {C}oupling of
  {G}ravity to {G}auge {T}heories}{{T}he {BRST} {D}ouble {C}omplex for the
  {C}oupling of {G}ravity to {G}auge {T}heories}}.
\newblock In preparation.

\bibitem {Prinz_6}
\btxnamefont {\btxlastnamefont {{D. Prinz}}}\btxauthorcolon\ \btxtitlefont
  {\btxifchangecase {{T}ransversality in the {C}oupling of {G}ravity to {G}auge
  {T}heories}{{T}ransversality in the {C}oupling of {G}ravity to {G}auge
  {T}heories}}.
\newblock In preparation.

\bibitem {Kreimer_Yeats}
\btxnamefont {\btxlastnamefont {{D. Kreimer}}} \btxandlong {}\ \btxnamefont
  {\btxlastnamefont {{K. Yeats}}}\btxauthorcolon\ \btxjtitlefont
  {\btxifchangecase {{P}roperties of the corolla polynomial of a 3-regular
  graph}{{P}roperties of the corolla polynomial of a 3-regular graph}}.
\newblock \btxjournalfont {The electronic journal of combinatorics}, 20(1),
  \btxprintmonthyear{.}{July}{2012}{long}.
\newblock arXiv:1207.5460v1 [math.CO].

\bibitem {Kreimer_Sars_vSuijlekom}
\btxnamefont {\btxlastnamefont {{D. Kreimer}}}, \btxnamefont {\btxlastnamefont
  {{M. Sars}}}\btxandcomma {} \btxandlong {}\ \btxnamefont {\btxlastnamefont
  {{W. D. van Suijlekom}}}\btxauthorcolon\ \btxjtitlefont {\btxifchangecase
  {{Q}uantization of gauge fields, graph polynomials and graph
  cohomology}{{Q}uantization of gauge fields, graph polynomials and graph
  cohomology}}.
\newblock \btxjournalfont {Annals Phys.}, 336:180--222, 2013.
\newblock arXiv:1208.6477v4 [hep-th].

\bibitem {Kreimer_Corolla}
\btxnamefont {\btxlastnamefont {{D. Kreimer}}}\btxauthorcolon\ \btxjtitlefont
  {\btxifchangecase {{T}he corolla polynomial: a graph polynomial on
  half-edges}{{T}he corolla polynomial: a graph polynomial on half-edges}}.
\newblock \btxjournalfont {PoS}, LL2018:068, 2018.
\newblock arXiv:1807.02385v1 [hep-th].

\bibitem {Sars}
\btxnamefont {\btxlastnamefont {{M. Sars}}}\btxauthorcolon\ \btxtitlefont
  {{P}arametric {R}epresentation of {F}eynman {A}mplitudes in {G}auge
  {T}heories}.
\newblock \btxphdthesis {}, Humboldt-Universit\"at zu Berlin,
  \btxprintmonthyear{.}{January}{2015}{long}.
\newblock Available at
  \url{https://www2.mathematik.hu-berlin.de/~kreimer/publications/}.

\bibitem {Berghoff_Knispel}
\btxnamefont {\btxlastnamefont {{M. Berghoff, A. Knispel}}}\btxauthorcolon\
  \btxjtitlefont {\btxifchangecase {{C}omplexes of marked graphs in gauge
  theory}{{C}omplexes of marked graphs in gauge theory}}.
\newblock \btxjournalfont {Lett. Math. Phys.}, 110:2417--2433, 2020.
\newblock arXiv:1908.06640v2 [math-ph].

\bibitem {Prinz_1}
\btxnamefont {\btxlastnamefont {{D. Prinz}}}\btxauthorcolon\ \btxjtitlefont
  {\btxifchangecase {{T}he {C}orolla {P}olynomial for spontaneously broken
  {G}auge {T}heories}{{T}he {C}orolla {P}olynomial for spontaneously broken
  {G}auge {T}heories}}.
\newblock \btxjournalfont {Math. Phys. Anal. Geom.}, 19(3):18, 2016.
\newblock arXiv:1603.03321v3 [math-ph].

\bibitem {Kawai_Lewellen_Tye}
\btxnamefont {\btxlastnamefont {{H. Kawai}}}, \btxnamefont {\btxlastnamefont
  {{D. C. Lewellen}}}\btxandcomma {} \btxandlong {}\ \btxnamefont
  {\btxlastnamefont {{S. H. H. Tye}}}\btxauthorcolon\ \btxjtitlefont
  {\btxifchangecase {{A} {R}elation {B}etween {T}ree {A}mplitudes of {C}losed
  and {O}pen {S}trings}{{A} {R}elation {B}etween {T}ree {A}mplitudes of
  {C}losed and {O}pen {S}trings}}.
\newblock \btxjournalfont {Nucl. Phys. B}, 269:1--23, 1986.

\bibitem {Bern_Carrasco_Johansson_1}
\btxnamefont {\btxlastnamefont {{Z. Bern}}}, \btxnamefont {\btxlastnamefont
  {{J. J. M. Carrasco}}}\btxandcomma {} \btxandlong {}\ \btxnamefont
  {\btxlastnamefont {{H. Johansson}}}\btxauthorcolon\ \btxjtitlefont
  {\btxifchangecase {{N}ew {R}elations for {G}auge-{T}heory {A}mplitudes}{{N}ew
  {R}elations for {G}auge-{T}heory {A}mplitudes}}.
\newblock \btxjournalfont {Phys. Rev. D}, 78:085011, 2008.
\newblock arXiv:0805.3993v2 [hep-ph].

\bibitem {Bern_Carrasco_Johansson_2}
\btxnamefont {\btxlastnamefont {{Z. Bern}}}, \btxnamefont {\btxlastnamefont
  {{J. J. M. Carrasco}}}\btxandcomma {} \btxandlong {}\ \btxnamefont
  {\btxlastnamefont {{H. Johansson}}}\btxauthorcolon\ \btxjtitlefont
  {\btxifchangecase {{P}erturbative {Q}uantum {G}ravity as a {D}ouble {C}opy of
  {G}auge {T}heory}{{P}erturbative {Q}uantum {G}ravity as a {D}ouble {C}opy of
  {G}auge {T}heory}}.
\newblock \btxjournalfont {Phys. Rev. Lett.}, 105:061602, 2010.
\newblock arXiv:1004.0476v2 [hep-th].

\bibitem {Elvang_Huang}
\btxnamefont {\btxlastnamefont {{H. Elvang}}} \btxandlong {}\ \btxnamefont
  {\btxlastnamefont {{Y. Huang}}}\btxauthorcolon\ \btxtitlefont {{S}cattering
  {A}mplitudes in {G}auge {T}heory and {G}ravity}.
\newblock \btxpublisherfont {Cambridge University Press},
  \btxprintmonthyear{.}{April}{2015}{long}\ifbtxprintISBN {,
  \mbox{\btxISBN~\btxISBNfont {978-1-316-19142-2, 978-1-107-06925-1}}}.

\bibitem {Bern_et-al}
\btxnamefont {\btxlastnamefont {{Z. Bern}}}, \btxnamefont {\btxlastnamefont
  {{J. J. M. Carrasco}}}, \btxnamefont {\btxlastnamefont {{W.-M. Chen}}},
  \btxnamefont {\btxlastnamefont {{H. Johansson}}}, \btxnamefont
  {\btxlastnamefont {{R. Roiban}}}\btxandcomma {} \btxandlong {}\ \btxnamefont
  {\btxlastnamefont {{M. Zeng}}}\btxauthorcolon\ \btxjtitlefont
  {\btxifchangecase {{F}ive-loop four-point integrand of $n=8$ supergravity as
  a generalized double copy}{{F}ive-loop four-point integrand of $N=8$
  supergravity as a generalized double copy}}.
\newblock \btxjournalfont {Phys. Rev. D}, 96(12):126012, 2017.
\newblock arXiv:1708.06807v2 [hep-th].

\bibitem {Cheung_Remmen}
\btxnamefont {\btxlastnamefont {{C. Cheung}}} \btxandlong {}\ \btxnamefont
  {\btxlastnamefont {{G. N. Remmen}}}\btxauthorcolon\ \btxjtitlefont
  {\btxifchangecase {{H}idden {S}implicity of the {G}ravity {A}ction}{{H}idden
  {S}implicity of the {G}ravity {A}ction}}.
\newblock \btxjournalfont {JHEP}, 09:002, 2017.
\newblock arXiv:1705.00626v2 [hep-th].

\bibitem {Tomboulis}
\btxnamefont {\btxlastnamefont {{E. T. Tomboulis}}}\btxauthorcolon\
  \btxjtitlefont {\btxifchangecase {{O}n the
  \textquoteleft{}simple\textquoteright{} form of the gravitational action and
  the self-interacting graviton}{{O}n the
  \textquoteleft{}simple\textquoteright{} form of the gravitational action and
  the self-interacting graviton}}.
\newblock \btxjournalfont {JHEP}, 09:145, 2017.
\newblock arXiv:1708.03977v2 [hep-th].

\bibitem {Tsamis_Woodard}
\btxnamefont {\btxlastnamefont {{N. C. Tsamis}}} \btxandlong {}\ \btxnamefont
  {\btxlastnamefont {{R. P. Woodard}}}\btxauthorcolon\ \btxjtitlefont
  {\btxifchangecase {{T}he {S}tructure of perturbative quantum gravity on a
  {D}e {S}itter background}{{T}he {S}tructure of perturbative quantum gravity
  on a {D}e {S}itter background}}.
\newblock \btxjournalfont {Commun. Math. Phys.}, 162:217--248, 1994.

\bibitem {Abreu_et-al}
\btxnamefont {\btxlastnamefont {{S. Abreu}}}, \btxnamefont {\btxlastnamefont
  {{F. Febres Cordero}}}, \btxnamefont {\btxlastnamefont {{H. Ita}}},
  \btxnamefont {\btxlastnamefont {{M. Jaquier}}}, \btxnamefont
  {\btxlastnamefont {{B. Page}}}, \btxnamefont {\btxlastnamefont {{M. S.
  Ruf}}}\btxandcomma {} \btxandlong {}\ \btxnamefont {\btxlastnamefont {{V.
  Sotnikov}}}\btxauthorcolon\ \btxjtitlefont {\btxifchangecase {{T}wo-{L}oop
  {F}our-{G}raviton {S}cattering {A}mplitudes}{{T}wo-{L}oop {F}our-{G}raviton
  {S}cattering {A}mplitudes}}.
\newblock \btxjournalfont {Phys. Rev. Lett.}, 124(21):211601, 2020.
\newblock arXiv:2002.12374v1 [hep-th].

\bibitem {Blumlein_et-al}
\btxnamefont {\btxlastnamefont {{J. Bl\"{u}mlein}}}, \btxnamefont
  {\btxlastnamefont {{A. Maier}}}, \btxnamefont {\btxlastnamefont {{P.
  Marquard}}}\btxandcomma {} \btxandlong {}\ \btxnamefont {\btxlastnamefont
  {{G. Sch\"{a}fer}}}\btxauthorcolon\ \btxjtitlefont {\btxifchangecase {{T}he
  6th {P}ost-{N}ewtonian {P}otential {T}erms at $o(g_n^4)$}{{T}he 6th
  {P}ost-{N}ewtonian {P}otential {T}erms at $O(G_N^4)$}}.
\newblock \btxjournalfont {Phys. Lett. B}, 816:136260, 2021.
\newblock arXiv:2101.08630v1 [gr-qc].

\bibitem {Cutkosky}
\btxnamefont {\btxlastnamefont {{R. E. Cutkosky}}}\btxauthorcolon\
  \btxjtitlefont {\btxifchangecase {{S}ingularities and discontinuities of
  {F}eynman amplitudes}{{S}ingularities and discontinuities of {F}eynman
  amplitudes}}.
\newblock \btxjournalfont {J. Math. Phys.}, 1:429--433, 1960.

\bibitem {Bloch_Kreimer}
\btxnamefont {\btxlastnamefont {{S. Bloch}}} \btxandlong {}\ \btxnamefont
  {\btxlastnamefont {{D. Kreimer}}}\btxauthorcolon\ \btxtitlefont
  {\btxifchangecase {{C}utkosky {R}ules and {O}uter {S}pace}{{C}utkosky {R}ules
  and {O}uter {S}pace}}, 2015.
\newblock arXiv:1512.01705v1 [hep-th].

\bibitem {Kreimer_Cutkosky}
\btxnamefont {\btxlastnamefont {{D. Kreimer}}}\btxauthorcolon\ \btxtitlefont
  {\btxifchangecase {{O}uter {S}pace as a combinatorial backbone for {C}utkosky
  rules and coactions}{{O}uter {S}pace as a combinatorial backbone for
  {C}utkosky rules and coactions}}, 2020.
\newblock arXiv:2010.11781v2 [hep-th].

\bibitem {Prinz_7}
\btxnamefont {\btxlastnamefont {{D. Prinz}}}\btxauthorcolon\ \btxtitlefont
  {\btxifchangecase {{On a Generalization of Wigner's Classification to
  Linearized Gravity}}{{On a Generalization of Wigner's Classification to
  Linearized Gravity}}}.
\newblock In preparation.

\bibitem {Misner_Thorne_Wheeler}
\btxnamefont {\btxlastnamefont {{C. W. Misner}},~{K. S. Thorne}, {J. A.
  Wheeler}}\btxauthorcolon\ \btxtitlefont {Gravitation}.
\newblock \btxpublisherfont {Princeton University Press}, 2017\ifbtxprintISBN
  {, \mbox{\btxISBN~\btxISBNfont {9781400889099}}}.

\bibitem {Penrose_1}
\btxnamefont {\btxlastnamefont {{R. Penrose}}}\btxauthorcolon\ \btxjtitlefont
  {\btxifchangecase {{A}symptotic properties of fields and
  space-times}{{A}symptotic properties of fields and space-times}}.
\newblock \btxjournalfont {Phys. Rev. Lett.}, 10:66--68, 1963.

\bibitem {Penrose_2}
\btxnamefont {\btxlastnamefont {{R. Penrose}}}\btxauthorcolon\ \btxjtitlefont
  {\btxifchangecase {{C}onformal treatment of infinity}{{C}onformal treatment
  of infinity}}.
\newblock \btxjournalfont {Relativity, groups and topology: the 1963 Les
  Houches lectures}, 1964.
\newblock Republished in Gen. Relativ. Gravit. 43:901--922, 2011.

\bibitem {Penrose_3}
\btxnamefont {\btxlastnamefont {{R. Penrose}}}\btxauthorcolon\ \btxjtitlefont
  {\btxifchangecase {{Z}ero rest mass fields including gravitation:
  {A}symptotic behavior}{{Z}ero rest mass fields including gravitation:
  {A}symptotic behavior}}.
\newblock \btxjournalfont {Proc. Roy. Soc. Lond. A}, 284:159, 1965.

\bibitem {Penrose_4}
\btxnamefont {\btxlastnamefont {{R. Penrose}}}\btxauthorcolon\ \btxtitlefont
  {\btxifchangecase {{S}tructure of space-time}{{S}tructure of space-time}}.
\newblock \Btxinlong {}\ \btxtitlefont {Battelle Rencontres}, 1968.

\bibitem {Hawking_Ellis}
\btxnamefont {\btxlastnamefont {{S. W. Hawking}}} \btxandlong {}\ \btxnamefont
  {\btxlastnamefont {{G. F. R. Ellis}}}\btxauthorcolon\ \btxtitlefont {{T}he
  {L}arge {S}cale {S}tructure of {S}pace-{T}ime}.
\newblock Cambridge Monographs on Mathematical Physics. \btxpublisherfont
  {Cambridge University Press},
  \btxprintmonthyear{.}{February}{2011}{long}\ifbtxprintISBN {,
  \mbox{\btxISBN~\btxISBNfont {978-0-521-20016-5, 978-0-521-09906-6,
  978-0-511-82630-6, 978-0-521-09906-6}}}.

\bibitem {Geroch_1}
\btxnamefont {\btxlastnamefont {{R. P. Geroch}}}\btxauthorcolon\ \btxjtitlefont
  {\btxifchangecase {{S}pinor {S}tructure of {S}pace-{T}imes in {G}eneral
  {R}elativity. {I}}{{S}pinor {S}tructure of {S}pace-{T}imes in {G}eneral
  {R}elativity. {I}}}.
\newblock \btxjournalfont {J. Math. Phys.}, 9:1739--1744, 1968.

\bibitem {Geroch_2}
\btxnamefont {\btxlastnamefont {{R. P. Geroch}}}\btxauthorcolon\ \btxjtitlefont
  {\btxifchangecase {{S}pinor {S}tructure of {S}pace-{T}imes in {G}eneral
  {R}elativity. {II}}{{S}pinor {S}tructure of {S}pace-{T}imes in {G}eneral
  {R}elativity. {II}}}.
\newblock \btxjournalfont {J. Math. Phys.}, 11:343--348, 1970.

\bibitem {Wigner}
\btxnamefont {\btxlastnamefont {{E. P. Wigner}}}\btxauthorcolon\ \btxjtitlefont
  {\btxifchangecase {{On Unitary Representations of the Inhomogeneous Lorentz
  Group}}{{On Unitary Representations of the Inhomogeneous Lorentz Group}}}.
\newblock \btxjournalfont {Annals Math.}, 40:149--204, 1939.

\bibitem {Faddeev_Popov}
\btxnamefont {\btxlastnamefont {{L. D. Faddeev}}} \btxandlong {}\ \btxnamefont
  {\btxlastnamefont {{V. N. Popov}}}\btxauthorcolon\ \btxjtitlefont
  {\btxifchangecase {{F}eynman {D}iagrams for the {Y}ang--{M}ills
  {F}ield}{{F}eynman {D}iagrams for the {Y}ang--{M}ills {F}ield}}.
\newblock \btxjournalfont {Phys. Lett.}, 25B:29--30, 1967.

\bibitem {Schmeding}
\btxnamefont {\btxlastnamefont {{A. Schmeding}}}\btxauthorcolon\ \btxjtitlefont
  {\btxifchangecase {{T}he diffeomorphism group of a non-compact
  orbifold}{{T}he diffeomorphism group of a non-compact orbifold}}.
\newblock \btxjournalfont {Diss. Math.}, 507, 2015.
\newblock arXiv:1301.5551v4 [math.GR].

\bibitem {Baulieu_Thierry-Mieg}
\btxnamefont {\btxlastnamefont {{L. Baulieu}}} \btxandlong {}\ \btxnamefont
  {\btxlastnamefont {{J. Thierry-Mieg}}}\btxauthorcolon\ \btxjtitlefont
  {\btxifchangecase {{T}he {P}rinciple of {BRS} {S}ymmetry: {A}n {A}lternative
  {A}pproach to {Y}ang--{M}ills {T}heories}{{T}he {P}rinciple of {BRS}
  {S}ymmetry: {A}n {A}lternative {A}pproach to {Y}ang--{M}ills {T}heories}}.
\newblock \btxjournalfont {Nucl. Phys. B}, 197:477--508, 1982.

\end{thebibliography}
\bibliographystyle{babunsrt}

\end{document}